\title{Novel Complexity Results for Temporal Separators with Deadlines} %TODO Please add
\author{Riccardo Dondi}{Università degli Studi di Bergamo, Italy}{riccardo.dondi@unibg.it}{https://orcid.org/0000-0002-6124-2965}{}%TODO mandatory, please use full name; only 1 author per \author macro; first two parameters are mandatory, other parameters can be empty. Please provide at least the name of the affiliation and the country. The full address is optional. Use additional curly braces to indicate the correct name splitting when the last name consists of multiple name parts.
\author{Manuel Lafond}{Université de Sherbrooke, Canada}{manuel.lafond@usherbrooke.ca}{https://orcid.org/0000-0002-5305-7372}{}
\authorrunning{R. Dondi, M. Lafond} %TODO mandatory. First: Use abbreviated first/middle names. Second (only in severe cases): Use first author plus 'et al.'
\keywords{Temporal Graphs, Graph Algorithms, Graph Separators, Parameterized Complexity, Approximation Complexity} %TODO mandatory; please add comma-separated list of keywords
\tikzset{main node/.style={circle,fill=blue!20,draw,minimum size=0.7cm,inner sep=0pt},
            }
\tikzset{example node/.style={circle,fill=black,draw,minimum size=0.2cm,inner sep=0pt},
}
\newcommand{\DMC}{\textsc{Directed Multicut}}
\newcommand{\TempSepN}{\textsc{(s,z)-Temporal Separator}}
\newcommand{\TempSep}[1]{\textsc{(s,z,#1)-Temporal Separator}}
\newcommand{\StTempSep}[1]{\textsc{(s,z,#1)-Strict Temporal Separator}}
\newcommand{\TempCutN}{\textsc{(s,z)-Temporal Cut}}
\newcommand{\TempCut}[1]{\textsc{(s,z,#1)-Temporal Cut}}
\newcommand{\UniformHypergraph}{\textsc{Vertex Cover $k$-Uniform Hypergraph}}
\newtheorem{problem}{Problem}
\newcommand{\uout}[1]{{#1}^-}
\newcommand{\uin}[1]{{#1}^+}
\begin{document}

\maketitle

%TODO mandatory: add short abstract of the document
\begin{abstract}
We consider two variants, \TempSep{$\ell$}
and \TempCut{$\ell$}, respectively, of the vertex separator and the edge cut problem in temporal graphs. The
goal is to remove the minimum number of vertices (temporal edges, respectively) in order to delete all the temporal paths that have time travel at most $\ell$
between a source vertex $s$ and target vertex $z$. First, we solve an open problem in the literature
showing that \TempSep{$\ell$} is NP-complete even when
the underlying graph has pathwidth bounded by four.
We complement this result showing that \TempSep{$\ell$}
can be solved in polynomial time for graphs of pathwidth
bounded by three. Then we consider the approximability of
\TempSep{$\ell$} and we show that it cannot be approximated within factor 
$2^{\Omega(\log^{1-\varepsilon}|V|)}$ for any constant $\varepsilon> 0$, unless $NP \subseteq ZPP$ ($V$ is the
vertex set of the input temporal graph) and that
the strict version is approximable within factor $\ell-1$
(we show also that it is unliklely that this factor can be improved).
Then we consider the \TempCut{$\ell$} problem, we show
that it is APX-hard and we present a $2 \log_2(2\ell)$ approximation algorithm.
\end{abstract}

\section{Introduction}
\label{sec:intro}

A central problem in checking network robustness
is finding the minimum number of vertices or edges
that needs to be removed in order to disconnect the
network. In classic (static) graphs this is modeled 
by computing a minimum cut or a minimum vertex separator
between a source vertex $s$ and a target vertex $z$.
The static graph model however does not consider how
a network may change over time. In transportation
networks, for example, the time schedule is a fundamental aspect that has to be taken into account for analyzing several properties, like connectedness and robustness.
The need to incorporate time information of edge availability has led to the introduction of the
\emph{temporal graph} model~\cite{DBLP:journals/jcss/KempeKK02,holme2015modern,DBLP:journals/im/Michail16}, where edges are assigned
timestamps in a discrete set that define when each edge is available
(thus when a specific transport is available in a transportation network).

The robustness problems (minimum cut and
minimum vertex separator) have been considered also
for temporal graphs, where the paths to be removed
have to be \emph{temporal}, that is they have to satisfy 
a time constraint.
Given a source $s$ and a target $z$,
the \TempCutN{} problem asks for the minimum number of
temporal edges that have to be removed so that
$s$ and $z$ are disconnected, while 
\TempSepN{} asks for the minimum number of vertices that have to be removed so that
$s$ and $z$ are disconnected.
\TempCutN{} is known to be in solvable in polynomial time~\cite{DBLP:journals/networks/Berman96},
\TempSepN{} is known NP-hard~\cite{DBLP:journals/jcss/KempeKK02,DBLP:journals/jcss/ZschocheFMN20},
and its fixed-parameter tractability and approximability (and variants thereof) have been studied~\cite{DBLP:journals/tcs/FluschnikMNRZ20,DBLP:journals/jcss/MaackMNR23,DBLP:conf/isaac/HarutyunyanKP23,DBLP:journals/aamas/KlobasMMNZ23,DBLP:journals/jcss/IbiapinaS24}.

A variant of the \TempSepN{} problem that has been considered in~\cite{DBLP:conf/isaac/HarutyunyanKP23} to model
the robustness of transportation system, defines $s$
and $z$ separated if
the time needed to move from $s$ to $z$ is above
a time threshold. The motivation is that if the time
to move from $s$ to $z$ is increased considerably,
then this makes the possibility of moving from $s$ 
to $z$ unlikely.
This is modeled in~\cite{DBLP:conf/isaac/HarutyunyanKP23} 
by defining the \TempSep{$\ell$} problem, that asks for a smallest subset of vertices whose removal deletes each temporal path 
that takes at most time $\ell$ between $s$ and $z$.
Several results have been given for \TempSep{$\ell$} in~\cite{DBLP:conf/isaac/HarutyunyanKP23}.
The \TempSep{$\ell$} problem is NP-complete when $\ell=1$
and the temporal graph is defined over (at least) two timestamps.
On the other hand, the problem is solvable
in polynomial time when the underlying graph is a tree (after deleting $s$ and $z$)  and when it has branchwidth at most two. 
As for the approximation complexity, it is shown
to be not approximable within factor $\Omega(\ln |V|+ \ln \tau)$ assuming that $NP \not\subset DTIME(|V|^{\log \log |V|}$) ($V$ is the set of vertices of the temporal graph, $\tau$ the number of timestamps); moreover, 
a $\tau^2$-approximation algorithm is presented 
(also a $\tau$-approximation algorithm for \TempSepN{}).
Finally, it is shown that solving \TempSep{$\ell$} 
when the underlying graph has bounded pathwidth is at least as difficult as solving a problem called 
\textsc{Discrete Segment Covering}, whose complexity is unsolved~
\cite{DBLP:journals/siamdm/BergrenEGK22}.  
%, where lengths of all segments are bounded.
In Section~\ref{sec:BuondedWidth}, we solve the status of \TempSep{$\ell$} on graphs of bounded pathwidth: we show that the problem is NP-complete even with $\ell=1$ and when the underlying graph has pathwidth at most $4$, and we give a polynomial-time algorithm when the pathwidth is at most $3$.
Then  in Section~\ref{sec:approx}, 
we show that \TempSep{$\ell$} cannot be approximated within factor 
$2^{\Omega(\log^{1-\varepsilon}|V|)}$ for any constant $\varepsilon> 0$, unless $NP \subseteq ZPP$
and we present an $\ell-1$-approximation algorithm for the strict variant of \TempSep{$\ell$}. We show also that improving this factor is a challenging problem, since
the strict variant of \TempSep{$\ell$} is hard to approximate as 
\UniformHypergraph{}, where $k = \ell-1$.
%\ml{[maybe update here]}.
In Section~\ref{sec:cut}, we consider the \TempCut{$\ell$} problem and we show
that it is APX-hard, %\ml{
which contrasts with the polynomial-time solvability of the problem when deadlines are not considered, and we present a $2 \log_2(2\ell)$-approximation algorithm.
In Section~\ref{sec:prel} we give some definitions and we formally define the two problems we are interested into.
Some of the proofs are omitted due to space constraint.  

\section{Preliminaries}
\label{sec:prel}

For an integer $n$, we use the notation $[n] = \{1,2,\ldots,n\}$.  A \emph{temporal graph} $G=(V,E,\tau)$ is defined over a set $V$ of vertices and  a set $E \subseteq V \times V \times [\tau]$ of temporal edges, where $\tau \in \mathbb{N}$. 
An \emph{undirected} edge in a temporal graph is then a triple 
$(u,v,t)$, where $u,v \in V$  and
$t \in [\tau]$ is a timestamp\footnote{As in \cite{DBLP:conf/isaac/HarutyunyanKP23} we assume that 
$(u, v, t)$ is a temporal edge of $G$ if and only if 
$(v, u, t)$ is a temporal edge of $G$.}.
Note that we denote by $uv$ an edge in an undirected
(static) graph and $(u,v)$ an arc in a directed (static)
graph.

We say that $u, v$ are \emph{neighbors} in $G$ if they share some temporal edge, and denote the set of neighbors of $u$ by $N_G(u)$ (we drop the subscript if clear).
Given a set $V' \subseteq V$,
we denote by $G[V']$ the subgraph induced by $V'$, which contains vertex set $V'$ and every temporal edge whose two endpoints are in $V'$. 
We also denote by $G - V' = G[V\setminus V']$ the temporal graph
obtained by removing each vertex in $V'$.
%(including the temporal edges in $V'$).
Given a set $E' \subseteq E$,
we denote by $G - E'$ %$G[E\setminus E']$ 
the temporal graph
obtained by removing each temporal edge in $E'$.

An interval $[t_1, t_2]$, 
with $t_1, t_2 \in [\tau]$ and $t_1 \leq t_2$, is the sequence of consecutive timestamps between $t_1$ and $t_2$.
Given interval $[t_1, t_2]$, 
$G([t_1, t_2])$
is the temporal subgraph of $G$ that has temporal edges having timestamps
in $[t_1,t_2]$.

A \emph{temporal path} $P$ in a temporal graph $G$
is a sequence of temporal edges such that
\[P = [(u_1, v_1, t_1), (u_2, v_2, t_2), \dots, 
(u_h, v_h, t_h)],
\]
with $v_i = u_{i+1}$
and $t_i \leq t_{i+1}$, for each $i \in [h-1]$,
and $u_i \neq u_j$, $v_i \neq v_j$, for each $i,j \in [h]$ with $i \neq j$.
If $t_i < t_{i+1}$, with $i \in [h-1]$, then the temporal
path is \emph{strict}.
Given a temporal path $P = [(u_1, v_1, t_1), (u_2, v_2, t_2), \dots, 
(u_h, v_h, t_h)]$, the \emph{travelling time} $tt(P)$ of $P$
is defined as $tt(P) = t_h - t_1+1$.
The set of vertices of a temporal path $P$ is denoted
by $V(P)$.  
We may refer to a temporal path of travelling time \emph{at most} $\ell$ as an $\ell$-temporal path.
Two temporal paths are \emph{temporal edge disjoint}
if they don't share any temporal edge.

Given a temporal graph $G=(V,E, \tau)$,
we assume that there are two special 
vertices $s,z \in V$, which are
the source and the target vertex of $G$.
A set $V' \subseteq V$ is an $(s,z)$-temporal separator ($(s,z)$-strict temporal separator, respectively) in $G$ if there is no temporal path (strict
temporal path, respectively) in $G[V \setminus V']$ 
between $s$ and $z$.
Given $\ell \in [\tau]$, a set $V' \subseteq V$ is an $(s,z,\ell)$-temporal separator ($(s,z,\ell)$-strict temporal separator, respectively) in $G$ if
there is no temporal path (strict
temporal path, respectively) between $s$ and $z$ 
in $G[V \setminus V']$ of travelling 
time at most $\ell$.

A set $E' \subseteq E$ of temporal edges is an $(s,z)$-temporal cut ($(s,z)$-strict temporal cut, respectively) in $G$ if there is no temporal path (strict
temporal path, respectively) in $G - E'$  between $s$ and $z$.
Given $\ell \in [\tau]$ a set $E' \subseteq E$ is an $(s,z,\ell)$-temporal cut ($(s,z,\ell)$-strict temporal cut, respectively) in $G$ if there is no temporal path 
(strict temporal path, respectively) in $G- E'$ between $s$ and $z$ of travelling time at most $\ell$.

Now, we are ready to define
the combinatorial problems we are interested into.
The first, introduced in~\cite{DBLP:conf/isaac/HarutyunyanKP23}, is the following.

\begin{problem} (\TempSep{$\ell$}) \\ 
\textbf{Input:} a 
temporal graph $G=(V,E,\tau)$, two vertices $s,z \in V$, a positive integer $\ell \in [\tau]$.\\
\textbf{Output:} An (s,z,$\ell$)-temporal separator $V' \subseteq V$ in $G$ of minimum size.
\end{problem}

We denote by \StTempSep{$\ell$} the variant of 
\TempSep{$\ell$} where we look for an (s,z,$\ell$)-strict temporal separator.

We consider now the second problem we are interested into.

\begin{problem} (\TempCut{$\ell$}) \\ 
\textbf{Input:} a 
temporal graph $G=(V,E,\tau)$, two vertices $s,z \in V$, a positive integer $\ell \in [\tau]$.\\
\textbf{Output:} An (s,z,$\ell$)-temporal cut $E' \subseteq E$ in $G$ of minimum size.
\end{problem}

% We denote by \StTempCut{$\ell$} the variant of 
% \TempCut{$\ell$} where we look for an (s,z,$\ell$)-strict temporal cut.
Note that if $\ell = \tau$, then
\TempSep{$\ell$} (\TempCut{$\ell$}, respectively)
is exactly the \TempSepN{} problem (the \TempCutN{}) problem, respectively) where we look
for an $(s,z)$- temporal separator ($(s,z)$-temporal cut, respectively) in $G$.
%Analogously, we can consider the strict variant of the two problems.

% In what follows we represent with:

% \begin{itemize}

% \item $n$ the number of vertices

% \item $\tau$ the maximum timestamp

% \end{itemize}

\section{Graphs of bounded pathwidth}
\label{sec:BuondedWidth}

Let us first recall the notion of pathwidth.
A \emph{nice path decomposition} of a graph $G$ is a sequence of set of vertices $(X_1, \ldots, X_q)$, with each $X_i \subseteq V(G)$ referred to as a \emph{bag}, such that all of the following holds:
\begin{itemize}
    \item 
    $X_1 = X_q = \emptyset$;

    \item 
    for $i \in \{2, \ldots, q\}$, either $X_i = X_{i-1} \cup \{v\}$, in which case $X_i$ is called an \emph{introduce bag}; or $X_i = X_{i-1} \setminus \{v\}$, in which case $X_i$ is a \emph{forget bag}.

    \item 
    for any pair of vertices $u$ and $v$ that share an edge in $G$, some bag $X_i$ contains both $u$ and $v$ (regardless of the time of the edge).
    
    \item 
    for any vertex $v \in V(G)$, there are $i, j \in [q - 1]$ such that the set of bags that contain $v$ is precisely $X_i, X_{i+1}, \ldots, X_j$.
\end{itemize}
The \emph{width} of the nice path decomposition is $\max_{i \in [q-1]} (|X_i| - 1)$.  The \emph{pathwidth} of $G$ is the minimum width of a nice path decomposition of $G$.

We show that \TempSep{$\ell$} is NP-complete even on graphs of pathwidth $4$ and when $\ell = 1$.
Our reduction is from the \textsc{Multicolored Independent Set} problem, where we are given a (static) graph $G$ and a partition $\{V_1, \ldots, V_k\}$ of $V(G)$ into $k$ sets.  The $V_i$ sets are called \emph{color classes}.  The question is whether there is an independent set $I$ of $G$ such that $|I \cap V_i| = 1$ for each $i \in [k]$, that is, we must choose one vertex per color class to form an independent set.  
Note that this problem is typically used to prove W[1]-hardness, but it is also NP-hard~\cite{fellows2009parameterized}\footnote{Note 
that the problem considered in~\cite{fellows2009parameterized}, 
\textsc{Multicolored Clique}, is equivalent
to \textsc{Multicolored Independent Set} if we consider complementary relations (edges and no edges) between
vertices of different color classes of the input graph.}.

Let $G$ be an instance of \textsc{Multicolored Independent Set}, with vertex partition $V_1, \ldots, V_k$.  We assume, without loss of generality, that $|V_i| = n$ for every $i \in [k]$.

Let us construct from $G$ an instance of the \TempSep{$\ell$} problem consisting of temporal graph $H$, vertices $s$ and $z$ to separate, and $\ell = 1$.  
In the construction, we  assign to temporal edges of $H$ a time in the set $\tau = \{t_{green}, t_{red}\} \cup \{t_u : u \in V(G)\}$, with the understanding that all elements of $\tau$ correspond to a distinct integer.  
Since $\ell = 1$, one can view the temporal edges as being colored by an element of $\tau$ and the problem as having to destroy all monochromatic paths --- so the subscripts of the elements of $\tau$ can be seen as colors.  
So from $G$, first add to $V(H)$ the vertices $s, z$, and a new vertex $r$. 
We then have vertex-choosing gadgets and edge-verification gadgets (see Figure~\ref{fig:allstuff} for an illustration).
%The intuition is illustrated in Figure~\ref{fig:allstuff} on the left: first, there are vertex-choosing gadgets for $V_1, \ldots, V_k$; each chosen vertex $u$ creates a temporal path to $r$ at time $t_u$; then a gadget for each edge $e = \{u, v\}$ of $G$ ensures that two chosen vertices $u$ and $v$ do not lead to temporal paths 
%at times $t_u$ and $t_v$
%into the same edge gadget.  We describe those gadgets separately.

\begin{figure}[b]
    \centering
    \includegraphics[width=0.95\linewidth]{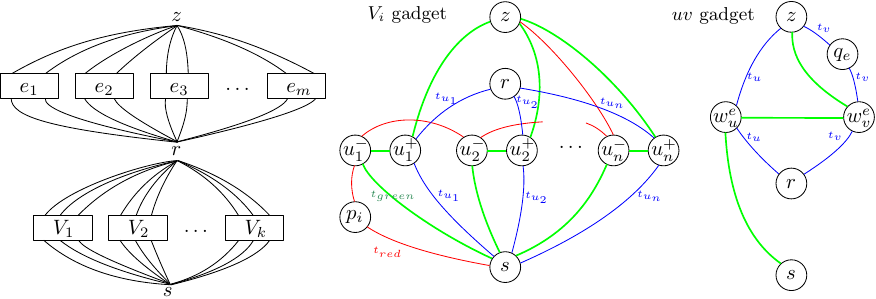}
    \caption{Top left: main structure of the reduction: there is a vertex-chossing phase leading to paths to $r$, followed by an edge vertification phase (note, not all edges are shown).  Middle: a $V_i$ gadget, with $t_{green}$ edges in green, $t_{red}$ edges in red, and $t_u$ edges in blue.  Right: an edge gadget for $e = uv$, with $t_{green}$ edges in green and $t_u, t_v$ edges in blue.}
    \label{fig:allstuff}
\end{figure}

\medskip

\noindent 
\textbf{Vertex-choosing gadgets.}
For each $i \in [k]$, build a gadget for $V_i$ as follows.
For every $u \in V_i$, add to $H$ two vertices $\uout{u}$ and $\uin{u}$. 
Then add a temporal path at time $t_{green}$ from $s$ to $z$ formed by the temporal edges $(s, \uout{u}, t_{green}), (\uout{u}, \uin{u}, t_{green}), (\uin{u}, z, t_{green})$.
 Also add a temporal path at time $t_u$ from $s$ to $r$ formed by the temporal edges $(s, \uin{u}, t_u), (\uin{u}, r, t_u)$.

To complete the gadget, add a new vertex $p_i$, then add temporal edges so that there is a path with a time of $t_{red}$ from $s$ to $z$ that first goes through $p_i$, then through all the $\uout{u}$ vertices exactly once. 
More precisely, denote $V_i = \{u_1, \ldots, u_n\}$, where the ordering is arbitrary.  Then, add the temporal edges $(s, p_i, t_{red}), (p_i, \uout{u}_1, t_{red})$ and $(\uout{u}_n, z, t_{red})$, and for each $h \in [n - 1]$, add the temporal edge 
$(\uout{u}_h, \uout{u}_{h+1}, t_{red})$.  

%At this stage for each $u \in V(G)$, the green temporal paths enforce the deletion of at least one of $\uout{u}$ or $\uin{u}$, and \emph{not} deleting $\uin{u}$ corresponds to choosing $u$ in our independent set.  Our allowed number of vertex deletions will enforce deleting exactly one of them, never both.  For the same reason, $p_i$ cannot be deleted, and the red path ensures that at least one $\uout{u}$ vertex is deleted, i.e., at least one $\uin{u}$ vertex is kept in the $i$-th gadget.  That kept vertex creates a temporal path with time $t_u$ that goes from $s$ to $r$.

\medskip

\noindent 
\textbf{Edge verification gadgets.}
Next, we construct gadgets to verify that chosen vertices correspond to an independent set of $G$.
For each edge $e = uv$ of $G$, where $u \in V_i$ and $v \in V_j$ such that $i < j$, add to $H$ two vertices 
$w_u^e$ and $w_v^e$.  
Add a temporal path at time $t_{green}$ from $s$ to $z$ formed by the temporal edges $(s, w_u^e, t_{green}), (w_u^e, w_v^e, t_{green})$, and $(w_v^e, z, t_{green})$, enforcing the deletion of at least one of the two vertices.
Next, add a temporal path at time $t_u$ formed by the temporal edges $(r, w_u^e, t_u), (w_u^e, z, t_u)$.  
Finally, add a new vertex $q_e$ and a temporal path at time $t_v$ going from $r$ to $z$ formed by the temporal edges $(r, w_v^e, t_v), (w_v^e, q_e, t_v), (q_e, z, t_v)$.

%Our goal is to show that $G$ has a multicolored independent set if and only if $H$ has an $(s, z, 1)$-temporal separator with $|V(G)| + |E(G)|$ vertices.
%The idea is that if $u$ and $v$ are both ``chosen'' in the $i$-th and $j$-th gadgets, then both $\uin{u}$ and $\uin{v}$ lead to a temporal path to $r$ with times $t_u$ and $t_v$.  In turn, $r$ leads to temporal paths to $w_u^e$ and $w_v^e$ and then $z$ with the same times, requiring the deletion of both vertices (or we could delete $q_e$ instead of $w_v^e$, which changes nothing). 
% We want to avoid this, enforcing a choice of vertices that form an independent set\footnote{Note that the reason why we need a multicolored version of independent set, and not the standard version (or just vertex cover), is that we must prevent a solution from deleting $r$.  Having $k$ of the $V_i$ gadgets enforces choosing one vertex per color class.}.  

% {[Riccardo's note: I've checked the undirected case, it looks fine.
% Since l=1, we can focus on each color.
% The green temporal paths should not create problems, they are vertex
% disjoint except for s and z.
% Similarly, for the red temporal paths, they are vertex disjoint.

% So we only have to consider blue temporal paths.
% When a blue temporal path reaches vertex r, it cannot go back (all the
% blue edges from $u_i^+$ vertices
% have distinct timestamps).
% So, after r, a blue temporal path must reach a vertex $w_v^a$ or $w_v^b$.
% But then it cannot reach a second time r (it is not
% a path) and thus must go toward z.
% ]}

%First, we show that $H$ has bounded pathwidth.
   
\begin{restatable}{lemma}{lemboundedpathwidth}
\label{lemboundedpathwidth}
    The graph $H$ constructed from $G$ as described above  has pathwidth at most $4$.
\end{restatable}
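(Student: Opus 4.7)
The plan is to exhibit an explicit nice path decomposition of $H$ of width $4$, i.e.\ in which every bag has at most $5$ vertices. The guiding observation is that the vertices $s, z, r$ touch nearly every other vertex of $H$; keeping all three of them in every bag throughout the decomposition uses three of the five available slots, and the remaining two slots turn out to suffice to sweep through each gadget one vertex at a time.

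A preliminary inspection of the static graph underlying $H$ confirms that no edge runs between distinct gadgets except via $s, z, r$. Inside a vertex-choosing gadget for $V_i$, the vertex $\uin{u}$ has neighbors only in $\{s, z, r, \uout{u}\}$, while $\uout{u}$ is adjacent to $s$, $\uin{u}$, and its neighbor(s) in the red chain $p_i, \uout{u}_1, \ldots, \uout{u}_n, z$. Inside an edge-verification gadget for $e = uv$, we have $N(w_u^e) \subseteq \{s, z, r, w_v^e\}$, $N(w_v^e) \subseteq \{z, r, w_u^e, q_e\}$, and $N(q_e) = \{w_v^e, z\}$.

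The decomposition is built in three phases. Phase~1 introduces $s, z, r$ in three successive introduce bags. Phase~2 processes the vertex-choosing gadgets one at a time. For a single $V_i = \{u_1, \ldots, u_n\}$, starting from the bag $\{s,z,r\}$, I would introduce $p_i$, then $\uout{u}_1$ (yielding the size-$5$ bag $\{s,z,r,p_i,\uout{u}_1\}$, covering $sp_i$, $s\uout{u}_1$, $p_i\uout{u}_1$), forget $p_i$, and then for each $h = 1, \ldots, n$ in turn: (a) introduce $\uin{u}_h$ to a size-$5$ bag (which covers all four edges incident to $\uin{u}_h$, since its only neighbors are $s, z, r, \uout{u}_h$), then immediately forget $\uin{u}_h$; (b) if $h < n$, introduce $\uout{u}_{h+1}$ to a size-$5$ bag (covering the chain edge $\uout{u}_h \uout{u}_{h+1}$ and the green edge $s\uout{u}_{h+1}$), then forget $\uout{u}_h$. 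At $h = n$, the edge $\uout{u}_n z$ is covered because $z$ is always present, after which $\uout{u}_n$ is forgotten and the bag returns to $\{s,z,r\}$. Phase~3 processes each edge-verification gadget for $e = uv$ similarly: introduce $w_u^e$ (size $4$, covering $sw_u^e$, $rw_u^e$, $w_u^ez$), then $w_v^e$ (size $5$, covering $w_u^e w_v^e$, $w_v^e z$, $r w_v^e$), forget $w_u^e$, introduce $q_e$ (size $5$, covering $w_v^e q_e$ and $q_e z$), then forget $q_e$ and $w_v^e$. Finally, $s, z, r$ are forgotten.

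The only delicate step is the handling of the red chain in Phase~2, which is where the bound $4$ is tight: because each chain vertex $\uout{u}_h$ also carries its own twin $\uin{u}_h$, we must interleave the chain advance with brief excursions that introduce and immediately forget $\uin{u}_h$ in order to avoid ever keeping three gadget vertices in the bag at once. The scheme above achieves this, and a direct verification of the four nice-path-decomposition axioms (all edges covered, per-vertex bags forming a contiguous interval, etc.) completes the argument that the width is at most $4$.
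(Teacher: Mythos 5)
Your proof is correct and takes essentially the same approach as the paper: keep $s, z, r$ in every bag, then sweep each vertex-choosing gadget along the red chain with brief excursions for the $\uin{u}_h$ twins, then handle each edge gadget with two extra slots. The only differences are cosmetic (you introduce $p_i$ before $\uout{u}_1$ rather than after, and you introduce $w_u^e, w_v^e$ one at a time where the paper lumps them), neither of which affects the argument.
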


The idea is that we can get a path decomposition by putting $s, z, r$ in every bag.  Then, the $V_i$ and the $uv$ gadgets are easy to construct using two extra vertices per bag.  We next show that NP-hardness holds.

\begin{restatable}{lemma}{lempwhard}
%\begin{lemma}
\label{lem:pwhard}
    The graph $G$ has a multicolored independent set if and only if $H$ has an $(s, z, 1)$-temporal separator of size $|V(G)| + |E(G)|$.
%\end{lemma}
\end{restatable}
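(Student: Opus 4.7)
The plan is to exploit the fact that $\ell = 1$ forces every $(s,z)$-temporal path to be monochromatic in time (all edges share the same timestamp), so the problem becomes one of destroying each ``color'' of $s$-$z$ paths separately. I will count how many deletions each monochromatic family forces and show that $|V(G)|+|E(G)|$ is both necessary and sufficient exactly when the multicolored independent set exists.

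For the forward direction, given a multicolored independent set $I$ with $u_i^* \in I \cap V_i$, I construct $S$ as follows: for each $x \in V(G)$, place $\uout{x}$ in $S$ if $x \in I$, otherwise place $\uin{x}$; for each edge $e = uv$ with $u \in V_i$, $v \in V_j$, $i<j$, place $w_u^e$ if $u \in I$, else $w_v^e$ if $v \in I$, else $w_u^e$ arbitrarily. Independence of $I$ ensures at most one of $u,v$ lies in $I$, so $|S| = |V(G)|+|E(G)|$. I then check each monochromatic $s$-$z$ path: the $t_{green}$ paths through $\uout{x}\uin{x}$ or through $w_u^e w_v^e$ die because the chosen endpoint sits in $S$; the $t_{red}$ path in gadget $V_i$ dies since $\uout{u_i^*} \in S$; and every $t_x$ path of the form $s \to \uin{x}\to r\to w_x^e \to (q_e\to) z$ dies either at $\uin{x}$ when $x\notin I$, or at $w_x^e$ when $x \in I$, where independence of $I$ guarantees the opposing endpoint of $e$ is not in $I$ so my rule does place $w_x^e$ in $S$.

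For the backward direction, suppose $S$ is an $(s,z,1)$-temporal separator with $|S| = |V(G)|+|E(G)|$. For each $x$ the $t_{green}$ path $s\to \uout{x}\to\uin{x}\to z$ forces one of $\uout{x}, \uin{x}$ into $S$, and for each edge $e=uv$ the $t_{green}$ path $s\to w_u^e\to w_v^e\to z$ forces one of $w_u^e, w_v^e$ into $S$; since the internal vertices of these paths are pairwise disjoint, this already matches the budget. Equality therefore forces $S$ to contain exactly one of $\{\uout{x},\uin{x}\}$ per vertex $x$, exactly one of $\{w_u^e,w_v^e\}$ per edge $e$, and nothing else --- in particular $r,p_i,q_e \notin S$. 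Define $X = \{x \in V(G) : \uout{x} \in S\}$. The $t_{red}$ path in gadget $V_i$ only meets $S$ through the $\uout{u}$ with $u\in V_i$ (since $p_i\notin S$), so $X\cap V_i \neq \emptyset$ for every $i$.

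The final step is to show $X$ is independent. If some edge $e=uv$ satisfies $u,v \in X$ (say $u\in V_i$, $v\in V_j$, $i<j$), then $\uin{u},\uin{v},r,q_e$ all lie outside $S$, while $S$ contains exactly one of $w_u^e,w_v^e$. In either sub-case, one of the monochromatic paths $s \to \uin{u}\to r \to w_u^e \to z$ at time $t_u$ or $s \to \uin{v}\to r \to w_v^e \to q_e \to z$ at time $t_v$ survives, contradicting that $S$ is a separator. Hence $X$ is independent, and selecting one vertex from each nonempty $X\cap V_i$ produces a multicolored independent set. The chief bookkeeping obstacle is the asymmetry between the ``$u$-role'' and the ``$v$-role'' of an edge gadget (the latter routing through the extra vertex $q_e$), which must be handled carefully so that the single edge-gadget deletion available under the budget really does destroy both the $t_u$ and the $t_v$ path whenever $X$ contains both endpoints.
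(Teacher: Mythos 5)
Your proof is correct and follows essentially the same approach as the paper's own argument. Both directions use the identical bookkeeping: the $t_{green}$ paths force exactly one deletion per $\{\uout{x},\uin{x}\}$ pair and per $\{w_u^e,w_v^e\}$ pair (exhausting the budget and forcing $r$, $p_i$, $q_e$ to survive), the $t_{red}$ paths force some $\uout{u}$ deleted in each $V_i$ gadget, and the $t_u/t_v$ paths (of the form $s\to u^+\to r\to w_u^e\to z$ or $s\to v^+\to r\to w_v^e\to q_e\to z$) enforce independence of the kept ``plus'' vertices. Your edge-gadget deletion rule is phrased symmetrically (``delete $w_u^e$ if $u\in I$, else $w_v^e$ if $v\in I$, else arbitrary'') whereas the paper branches on $v\notin I$, but the two rules select the same deletion in every case, and the verification that the surviving $t_x$-monochromatic paths are all blocked is the same calculation.
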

\noindent 
\emph{Proof sketch.}
    Let $I = \{u_1, \ldots, u_k\}$ be a multicolored independent set of $G$, with each $u_i \in V_i$.  In $H$, in the $V_i$ gadget, delete every $u^+$ vertex, \emph{except} $u_i^+$, and delete $u_i^-$ instead.  This removes all the $t_{green}$ and the $t_{red}$ temporal paths, but $s$ can reach $r$ with a temporal path at time $t_{u_i}$. 
 Then for each edge $e = uv$ of $G$, in the $uv$ gadget, delete $w_u^e$ if $s$ reaches $r$ with time $t_u$, and delete $w_v^e$ otherwise.  This removes the $t_{green}$ temporal path, and since we delete at least one of $u^+$ or $v^+$, there remains no temporal path at time $t_u$ or $t_v$ going through the gadget.  

 Conversely, suppose there is an $(s, z, 1)$-temporal separator in $H$ of size $|V(G)| + |E(G)|$.  The $t_{green}$ temporal paths enforce deleting, for each $u \in V(G)$, one of $u^-$ or $u^+$, and also for each $e = uv \in E(G)$ one of $w_u^e$ or $w_v^e$.  There is no room for other deletions.  Because of the $t_{red}$ temporal path in the $V_i$ gadget, some $u^-$ is deleted and some $u^+$ is kept.  Also, we cannot keep $u^+$ and $v^+$ from different $V_i, V_j$ gadgets if $uv \in E(G)$, as otherwise there will be a $t_u$ or $t_v$ temporal path going through the $uv$ gadget.  Hence, the kept $u^+$ vertices correspond to a multicolored independent set.
%Since the reduction can obviously be carried out in polynomial time, and since a solution to the \TempSep{$\ell$} is easy to verify in polynomial time, the problem is NP-complete.
\begin{theorem}
    The \TempSep{$\ell$} problem is NP-complete, even with $\ell = 1$ and on temporal graphs of pathwidth at most $4$, %\ml{
    and even if each edge is present in only one timestamp.
    %}.  
    %The problem is therefore para-NP-hard in pathwidth.
\end{theorem}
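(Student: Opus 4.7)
The plan is to combine Lemmas~\ref{lemboundedpathwidth} and~\ref{lem:pwhard} with a short NP-membership argument, and then verify the final refinement about timestamps by inspecting the gadgets. For membership in NP, I would observe that given a candidate set $V' \subseteq V(H)$, checking that $V'$ is an $(s,z,\ell)$-temporal separator is polynomial: in the $\ell = 1$ case it reduces, for each timestamp $t$, to a static $s$-$z$ reachability test in $H - V'$ restricted to edges at time $t$.

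For NP-hardness, I would invoke the construction described above, which sends an instance $G$ of \textsc{Multicolored Independent Set} (NP-hard by~\cite{fellows2009parameterized}) to a temporal graph $H$ with distinguished vertices $s,z$ and deadline $\ell=1$ in polynomial time. Lemma~\ref{lem:pwhard} provides the exact correspondence at budget $|V(G)| + |E(G)|$, and Lemma~\ref{lemboundedpathwidth} immediately yields the pathwidth bound of $4$ on $H$. Together these already give everything except the single-timestamp claim.

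For that last refinement, I would argue by direct case analysis on the gadgets. Inside a $V_i$-gadget the underlying edges used at $t_{green}$ (namely $s\uout{u}$, $\uout{u}\uin{u}$, $\uin{u}z$), at $t_u$ (namely $s\uin{u}$, $\uin{u}r$), and at $t_{red}$ (namely $sp_i$, $p_i\uout{u}_1$, $\uout{u}_n z$, and the consecutive pairs $\uout{u}_h\uout{u}_{h+1}$) are pairwise distinct as static edges; the same pattern holds within an edge-verification gadget for $e=uv$, whose $t_{green}$, $t_u$, and $t_v$ paths use disjoint underlying edges. Across gadgets the only shared vertices are $s$, $z$, and $r$, while every edge of $H$ has at least one endpoint in the private set $\{\uout{u}, \uin{u}, p_i, w_u^e, w_v^e, q_e\}$ of its own gadget; since these private vertices are distinct across gadgets, no underlying edge of $H$ is ever assigned two different timestamps.

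I do not expect any real obstacle here. The two preceding lemmas already bear the structural and correctness content of the theorem, and the only additional work is the above routine inspection confirming that the reduction can be stated in the sharper model where each static edge has a unique timestamp.
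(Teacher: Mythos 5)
Your proposal is correct and follows the same route as the paper: the theorem is an immediate consequence of Lemma~\ref{lemboundedpathwidth} (pathwidth bound) and Lemma~\ref{lem:pwhard} (correctness of the reduction from \textsc{Multicolored Independent Set}), together with easy NP-membership. Your verification that every static edge of $H$ receives a unique timestamp is a correct inspection of the gadgets that the paper leaves implicit.
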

%\ml{
We mention that it should be possible to modify the proof to prove the same hardness for the \emph{strict} variant of the problem, as it suffices to replace each label $t_{green}, t_{red}, t_u$ with time intervals that are far enough from each other.  We leave the details for a future version.
%}

% \ml{[I think the strict variant is also NP-hard under the same conditions: we can make the red/green/blue path each use their own interval of time, such that the monchromatic paths are equivalent to strict temporal paths that use their own interval. 
% Is it worth adding a Corollary or something?
% ]}
% \rd{[Yes, it's interesting. I don't know if we have space enough.
% Another thing we could stress is that we don't have multiedges
% (or edge edge is defined in one timestamp).
% ]}

\subsection*{Graphs of pathwidth $3$}

Here we show that \TempSep{$\ell$} can be solved in polynomial time on graphs of pathwidth $3$, for any $\ell$, which shows that the above hardness is tight.  Note that now, we allow temporal edges to have multiple activation times, and we allow them to be directed or not.  
Let $G$ be a temporal graph.  We first apply two reduction rules, which can easily seen to be safe.

\medskip 

\noindent 
\textbf{Rule 1.}  If $G$ has a vertex $v $ such that 
$[(s,v,t) (v,z,t')$ is an $\ell$-temporal path, then delete $v$.

\medskip 

\noindent 
\textbf{Rule 2.} If $G - \{s, z\}$ has multiple connected components $C_1, C_2, \ldots, C_p$, then solve each subgraph $G[C_1 \cup \{s, z\}], \ldots, G[C_p \cup \{s, z\}]$ separately.

\medskip 

We say that $G$ is \emph{clean} if none of the above rules is applicable to $G$.
Note that for $n = |V(G)|$ and $m = |E(G)|$, one can determine in time $O(n + m)$ whether one of the rules applies, and each rule can be applied at most $n$ times, and so a graph can be made clean in time $O(n^2 + nm)$.

We now turn to (clean) graphs of pathwidth at most $3$.  The idea is to first check whether there is an $(s, z)$-separator of size at most $3$  (so, a separator that ignores edge times).  If there is one, then there is an $(s, z, \ell)$-temporal separator of size at most $3$.  In this case we can compute 
%brute-force 
a solution by trying every combination of at most three vertices.  If there is no such separator, we can rely on the structural lemma below, which is illustrated in Figure~\ref{fig:pw3}.
%\ml{
First, we can show that there is a sequence
of bags that contain both $s$ and $z$; this allows
to find $B$, consisting 
of the vertices introduced and forgotten within this sequence.  This $B$ induces a caterpillar\footnote{Recall that a \emph{caterpillar} is a tree in which there is a \emph{main path} $w_1 - \ldots - w_k$, and every vertex not on that path is a leaf adjacent to a vertex of the path.
}, as removing $s$ and $z$ from these bags yields a subgraph of pathwidth at most $2$.
The vertices other than $s$ or $z$ in bags that precede this sequence form $A$, and the vertex $s$ or $z$ introduced last has at most one neighbor in $A$ (e.g., in Figure~\ref{fig:pw3}, $z$ has only $y$ as a neighbor in $A$).  Analogously, $C$ contains vertices of bags that occur after the sequence.
%}

\begin{figure}[h]
    \centering
    \includegraphics[width=0.6\linewidth]{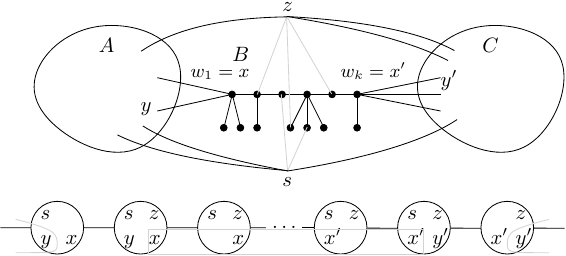}
    \caption{An illustration of the structure described by Lemma~\ref{lem:pwstructure}. 
 The top part shows the temporal graph, the bottom part shows the bags that contain both $s$ and $z$ (the left and right areas represent $A$ and $C$, and the middle box represents $B$).}
    \label{fig:pw3}
\end{figure}

\begin{restatable}{lemma}{pwstrcuture}
\label{lem:pwstructure}
    Suppose that $G$ is clean, has pathwidth $3$, and has no $(s, z)$-separator of size $3$ or less.
    Then $V(G) \setminus \{s, z\}$ can be partitioned into three non-empty sets $A, B, C$ such that all of the following holds: 
	\begin{itemize}
		\item 
		either $s$ has at most one neighbor in $A$ (resp. $C$), or $z$ has at most one neighbor in $A$ (resp. $C$).
		
		\item 
		the subgraph induced by $B$ is a caterpillar with main path $w_1 - w_2 - \ldots - w_k$.  Moreover, $w_1$ is the only vertex of $B \cup C$ whose neighborhood can intersect with $A$, and $w_k$ is the only vertex of $B \cup A$ whose neighborhood can intersect with $C$.
		 
	\end{itemize}
\end{restatable}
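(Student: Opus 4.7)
The plan is to work with a nice path decomposition $(X_1, \ldots, X_q)$ of $G$ of width at most $3$, so every bag has size at most $4$. First, I would show that some bag must contain both $s$ and $z$. Suppose otherwise, and let $j_s$ be the last bag containing $s$, so $z \notin X_{j_s}$. The nice-decomposition property forces $X_{j_s+1} = X_{j_s} \setminus \{s\}$, whence $|X_{j_s+1}| \le 3$ and $X_{j_s+1}$ contains neither $s$ nor $z$. Since every $s$--$z$ path in $G$ must cross $X_{j_s+1}$, this would give an $(s,z)$-separator of size at most $3$, contradicting the hypothesis. Let $[a,b]$ denote the maximal contiguous range of bags containing both $s$ and $z$; each such bag has at most two vertices beyond $s$ and $z$.

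Define $B$ as the set of vertices (other than $s,z$) appearing in some bag $X_j$ with $j \in [a,b]$, let $A$ be the set of vertices appearing only in bags $X_j$ with $j<a$, and define $C$ symmetrically for $j > b$. The contiguity of bag ranges yields immediately that $A, B, C$ partition $V(G) \setminus \{s,z\}$, that there are no edges between $A$ and $C$, and that edges from $A$ (respectively $C$) into $B$ can only go through vertices of $X_a \cap B$ (respectively $X_b \cap B$), each of size at most $2$. Setting $Y_j = X_j \setminus \{s,z\}$ for $j \in [a,b]$ gives a path decomposition of $G[B]$ of width at most $1$, so $G[B]$ is a caterpillar forest.

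The remaining work is to refine this into a single caterpillar $w_1 - w_2 - \cdots - w_k$ with $w_1 \in X_a \cap B$ and $w_k \in X_b \cap B$ as the only vertices of $B$ that may have neighbors in $A$ and $C$, respectively, and to verify non-emptiness of $A, B, C$. The connectedness claim uses Rule~2: every component of $G[B]$ must reach $A \cup C$ in $G - \{s,z\}$, hence must contain a vertex of $X_a \cap B$ or $X_b \cap B$; since both sets have size at most $2$, at most two parallel ``tracks'' exist in the width-$1$ decomposition, and if two exist one can be absorbed into $A$ or $C$ by re-indexing the partition. The ``at most one neighbor'' condition on $s$ and $z$ is then immediate: whichever of $s, z$ is introduced into $X_a$ at step $a$ shares no bag with any $A$-vertex and therefore has zero neighbors in $A$, and symmetrically one of $s, z$ is forgotten at step $b+1$ and has zero neighbors in $C$.

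I expect the main obstacle to be exactly this refinement step: arguing that a single caterpillar suffices, that among the (at most two) vertices of $X_a \cap B$ exactly one carries all edges into $A$ and serves as $w_1$ (symmetrically for $w_k$), and that the re-indexing does not empty $A$ or $C$. This requires a careful case analysis of the introduce/forget transitions near the bags $X_a, X_b$, together with repeated use of the no-small-separator hypothesis and Rule~1/Rule~2 cleanness to rule out pathological configurations where two straddlers of $X_a \cap B$ are both adjacent to $A$, or where one of the intended parts would be forced to be empty.
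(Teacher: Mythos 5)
Your plan correctly reduces the problem to the interval $[a,b]$ of bags containing both $s$ and $z$, and correctly observes that the middle bags restricted to $V(G)\setminus\{s,z\}$ give a width-$1$ path decomposition, hence a caterpillar forest. The opening step (some bag contains both $s$ and $z$) matches the paper's argument. But you explicitly defer the substance of the lemma to a ``refinement step'' that you do not carry out, and that step is exactly where the paper's proof does its real work.

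Concretely, the gaps you flag but do not close are the ones that need a specific idea you did not supply. The paper does not take an arbitrary nice path decomposition: it first argues that among all nice path decompositions of width $3$, one can be chosen so that the \emph{first} bag $X_c$ containing both $s$ and $z$ has size exactly $4$, and, subject to that, so does the \emph{last} such bag $X_d$. This is proved by a local modification (if $|X_c|\le 3$, take the first introduce bag $X_e$ after $X_c$ and move that introduction earlier, strictly increasing $|X_c|$ without exceeding width $3$; the no-small-separator hypothesis rules out the degenerate case where no later introduce bag exists). With $|X_c|=|X_d|=4$ in hand, the paper sets $X_c=\{s,z,x,y\}$, observes $X_{c+1}$ must be a forget bag (forgetting, say, $y$), and defines $A=(X_1\cup\dots\cup X_c)\setminus\{s,z,x\}$ and $w_1=x$ (symmetrically on the right). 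This single choice simultaneously yields $A\neq\emptyset$ (since $y\in A$), the ``at most one neighbor of $s$ or $z$ in $A$'' condition (that neighbor can only be $y$), and the fact that $w_1$ is the unique vertex outside $A\cup\{s,z\}$ with $A$-neighbors. In your framework, $y$ is put in $B$, both $x$ and $y$ can have $A$-neighbors, $A$ can be empty, and you propose ``re-indexing'' to fix this, which is not justified and is essentially the statement to be proved. You also do not establish $a<b$ (the paper's $d>c$), which the paper derives from the no-small-separator hypothesis and which it uses to get the left and right ends $w_1$, $w_k$ into place. Finally, the single-caterpillar (as opposed to caterpillar forest) conclusion does follow from Rule~2 as you say, but you only assert it; the paper spells out that a disconnected $G[B]$ would contradict cleanness.

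In short, the high-level skeleton agrees with the paper, but the lemma's content lives in the refinement step you leave open, and the paper closes it with a decomposition-normalization argument (maximize $|X_c|$, then $|X_d|$) that your proposal lacks.
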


One can infer from this structure that ``most'' deletions occur on the caterpillar path.

\begin{restatable}{lemma}{pwdels}
%\begin{lemma}
\label{lem:pwdels}
    Suppose that $G$ is clean and that $V(G) \setminus \{s, z\}$ can be partitioned into the sets $A, B, C$ as in Lemma~\ref{lem:pwstructure}.
    Then there exists a minimum $(s, z, \ell)$-temporal separator in which at most one vertex of $A$ is deleted, at most one vertex of $C$ is deleted, and all other deleted vertices are on the main path $w_1 - \ldots - w_k$ of the $B$ caterpillar.
%\end{lemma}
\end{restatable}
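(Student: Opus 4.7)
The plan is to begin with any minimum $(s,z,\ell)$-temporal separator $S$ and modify it in three non-size-increasing steps to obtain a minimum separator $S'$ meeting the three requirements.

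First I would reduce the intersection with $A$ (and, symmetrically, with $C$). By Lemma~\ref{lem:pwstructure}, at least one of $s, z$ has at most one neighbour in $A$; assume without loss of generality it is $s$, and call this neighbour $a_s$ if it exists. The structural key is that for any $\ell$-temporal $s$--$z$ path $P$ using a vertex of $A$, the predecessor on $P$ of the first $A$-vertex $a_{\text{first}}$ lies in $\{s\} \cup B \cup C$, and among these only $s$ and $w_1$ have any neighbour in $A$. Hence $P$ either begins with $s \to a_s$ (and so uses $a_s$) or visits $w_1$. Letting $T_A := \{a_s, w_1\} \cap V(G)$, we have $|T_A| \leq 2$, $|T_A \cap A| \leq 1$, and $T_A$ intersects every $\ell$-temporal path through $A$. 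Setting $\widetilde S := (S \setminus (S \cap A)) \cup T_A$, any path avoiding $A$ remains blocked by $S \setminus (S \cap A) \subseteq \widetilde S$ and any path through $A$ is blocked by $T_A$, so $\widetilde S$ is a separator. When $|S \cap A| \geq 2$ this swap gives $|\widetilde S| \leq |S|$; combined with minimality, equality holds and $|\widetilde S \cap A| \leq 1$. The exact same argument on the $C$-side, using $w_k$ and whichever of $s, z$ has a unique $C$-neighbour, secures $|S \cap C| \leq 1$ without disturbing the $A$-bound, since the two swaps only add to $S$ vertices in $A \cup \{w_1\}$ and $C \cup \{w_k\}$ respectively.

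For the final step I would flatten all remaining deletions in $B$ onto the main path. If $v \in S \cap B$ is a caterpillar leaf attached to main-path vertex $w_i$, then by Lemma~\ref{lem:pwstructure} $v \notin \{w_1, w_k\}$ has no neighbour in $A \cup C$, so $v$'s only neighbour in $V \setminus \{s, z\}$ is $w_i$. Consequently any $\ell$-temporal path through $v$ is either of the form $s \to v \to z$ — forbidden by Rule~1 in a clean graph — or it also visits $w_i$. Replacing $v$ by $w_i$ thus preserves the separator and its size; iterating over all such leaves gives the desired $S'$.

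The main delicacy I anticipate is verifying that $T_A$ really blocks \emph{every} $\ell$-temporal path through $A$. This depends simultaneously on both clauses of Lemma~\ref{lem:pwstructure} — the neighbourhood bound on $s$ (or $z$) in $A$, \emph{and} the fact that $w_1$ is the sole $B \cup C$-vertex adjacent to $A$. Without either clause one could construct a ``fan'' inside $A$ that forces two or more deletions inside $A$ alone. Once this structural point is secured, the three swaps are visibly non-increasing in size and immediately yield the lemma.
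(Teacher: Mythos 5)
Your proof is correct and follows essentially the same approach as the paper. Both proofs exploit the key structural observation from Lemma~\ref{lem:pwstructure} that every $s$--$z$ path touching $A$ must pass through either the unique $A$-neighbour of $s$ (or $z$) or through $w_1$, so that deletions in $A$ can be consolidated into $\{a_s, w_1\}$; and both handle the caterpillar leaves by noting (via Rule~1 / cleanliness) that any path through a leaf must also pass its unique main-path neighbour. The only stylistic difference is that you replace \emph{all} of $S\cap A$ with $T_A$ in a single swap and then invoke minimality, whereas the paper swaps out two $A$-vertices at a time with $\{v, w_1\}$ and then iterates implicitly — the reasoning is the same. Your attention to the non-interference of the $A$- and $C$-swaps is a point the paper leaves implicit but is worth stating, as you do.
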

\begin{algorithm}[H]
%\SetAlgoNoLine%
\DontPrintSemicolon
\SetKwProg{Fn}{function}{}{}
\Fn{getTemporalSeparator($G, s, z, (A, B, C)$)}
  {
    //We assume that $G$ is clean\;
    Let $\mathcal{D} = \{ D \subseteq V(G) : \mbox{$D \cap A \leq 1, D \cap C \leq 1$, and $D \cap B = \emptyset$}\}$\;
    %$\mathcal{D} = \{\emptyset\}$ \tcp*{$\mathcal{D}$ is the set of possible deletions in $A \cup C$}
    
    %\lFor{each vertex $v \in A \cup C$}{
        %Add $\{v\}$ to $\mathcal{D}$
    %}
    %\lFor{each $u \in A$ and $v \in C$}{
     %   Add $\{u, v\}$ to $\mathcal{D}$
        % \rd{[I'm confused with the two for (line 4 and 5):  the first add one vertex; the second one two vertices: so three vertices? or are they independent? Maybe we can replace it say we try each subset of at most one vertex of $A$ and of $B$].}
        % \ml{[ok I modified the definition of $\mathcal{D}$, is it clearer?]\;}
    %}
    %\For{each $D \in \mathcal{D}$}{
    %\For{each subset $D \subseteq V(G)$ that contains at most one vertex of $A$, at most one vertex of $C$, and no vertex of $B$}{
    \For{each $D \in \mathcal{D}$}{
        $D' = extendSeparator(G, s, z, (A, B, C), D$)   \tcp*{pass copy of $D$}
    }
    return a smallest $(s, z, \ell)$-separator $D'$ found\;
  }
\;
\Fn{extendSeparator($G, s, z, (A, B, C), D$)}
{
    \uIf{$G - (D \cup B)$ has an $(s, z, \ell)$-temporal path}{
        return ``impossible''
    }
    %\;
    Let $w_1 - \ldots - w_k$ be the main path of the caterpillar $B$\;
    \For{$i = 1, 2, \ldots, k - 1$}
    {
        Define $W_i = A \cup \{s, z\} \cup (N[w_1, \ldots, w_i] \setminus \{w_{i+1}\})$\;
        \uIf{$G[W_i] - D$ contains an $(s, z, \ell)$-temporal path}
        {
            Add $w_i$ to $D$\;  
        }
    }
    %\;
    \uIf{$G - D$ contains an $(s, z, \ell)$-temporal path}{
        Add $w_k$ to $D$\;
    }
    return $D$;
}
  \caption{Main algorithm}
  \label{alg:pwalgo}
\end{algorithm}

Our algorithm proceeds as follows.  We first make $G$ clean, possibly handling multiple connected components of $G - \{s, z\}$ separately, and construct the sets $A, B, C$ from Lemma~\ref{lem:pwstructure}.
We then run Algorithm~\ref{alg:pwalgo}, which first guesses every way that a solution could delete at most one vertex from $A$ and at most one vertex from $C$, noting that such a solution exists by Lemma~\ref{lem:pwdels}.
There are $O(n^2)$ possible guesses, and for each of them, we solve a restricted version of the problem where we can only delete vertices from the caterpillar $B$.  

This is done by the $extendSeparator$ function, which attempts to extend the guess by finding the minimum number of deletions to do on the main $B$ path $w_1 - \ldots - w_k$, traversing it from left to right.
When at a specific $w_i$, we restrict the graph to $G[W_i] - D$, where $D$ contains the deletions made so far and $W_i$ contains $A$, $w_i$ and the predecessors of $w_i$, and their neighbors.  In a greedy manner, we delete $w_i$ only if it creates a temporal path in this restricted graph, considering the deletions $D$ applied so far. 
This ensures that there is no temporal path that goes through the $w_i$'s in increasing order (possibly going in $A$ as well), and we make one last check at the end to ensure that no temporal path goes through $C$ and then on the path in the reverse order (if so, we delete $w_k$ to prevent that).
\begin{restatable}{theorem}
{thmpwthree}
\label{thm:pw3}
%\begin{theorem}
    The \TempSep{$\ell$} problem can be solved in time $O(n^4 m)$ on graphs on pathwidth at most $3$, where $n = |V(G)|$ and $m = |E(G)|$.
%\end{theorem}
\end{restatable}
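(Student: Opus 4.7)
The plan is to combine the preprocessing via Rules~1--2, a sub-case for small static separators, and the correctness and running-time analysis of Algorithm~\ref{alg:pwalgo}.

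First, I would exhaustively apply Rules~1 and 2 in $O(n^2+nm)$ time so that we may assume $G$ is clean and $G-\{s,z\}$ is connected. Next, I would test whether the underlying static graph admits an $(s,z)$-separator of size at most $3$; if so, any optimum $(s,z,\ell)$-temporal separator also has size at most $3$, and enumerating all vertex subsets of size at most $3$ together with a temporal BFS for each yields the optimum in $O(n^3(n+m))$ time. Otherwise, Lemma~\ref{lem:pwstructure} gives a partition $(A,B,C)$ with $B$ inducing a caterpillar of main path $w_1,\ldots,w_k$, and by Lemma~\ref{lem:pwdels} some optimum separator $D^*$ satisfies $|D^*\cap A|\le 1$, $|D^*\cap C|\le 1$, and $D^*\cap B\subseteq\{w_1,\ldots,w_k\}$. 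The outer loop of Algorithm~\ref{alg:pwalgo} enumerates all $O(n^2)$ candidates for $D^*\cap(A\cup C)$, so correctness reduces to showing that whenever $D$ equals $D^*\cap(A\cup C)$, the subroutine \emph{extendSeparator} returns a minimum-size completion of $D$ using only main-path vertices.

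The main obstacle is justifying the greedy left-to-right sweep inside \emph{extendSeparator}. I would prove by induction on $i$ the invariant that, after iteration $i$, the current set $D$ equals some minimum-size deletion set that kills every $(s,z,\ell)$-temporal path contained in $G[W_i]-(D^*\cap(A\cup C))$. The key structural observation is that in a caterpillar every leaf of the main path has degree one, so leaves cannot serve as internal vertices of any temporal path; hence any $(s,z,\ell)$-temporal path in $G[W_i]-D$ that was not already present in $G[W_{i-1}]-D$ must traverse $w_i$ as an internal vertex. An exchange argument then shows that deleting $w_i$ dominates deleting any leaf attached to $w_i$, and is no worse than postponing the deletion to some later $w_j$ with $j>i$: indeed, by Lemma~\ref{lem:pwstructure} the boundary set $\{w_{i+1}\}\cup(D^*\cap A)$ separates $W_i$ from the rest of $G$ in the underlying static graph, so a path witnessed inside $W_i$ cannot be killed by any deletion made outside $N[w_1,\ldots,w_i]\cap B$. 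The post-loop step that optionally adds $w_k$ handles the symmetric case of temporal paths that enter through $C$ and traverse the main path from right to left; since $w_k$ is the unique vertex of $B\cup A$ adjacent to $C$, deleting it kills all such remaining paths simultaneously.

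For the running time, a single $(s,z,\ell)$-temporal reachability check costs $O(n+m)$ via BFS on the time-expanded graph of the subgraph under consideration. The function \emph{extendSeparator} performs $O(n)$ such checks, giving $O(n(n+m))$ per call, and with $|\mathcal{D}|=O(n^2)$ outer iterations, the total cost is $O(n^3(n+m))$, which together with preprocessing and the small-separator sub-case falls within the claimed $O(n^4m)$ bound. Combining the induction above with the exhaustive enumeration of guesses for $D^*\cap(A\cup C)$ shows that Algorithm~\ref{alg:pwalgo} outputs an $(s,z,\ell)$-temporal separator of size $|D^*|$, proving Theorem~\ref{thm:pw3}.
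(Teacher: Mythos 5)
Your overall structure matches the paper's proof: preprocessing via Rules 1--2, the small-separator sub-case, invoking Lemmas~\ref{lem:pwstructure} and~\ref{lem:pwdels}, the $O(n^2)$ enumeration of $D^*\cap(A\cup C)$, and the claimed $O(n^4 m)$ running time are all as in the paper. The gap is in the justification of the greedy sweep inside \emph{extendSeparator}, which is the technical heart of the theorem. Your exchange argument asserts that deleting $w_i$ ``is no worse than postponing the deletion to some later $w_j$ with $j>i$,'' and supports this by noting that a path witnessed in $G[W_i]-D$ cannot be killed by deletions strictly to the right of $w_i$. That direction is essentially immediate. The genuinely hard direction is the opposite one: you must show that the greedy, which deletes on the main path \emph{as late as possible}, is no worse than an optimum that deletes some $w_j$ with $j<i$. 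A naive swap of $w_j$ for $w_i$ is not sound, because a path through $w_j$ need not pass through $w_i$ (it may leave the main path via $s$, $z$, or a leaf before reaching $w_i$). The paper handles this by choosing $D^*$ whose main-path deletion indices are \emph{lexicographically maximum}, then showing a first-discrepancy index $i$ with $a_i < b_i$ leads to a contradiction by replacing $w_{a_i}$ with $w_{a_i+1}$ (not with $w_{b_i}$) --- a single-step shift whose safety follows from the fact that the greedy found no path in $G[W_{a_i}]-D_{a_i-1}$. Your inductive invariant, phrased only in terms of cardinality (``equals some minimum-size deletion set that kills every $(s,z,\ell)$-temporal path contained in $G[W_i]$''), is also too weak: local minimality per prefix does not, by itself, guarantee extendability to a global optimum; you need the stronger claim that $D_i$ is a prefix of some optimal solution.

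Two smaller inaccuracies. First, leaves of the caterpillar can serve as internal vertices of an $s$--$z$ path (e.g., $s - y - w_i - \cdots$ when $y$ is adjacent to both $s$ and $w_i$); cleanliness (Rule~1) only excludes the length-two path $s - y - z$. What Lemma~\ref{lem:pwdels} gives you is that such paths also pass through $w_i$, so the leaf may be traded for $w_i$ in the separator; it does not say leaves never occur on paths. Second, the claim that $\{w_{i+1}\}\cup(D^*\cap A)$ separates $W_i$ from the rest of $G$ is false as stated, since $s$ and $z$ lie in $W_i$ but may have neighbors in $C$; the correct statement is only about paths that remain inside $W_i$, which is weaker and again addresses only the easy direction of the exchange.
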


%\ml{
Note that the above algorithm should work for the strict variant of the problem. 
 Indeed, the algorithm only needs to query whether a forbidden path exists, so it suffices to have access to a routine that determines whether a strict $\ell$-temporal path exists.  Again we leave the details for a future version.
% }

%\ml{[I think the strict variant is also in P.  The algorithm only needs to be able to decide whether there is a forbidden path, or not. 
% As long as a forbidden path can be found in polytime, this should work.  Worth adding a Corollary?]}

\section{Approximation of \TempSep{$\ell$}} 
%and \TempSep{$\ell$}}
\label{sec:approx}

\subsection{Hardness of Approximation}
\label{subsec:HardApprox}

In this section we strengthen the inapproximability
of \StTempSep{$\ell$} (and later we extend the inapproximability to \TempSep{$\ell$}). 
%and \OneStTempSep{$\ell$}).
We prove the result by giving an approximation preserving
reduction from \DMC{} to \StTempSep{$\ell$}.
\DMC{} is known to be inapproximable within
factor $2^{\Omega(\log^{1-\varepsilon}|N|)}$, for
any constant $\varepsilon > 0$,
even for a directed 
acyclic graph with a set $N$ of vertices, unless $NP \subseteq ZPP$~\cite{DBLP:journals/jacm/ChuzhoyK09}.
We recall here that \DMC{} (we consider it is defined on a directed 
acyclic graph),
%
% \begin{problem} (\DMC) \\ 
% \textbf{Input:}  
given a directed acyclic graph $D=(N,A)$, a set $R$ of pairs $(s_1, z_1), \dots, (s_h,z_h)$ of vertices, with $s_i, z_i \in N$,
$i \in [h]$, %.\\
%\textbf{Output:} 
asks for a minimum cardinality subset $A' \subseteq A$ so that each pair $(s_i,z_i)$, $i \in [h]$, is disconnected in $D- A'$.
% \\
% \end{problem}
Note that we assume that, for each
$(s_i, z_i) \in R$, $i \in [h]$, it holds that 
$s_i \neq t_i$, otherwise the vertex needs to 
be removed in order to separate the pair.
Given an instance $(D,R)$ of \DMC{}, 
the vertices that belong to a pair in $R$
are called terminals; 
% and are denoted by $N(R)$, formally defined as
% \[
% N(R) = \{ u \in N: (u,z_i) \vee (s_i,u) \in R, i \in [h] \}.
% \]
the set $R_S$ ($R_Z$, respectively) contains those terminals
$v$ such that $v=s_i$ and 
$(s_i, z_j) \in R$
($v = z_j$ and $(s_i, z_j=v) \in R$, respectively).
%Note that $R = R_S \cup R_Z$.

Consider an instance $(D,R)$ of \DMC{}, in the following we construct
a corresponding instance %$(G=(V,E, \tau), s,z, \ell)$ 
of \StTempSep{$\ell$}.
% Without loss of generality we assume that
% the vertices in $s_i \in N(R)$ are $\{ v_1, \dots, v_h\}$.
We first present the idea of the construction.
Essentially $G$ contains $|A|+1$ copies
of each vertex, plus one vertex for each
arc in $A$, in addition to $s$ and $z$.
This ensures that only the vertices
associated with arcs will be removed,
as removing $|A|+1$ copies of each vertex
of $D$ requires to delete too many vertices.
As for the temporal edges of $G$, 
for each arc from $u$ to $v$ in $D$,
there is a path of length two, consisting of
a temporal edge from
the vertex associated with $u$ to
vertex associated with arc $(u,v)$
and 
a temporal edge from the
vertex associated with arc $(u,v)$
to the vertex associated with $v$.
Each vertex $v_i$ is associated
with a interval of $\ell$ timestamps 
$[\ell\cdot (2i-2), \ell \cdot (2i-1)-1]$.
The timestamps are assigned to temporal
edges of $G$ so that if there exists a path
in $D$ from $v_i$ to $v_j$, then there
exists a temporal path from two corresponding 
vertices of $G$, that is $x_{i,q}$ to $x_{j,r}$, 
$q,r \in [|A|+1]$, with timestamps in 
the interval $[\ell\cdot (2i-2), \ell \cdot (2i-1)-1]$.

Now, we present the formal definition of
the instance of \StTempSep{$\ell$}.
First, $\ell = 2|N|$.
Given $N = \{ v_1, \dots, v_{|N|} \}$,
since $D$ is a DAG, we assume that the vertices
of $D$ are sorted according to a topological sorting
of $D$, that is for $(v_i, v_j ) \in A$, $i,j \in [|N|]$, it holds that $i < j$. 
We define the set $V$ of vertices: 
% \begin{equation*} 
% \label{eq1}
% \begin{split}
$$V  = %& \{u_i: v_i \in N \setminus N(R)\} \ \cup \\
  \{ x_{i,q}: v_i \in N, q \in [|A|+1]\}\ \cup 
% & 
 \{ y_{i,j} : (v_i,v_j) \in A, i < j \} \ \cup 
% & 
 \{s,z\}.$$
% \end{split}
% \end{equation*}
Now, we define the set of temporal edges.
For a vertex $v_i \in N$, $i \in [|N|]$,  $Reach(D,v_i)$ 
denotes the set of vertices that can be reached with 
a path that starts from $v_i$ in $D$.
\begin{equation*} 
\label{eq2}
\begin{split}
E  = & \{ 
(s,x_{i,q},\ell \cdot (2i-2)) : v_i \in R_S, q \in [|A|+1] 
\}\ \cup \\
& \{
(x_{j,q},z,\ell \cdot (2i -1)-1) : v_j  \in R_Z,
v_j \in Reach(D,v_i), 
\exists (v_i, v_j) \in R, 
q \in [|A|+1] 
\} \ \cup \\
 & \{
(x_{i,q},y_{i,j},\ell \cdot (2h-2) + 2i-1): v_i \in N, v_i \in Reach(D, v_h),
  (v_i,v_j) \in A, h \in [i], q \in [|A|+1]\}\ \cup \\
 & \{
(y_{i,j},x_{j,q},\ell \cdot (2h-2) + 2(j-1)): 
 v_j \in N, v_j \in Reach(D,v_h),
 (v_i,v_j) \in A, h \in [i], q \in [|A|+1]\}\\
\end{split}
\end{equation*}
% \[ E = \{ \{s, x_{s_i,j}, 1 \}: i \in [h], j \in [m+1] \} \cup 
% \{ \{ x_{t_i,j}, t, n+3 \}: i \in [h], j \in [m+1] \} 
% \cup \{ \{x_u, x_{u,v}, 2j-1 \}, \{x_{u,v}, x_v, 2j \}: j \in [\frac{n}{2}+1] \} \]
% %% DIFFERENZIARE si e tj?
Now, we prove the relations between \DMC{} 
and \StTempSep{$\ell$}.

\begin{restatable}{lemma}{lemDMCST}
%\begin{lemma}
\label{lem:DMCSt} 
Consider an instance $(D,R)$ of \DMC{} and
a corresponding instance $(G,s,t, \ell)$ of \StTempSep{$\ell$}.
Then (1) given a solution $A'$ of \DMC{} on instance $(D,R)$ consisting of $k$ arcs we can compute
in polynomial time a solution of \StTempSep{$\ell$} 
on instance $(G,s,t, \ell)$
consisting of $k$ vertices; 
(2) given a solution of \StTempSep{$\ell$} 
on instance $(G,s,z, \ell)$ consisting of $k$ vertices
we can compute in polynomial time 
a solution $A'$ of \DMC{} on instance $(D,R)$ consisting of $k$ arcs.
%\end{lemma}
\end{restatable}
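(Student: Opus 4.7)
The plan is to exploit a structural correspondence: every $(s,z)$-strict temporal path in $G$ of travelling time at most $\ell = 2|N|$ encodes a directed path in $D$ between two terminals of a pair in $R$. Once this correspondence is established, both directions of the lemma become short.

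I would first show that any such strict temporal path $P$ must reside entirely in a single \emph{slot} $[\ell(2i-2),\, \ell(2i-1)-1]$ associated with some source terminal $v_i \in R_S$. Indeed, the first edge of $P$ must be $(s, x_{i,q_0}, \ell(2i-2))$ and its last edge must be $(x_{j,q_m}, z, \ell(2i'-1)-1)$ for some pair $(v_{i'}, v_j) \in R$; the constraints $tt(P) \leq \ell$ and $t_1 \leq t_h$ together force $i = i'$. Every intermediate edge has a timestamp of the form $\ell(2h-2) + c$ with $c \in [0,\ell)$, so membership in the slot forces $h = i$ throughout. Within the slot, $x \to y$ edges carry odd offsets $2i''-1$ and $y \to x$ edges carry even offsets $2(j''-1)$, so strict monotonicity of timestamps combined with the topological ordering of $D$ forces the traversed vertices $v_i = v_{j_0}, v_{j_1}, \dots, v_{j_m} = v_j$ to satisfy $j_0 < j_1 < \dots < j_m$ and each consecutive pair $(v_{j_a}, v_{j_{a+1}})$ to be an arc of $D$, where $v_j \in R_Z$ and $(v_i, v_j) \in R$.

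Part (1) then follows immediately: given a \DMC{} solution $A'$ of size $k$, set $V' := \{y_{i,j} : (v_i, v_j) \in A'\}$; any bad strict temporal path in $G - V'$ would yield, via the correspondence, a path in $D - A'$ between some pair of $R$, a contradiction. For Part (2), starting from a separator $V'$ of size $k$, I would first reduce to the case where $V'$ contains only $y$-vertices. If $|V'| > |A|$, output $A' := A$ directly (a valid solution of size at most $k$). Otherwise, for every $v_i \in N$ at least one copy $x_{i,q'}$ remains outside $V'$, and since the multi-set of edges incident to $x_{i,q}$ is identical across all $q \in [|A|+1]$, any $x_{i,q} \in V'$ may be dropped from $V'$ while preserving the separation property: every bad path through $x_{i,q}$ reroutes through $x_{i,q'}$. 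Iterating yields $V'' \subseteq V'$ consisting only of $y$-vertices, with $|V''| \leq k$. Setting $A' := \{(v_i, v_j) : y_{i,j} \in V''\}$ and applying the correspondence contrapositively, any path in $D - A'$ between a pair of $R$ lifts to a strict temporal path in $G - V''$ of travelling time at most $\ell$, which is impossible.

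The main obstacle is the slot analysis: one must verify that the choice $\ell = 2|N|$ makes each slot just wide enough to accommodate all in-slot intermediate edges (whose offsets range over $\{1, 2, \dots, 2|N|-1\}$) yet narrow enough that edges carrying an index $h \neq i$ cannot fit inside, and that the odd/even parity of $x\to y$ and $y\to x$ offsets interleaves correctly so that strictness enforces the topological order of $D$. The swap argument in Part (2) is the only other non-routine ingredient, and it rests entirely on the symmetry of the $|A|+1$ copies $x_{i,q}$ in the construction.
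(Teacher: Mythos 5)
Your proposal follows essentially the same route as the paper's proof. The key ``slot'' observation (that a strict $\ell$-temporal path from $s$ must live entirely within $[\ell(2i-2),\ell(2i-1)-1]$ for the index $i$ of its first edge, because of the travel-time bound and strictness) is exactly what the paper proves via its enumerated properties of $P$, the odd/even offset argument is the paper's argument that two consecutive hops through $y_{a,b}$ enforce $a<b$ and the topological order, and your Part~(2) cleanup of $x$-vertices by exploiting the $|A|+1$ identical copies (with the $|V'|>|A|$ fallback to $A'=A$) is logically the same as the paper's replacement of $V'$ by $V^*=\{y_{i,j}:(v_i,v_j)\in A\}$ when all copies of some $x_i$ are deleted.
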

\noindent 
\emph{Proof sketch.} For (1), given a solution $A'$ of \DMC{} on instance $(D,R)$ consisting of $k$ arcs, we can compute
in polynomial time a solution of 
\StTempSep{$\ell$} on the corresponding instance 
$(G,s,t, \ell)$ as follows:
%\[
$V' = \{ y_{i,j}: (v_i,v_j) \in A'\}$.
%\]
Indeed, if
$G[V \setminus V']$ contains a strict temporal path $p$ it must
pass through two vertices $x_{i,r}, x_{j,r}$ and there 
is a path between the corresponding vertices 
$v_i$, $v_j$ of $D-  A'$
such that
$(v_i, v_j) \in R$.
For (2), given a solution $V'$ of \StTempSep{$\ell$} 
on instance $(G,s,t,\ell)$, where $|V'|= k$, first we
can show that $V'$ contains only vertices $y_{i,j}$.
We can define a solution $A'$ of \DMC{} on instance $(D,R)$ 
as follows $A' = \{ (v_i,v_j): y_{i,j}  \in V'\}$.

% \begin{restatable}{lemma}{DMCSTTwo}
% %\begin{lemma}
% \label{lem:DMCSt2} 
% Consider an instance $(D,R)$ of \DMC{} and
% a corresponding instance $(G,s,z, \ell)$ of \StTempSep{$\ell$}.
% Given a solution of \StTempSep{$\ell$} 
% on instance $(G,s,z, \ell)$ consisting of $k$ vertices
% we can compute in polynomial time 
% a solution $A'$ of \DMC{} on instance $(D,R)$ consisting of $k$ arcs.
% %\end{lemma}
% \end{restatable}

It follows from Lemma \ref{lem:DMCSt}
%and Lemma \ref{lem:DMCSt2} 
that we have designed an approximation
preserving reduction from \DMC{} to
\StTempSep{$\ell$}.
Since \DMC{} is not approximable within factor
$2^{\Omega(\log^{1-\varepsilon}|N|)}$, for any constant $\varepsilon > 0$, unless $NP \subseteq ZPP$ \cite{DBLP:journals/jacm/ChuzhoyK09}, we have the following result.
% Since $G$ contains $n$ vertices, with
% $n = O(|N||A|)$ and $|A| = O(|N^2|)$,
% it follows that $\log^{1-\varepsilon}n = \Omega(\log^{1-\varepsilon}|N|)$,
% thus \StTempSep{$\ell$} is hard to approximate within factor $2^{\Omega(\log^{1-\varepsilon}n)}$, for any constant $\varepsilon> 0$, unless $NP \subseteq ZPP$.
%We can prove now the main result of this section.

\begin{restatable}{theorem}{teohardOne} 
%\begin{theorem}
\label{teo:hard1}
The \StTempSep{$\ell$} problem is 
$2^{\Omega(\log^{1-\varepsilon}|V|)}$-hard to approximate for any constant $\varepsilon> 0$, unless $NP \subseteq ZPP$.
%\end{theorem}
\end{restatable}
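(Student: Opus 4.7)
The plan is to combine the approximation-preserving reduction established in Lemma~\ref{lem:DMCSt} with the known inapproximability of \DMC{} from Chuzhoy and Khanna. The key point is that Lemma~\ref{lem:DMCSt} preserves the optimum value exactly: parts (1) and (2) together imply that for an instance $(D,R)$ of \DMC{} and the corresponding instance $(G,s,z,\ell)$ of \StTempSep{$\ell$}, we have $OPT_{\DMC}(D,R) = OPT_{\StTempSep{\ell}}(G,s,z,\ell)$, and moreover a solution of one can be converted to a solution of the other of the same size in polynomial time.

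The second step is a size-bookkeeping computation. Inspecting the construction of $V$, we have $|V| = |N|(|A|+1) + |A| + 2$, which is polynomial in $|N|$; more precisely, $|V| = O(|N|\cdot|A|) = O(|N|^3)$. Consequently $\log|V| = O(\log|N|)$, so for any constant $\varepsilon > 0$ there is a constant $c > 0$ with $\log^{1-\varepsilon}|V| \leq c \cdot \log^{1-\varepsilon}|N|$, giving $2^{\Omega(\log^{1-\varepsilon}|V|)} \subseteq 2^{\Omega(\log^{1-\varepsilon}|N|)}$ (up to a constant inside the $\Omega$).

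The third step is the standard transfer argument. Suppose, for contradiction, that there exists a polynomial-time algorithm $\mathcal{A}$ that approximates \StTempSep{$\ell$} within some factor $\alpha(|V|) = 2^{o(\log^{1-\varepsilon}|V|)}$ (with ZPP randomization allowed if one wants the matching complexity assumption). Given any instance $(D,R)$ of \DMC{}, we build $(G,s,z,\ell)$ via the reduction, run $\mathcal{A}$ to obtain a separator $V'$, and then apply Lemma~\ref{lem:DMCSt}(2) to produce an arc set $A'$ that is a feasible multicut of the same size $|V'|$. Since $|V'| \leq \alpha(|V|)\cdot OPT_{\StTempSep{\ell}} = \alpha(|V|)\cdot OPT_{\DMC}$ and $\alpha(|V|) = 2^{o(\log^{1-\varepsilon}|N|)}$ by the bound above, this yields a polynomial-time $2^{o(\log^{1-\varepsilon}|N|)}$-approximation for \DMC{}, contradicting~\cite{DBLP:journals/jacm/ChuzhoyK09} under $NP \not\subseteq ZPP$.

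I do not expect any serious obstacle here; the work has already been done in Lemma~\ref{lem:DMCSt}. The only step that requires mild care is the logarithmic bookkeeping that shows $\log^{1-\varepsilon}|V|$ and $\log^{1-\varepsilon}|N|$ are within a constant factor, so that an $\varepsilon' = \varepsilon$ in the hypothesis for \StTempSep{$\ell$} translates back to essentially the same $\varepsilon$ for \DMC{} rather than a degraded one; this works because $|V|$ is polynomially bounded in $|N|$.
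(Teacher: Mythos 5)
Your proof is correct and takes essentially the same route as the paper: invoke Lemma~\ref{lem:DMCSt} as an approximation-preserving reduction, invoke the Chuzhoy--Khanna inapproximability of \DMC{}, and observe that $|V|$ is polynomially bounded in $|N|$ so that $\log^{1-\varepsilon}|V| = \Theta(\log^{1-\varepsilon}|N|)$. The only nitpick is the clause ``giving $2^{\Omega(\log^{1-\varepsilon}|V|)} \subseteq 2^{\Omega(\log^{1-\varepsilon}|N|)}$,'' whose direction of containment is awkwardly stated, but your step-3 transfer argument is the correct one and makes the $\Theta$ bookkeeping do the right work.
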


The same result holds also for \TempSep{$\ell$},
we need to slightly modify the input temporal graph
so that each $\ell$-temporal path is forced 
to be strict.

\begin{restatable}{corollary}{corhard}
\label{cor:hard1}
The \TempSep{$\ell$} problem is $2^{\Omega(\log^{1-\varepsilon}|V|)}$-hard to approximate for any constant $\varepsilon> 0$, unless $NP \subseteq ZPP$.
\end{restatable}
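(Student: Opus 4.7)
The plan is to derive the inapproximability of \TempSep{$\ell$} from that of \StTempSep{$\ell$} by modifying the temporal graph built in Theorem~\ref{teo:hard1} so that every (non-strict) temporal path in the new instance is automatically strict, causing the optima of the two variants to coincide on that instance. Concretely, let $(G,s,z,\ell)$ be the \StTempSep{$\ell$} instance produced by the reduction of Theorem~\ref{teo:hard1} from a \DMC{} instance $(D,R)$, and set $n=|V(G)|$. I would build a new temporal graph $G'$ by subdividing each temporal edge of $G$ and creating $n+1$ parallel copies of every subdivision: for each $(u,v,t)\in E$, introduce fresh vertices $w_{u,v,t}^1,\dots,w_{u,v,t}^{n+1}$ and, for each $i\in[n+1]$, add the temporal edges $(u,w_{u,v,t}^i,2t)$ and $(w_{u,v,t}^i,v,2t+1)$. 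The source and target remain $s,z$, and the deadline becomes $2\ell$.

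The argument then proceeds in two steps. The first shows that an optimum $(s,z,2\ell)$-temporal separator of $G'$ can be assumed to use no $w$-vertex: any optimum has size at most $n$ (the set $V(G)\setminus\{s,z\}$ is trivially feasible in $G'$), hence it cannot contain all $n+1$ copies $w_{u,v,t}^i$ of a single original edge, so removing any such vertex is useless and can be replaced by $u$ or $v$ without increasing the size. The second, and key, step establishes the correspondence: each $w$-vertex has degree exactly two, so every temporal path of $G'$ traverses each visited subdivision as a full block $(u,w,2t),(w,v,2t+1)$, producing a timestamp sequence $2t_1,2t_1+1,2t_2,2t_2+1,\dots$; non-strictness of this sequence forces $2t_j+1\le 2t_{j+1}$, i.e.\ $t_j<t_{j+1}$, so the projected path in $G$ is automatically strict. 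A direct computation shows that travelling time at most $2\ell$ in $G'$ corresponds exactly to travelling time at most $\ell$ in $G$, yielding a size-preserving bijection between optimum strict separators of $(G,s,z,\ell)$ and optimum (non-strict) separators of $(G',s,z,2\ell)$.

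Finally, $|V(G')|=n+(n+1)|E(G)|$ is polynomial in $n$, so $\log|V(G')|=\Theta(\log|V(G)|)$, and hence the $2^{\Omega(\log^{1-\varepsilon}|V|)}$ inapproximability bound of Theorem~\ref{teo:hard1} transfers verbatim to \TempSep{$\ell$}. The main obstacle I expect is the second step: one must verify that \emph{every} non-strict temporal path of $G'$—and not only those mimicking the intended ones—projects to a strict temporal path of $G$. This is precisely where the choice of consecutive timestamps $2t,2t+1$ for the two halves of each subdivision, combined with the fact that the subdivision vertices have degree exactly two, plays its role, as it leaves no room for any non-strict transition between two consecutive original edges.
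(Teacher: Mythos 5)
Your proposal is correct, and it takes a genuinely different route from the paper. The paper modifies the \DMC{}-based construction \emph{from the inside}: it subdivides only the edges incident to the $y_{i,j}$ vertices into two temporal edges and then re-tunes the timestamps $t_{q,1}<t_{q,2}<t_{r,1}<t_{r,2}$ so that, within the specific gadget structure (which already has $|A|+1$ parallel copies of each $x_{i,q}$), all surviving $\ell$-temporal paths are forced to be strict. Your approach is a \emph{black-box} reduction from \StTempSep{$\ell$} on $(G,s,z,\ell)$ to \TempSep{$2\ell$} on $(G',s,z,2\ell)$: subdivide every edge, use the timestamp pair $(2t,2t+1)$ so that degree-$2$ subdivision vertices force $t_j<t_{j+1}$ and travelling time doubles exactly, and add $n+1$ parallel copies of each subdivision vertex so that an optimal separator of $G'$ never touches them (since $V(G)\setminus\{s,z\}$ gives a feasible separator of size $\le n$ — valid here because the \DMC{} reduction produces no $(s,z,t)$ edge). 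This buys generality and modularity at the cost of a polynomial blow-up in vertices, which is harmless for the $2^{\Omega(\log^{1-\varepsilon}|V|)}$ bound; the paper's in-place modification keeps the graph smaller but is tied to the particular gadget. Both are valid; your version is arguably cleaner and easier to verify.
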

%\end{corollary}

\subsection{Approximating \StTempSep{$\ell$}}
\label{subsec:ApproxAlgo1}

We consider now the \StTempSep{$\ell$} problem 
and we present an $(\ell-1)$-approximation algorithm
for it.
First, we prove the following
easy claim.
\begin{restatable}{claim}{claimeasy}
%\begin{claim}
\label{claim:easy}
A strict $\ell$-temporal path between $s$
and $z$ contains at most $\ell-1$ vertices different from $s$ and $z$.
%\end{claim}
\end{restatable}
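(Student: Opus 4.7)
My plan is to prove the claim by directly counting how many edges a strict $\ell$-temporal path can contain, and then converting that edge count into a vertex count.

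First I would write the path as $P = [(u_1, v_1, t_1), (u_2, v_2, t_2), \ldots, (u_h, v_h, t_h)]$ with $u_1 = s$ and $v_h = z$, observing (from the definition of temporal path) that all the vertices $u_1, v_1, v_2, \ldots, v_h$ are pairwise distinct, so $P$ has exactly $h+1$ vertices in total and therefore $h - 1$ vertices different from $s$ and $z$. Hence it suffices to show $h \leq \ell$.

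Next I would use strictness: the timestamps satisfy $t_1 < t_2 < \cdots < t_h$ and are integers, so by a trivial induction $t_h \geq t_1 + (h - 1)$, that is $t_h - t_1 \geq h - 1$. Combining this with the deadline bound $tt(P) = t_h - t_1 + 1 \leq \ell$ gives $h - 1 \leq t_h - t_1 \leq \ell - 1$, hence $h \leq \ell$. Plugging back into the vertex count yields $h - 1 \leq \ell - 1$ internal vertices, which is exactly the claim.

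There is no real obstacle here: the whole argument is an arithmetic reshuffling of the definitions of \emph{strict} (strictly increasing integer timestamps) and \emph{travelling time} ($t_h - t_1 + 1$). The only point worth being careful about is not to confuse edges with vertices in the path, since the temporal path is defined as a sequence of $h$ edges carrying $h+1$ distinct vertices; everything else is a one-line inequality chain.
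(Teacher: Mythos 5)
Your proof is correct and takes essentially the same approach as the paper: use strictness to get strictly increasing integer timestamps, bound the number of edges $h$ by $\ell$ via $h-1 \le t_h - t_1 \le \ell-1$, and translate that into $h-1 \le \ell-1$ internal vertices. Your write-up is in fact a bit more careful than the paper's (which mixes up the indices $h$ and $k$), but the argument is the same.
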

Now, we present the approximation algorithm 
(Algorithm~\ref{alg:ApproxStrict}).
The algorithm greedily looks for a strict 
$\ell$-temporal path $P$ between $s$ 
and $z$.  
If there exists such a path $P$,  Algorithm~\ref{alg:ApproxStrict} removes all the vertices of $P$ from $G$, except for $s$ and $z$, and adds its vertices to the vertex separator $V'$.
If there exists no such path, 
then the algorithm stops and return $V'$.
\begin{algorithm}
\caption{The approximation algorithm for \StTempSep{$\ell$}}
\label{alg:ApproxStrict}
%%\KwData{$G$, $\ell$}
%%\KwResult{A separator $V'$}
$V'\gets \emptyset$\;
\While{there exists a strict $\ell$-temporal path $P$,
between $s$ and $z$}
{
    $V' \gets V' \cup (V(P) \setminus \{s,z\})$\;
    Remove $V(P) \setminus \{s,z\}$ from $G$\;
}
\Return($V'$)
\end{algorithm}

Algorithm~\ref{alg:ApproxStrict} has approximation
factor $\ell-1$ since the $\ell$-temporal paths considered
in each iteration are vertex-disjoint, thus any solution
of \StTempSep{$\ell$} contains at least one vertex
for each of these temporal path.
\begin{restatable}{theorem}{TeoStrictApprox}
%\begin{theorem}
\label{TeoStrictApprox}
Algorithm \ref{alg:ApproxStrict} is an $(\ell-1)$-approximation algorithm for \StTempSep{$\ell$}.
%\end{theorem}
\end{restatable}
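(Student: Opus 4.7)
The plan is to verify two things: (i) the set $V'$ returned by Algorithm~\ref{alg:ApproxStrict} is a valid $(s,z,\ell)$-strict temporal separator, and (ii) $|V'| \leq (\ell-1) \cdot |V^*|$ for any optimal separator $V^*$. Correctness is essentially immediate from the termination condition: the while loop exits exactly when no strict $\ell$-temporal path from $s$ to $z$ exists in the current graph, which at that point is $G[V \setminus V']$, so by definition $V'$ is an $(s,z,\ell)$-strict temporal separator.

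For the approximation ratio, let $P_1, P_2, \ldots, P_r$ denote the strict $\ell$-temporal paths found in the successive iterations of the algorithm, and let $V_i = V(P_i) \setminus \{s,z\}$ denote the internal vertices added to $V'$ at iteration $i$. The first key observation is that the sets $V_1, \ldots, V_r$ are pairwise disjoint: each iteration removes $V_i$ from $G$ before the next one begins, so $P_{i+1}$ cannot use any vertex of $V_1 \cup \cdots \cup V_i$. The second key observation is that each $P_i$, being found in a subgraph of the original temporal graph $G$, is also a strict $\ell$-temporal path in the original $G$, since only vertex deletions occur between iterations.

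Combining these observations with Claim~\ref{claim:easy}, we get $|V'| = \sum_{i=1}^{r} |V_i| \leq (\ell-1) \cdot r$. On the other hand, any $(s,z,\ell)$-strict temporal separator $V^*$ of the original $G$ must contain at least one internal vertex of each $P_i$ (otherwise $P_i$ would survive in $G[V \setminus V^*]$, contradicting the fact that $V^*$ is a separator). Since the $V_i$'s are pairwise disjoint, this forces $|V^*| \geq r$, and hence $|V'| \leq (\ell-1) \cdot r \leq (\ell-1) \cdot |V^*|$, establishing the claimed approximation factor.

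There is no substantial obstacle in this argument; the only point requiring a bit of care is ensuring that the paths found iteratively remain valid witnesses in the original graph. This is handled by observing that the algorithm only ever removes vertices, so any strict temporal path present in the reduced graph at iteration $i$ is also present in $G$ itself, which is all that is needed for the lower-bound step on $|V^*|$.
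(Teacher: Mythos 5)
Your proof is correct and follows essentially the same argument as the paper: greedily found paths are internally vertex-disjoint, each contributes at most $\ell-1$ vertices by Claim~\ref{claim:easy}, and any separator must hit each path at least once, giving the lower bound on the optimum. No gaps.
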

%
% \begin{proof}
% The result follows from Lemma \ref{lem:approxStrict}
% and from the fact that the existence of a strict temporal
% path of travelling time at most $\ell$ can be computed in polynomial time.
% \end{proof}
%
Next, we prove that, using an approximation preserving reduction
from \UniformHypergraph{}
similar to the one presented in~\cite{DBLP:conf/isaac/HarutyunyanKP23} from
\textsc{Set Cover}, improving this approximation factor
is a challenging problem.
In particular, \UniformHypergraph{}
is not approximable within factor $k$ assuming the Unique Game Conjecture~\cite{khot2008vertex}.

\begin{restatable}{corollary}{hardapprox}
%\begin{corollary}
\label{cor:hardapprox}
\StTempSep{$\ell$}  is hard to approximate within factor
%as \HittingSet{k},
%where $k = 
$\ell-1$ assuming the Unique Game Conjecture. 
\end{restatable}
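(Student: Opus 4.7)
The plan is to give an approximation-preserving reduction from \UniformHypergraph{} with $k = \ell - 1$, whose $(k-\varepsilon)$-inapproximability under the Unique Games Conjecture is~\cite{khot2008vertex}. Given a $k$-uniform hypergraph $H = (V_H, E_H)$, I construct a temporal graph $G = (V, E, \tau)$ with source $s$, target $z$, and deadline $\ell$ as follows. Set $V = V_H \cup \{s, z\}$. Enumerate $E_H = \{e_1, \ldots, e_m\}$ and, for each $e_i$, fix an arbitrary ordering $v_1^i, \ldots, v_k^i$ of its $k$ elements together with a time window $W_i = [A_i + 1, A_i + \ell]$, where $A_i = 2\ell(i-1)$, so that consecutive windows are separated by a gap of $\ell$ timestamps. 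Inside $W_i$, add the temporal edges of the strict $(s,z)$-path
\[
P_i \,=\, [(s, v_1^i, A_i{+}1),\,(v_1^i, v_2^i, A_i{+}2),\,\ldots,\,(v_{k-1}^i, v_k^i, A_i{+}k),\,(v_k^i, z, A_i{+}\ell)],
\]
which uses exactly $\ell$ edges and has travelling time $\ell$.

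I then argue that the strict $\ell$-temporal $(s, z)$-paths of $G$ are exactly $P_1, \ldots, P_m$. By Claim~\ref{claim:easy}, any strict $\ell$-temporal $(s,z)$-path uses at most $\ell - 1$ internal vertices, hence at most $\ell$ edges, and (since the timestamps are strictly increasing integers) its first and last timestamps differ by at most $\ell - 1$. Because consecutive windows are separated by at least $\ell$ timestamps, no such path can use edges from two different windows; it must lie entirely inside some $W_i$, where the only available temporal edges are those of $P_i$. Since paths have distinct vertices, the only $(s, z)$-walk one can form from these edges is $P_i$ itself. Consequently, a subset $V' \subseteq V_H$ is a strict $(s, z, \ell)$-temporal separator of $G$ if and only if $V'$ hits every $e_i$, i.e.\ if and only if $V'$ is a vertex cover of $H$, and the two problems have matching optimum values (using the standard convention that $s$ and $z$ are not themselves deletable).

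Since the reduction is polynomial and preserves solution sizes exactly, a polynomial-time $(\ell-1-\varepsilon)$-approximation algorithm for \StTempSep{$\ell$} would yield a polynomial-time $(k-\varepsilon)$-approximation for \UniformHypergraph{}, contradicting~\cite{khot2008vertex} under UGC, which gives the stated hardness. The main technical subtlety is handling vertices of $V_H$ that appear in many hyperedges, since they give rise to many temporal edges sharing a common endpoint across different windows; the window-spacing argument above is precisely what rules out spurious temporal paths that would combine such edges, so the bijection between hyperedges and strict $\ell$-temporal $(s,z)$-paths in $G$ holds regardless of how the hyperedges overlap.
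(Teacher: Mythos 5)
Your proposal is correct and uses essentially the same reduction as the paper: an approximation-preserving reduction from \UniformHypergraph{} in which each hyperedge $e_i$ becomes a strict $\ell$-temporal $(s,z)$-path inside a time window, with consecutive windows spaced at least $\ell$ apart so that no forbidden path can straddle two windows. Your write-up is somewhat more explicit than the paper's sketch about why overlapping hyperedges cannot produce spurious temporal paths, but the construction, the invocation of Claim~\ref{claim:easy}, and the appeal to the $(k-\varepsilon)$-inapproximability of~\cite{khot2008vertex} under UGC match the paper's argument.
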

% \ml{we should mention inapprox results on k hitting set}
% \rd{I didn't find results for Hitting set, only for 
% Set Cover, but they should have the same results.
% Anyway Set Cover with sets bounded by $\Delta$ is
% hard to approximate better than factor $\ln \Delta - O(\ln \ln \Delta)$ and approximable within factor $O(\log \Delta)$.
% So, we can say this in the introduction and remove
% Corollary 14 from the results.
% }
% \ml{I believe the result from hitting set is from~\cite{khot2008vertex}.  They say no $<k$ approx for $k$-uniform hypergraph cover, which I believe is the same thing as hitting set.  Could you check that I interpreted the problem correctly?  Would be nice, as it would show your approx is tight.}
% \rd{Yes, thanks! I was a bit confused :)
% If it's fine with you, I will change
% the problem to $k$-uniform hypergraph cover as in~\cite{khot2008vertex}.
% }
%\ml{[Yes, sounds good to me.]}
%\end{corollary}

\section{Tractability and approximation of \TempCut{$\ell$}}
\label{sec:cut}

In this section we consider the \TempCut{$\ell$}
problem. First, we prove the APX-hardness of
the problem, then we present a $2 \log_2(2\ell)$-approximation algorithm.
\subsection{Hardness of \TempCut{$\ell$}}
We show the APX-hardness of \TempCut{$\ell$}, by presenting an approximation preserving reduction
from the \textsc{Multiway Cut} problem, which is APX-hard~\cite{dahlhaus1992complexity}.
%(for details on L-reductions see~\cite{DBLP:books/daglib/0030297}).
%\ml{[todo: check, is it also APX-hard??]}.  
In this latter problem, the input is an undirected graph $G$ with $k \geq 3$ distinguished vertices $v_1, \ldots, v_k$ called \emph{terminals}.  The goal is to remove the minimum number of edges from $G$ to cut all paths between terminals, that is, compute $F \subseteq E(G)$ of minimum size such that in $G - F$ there is no path between $v_i$ and $v_j$ for every distinct $i, j \in [k]$.

Consider an instance $(G, v_1, \ldots, v_k)$ of \textsc{Multiway Cut} and construct a temporal graph $H$ as follows.  First define $\ell = k - 1$.  We assume that some edges of $H$ are \emph{undeletable}.  This can easily be achieved by replacing an undeletable edge $(u, v, t)$ with a large number of parallel paths between $u$ and $v$ at time $t$.
We construct $H$ as follows. First, start with a copy of $G$ in which every temporal edge is defined at time $\ell$, and every such temporal edge is \emph{deletable}: for each $uv \in E(G)$ we add $(u, v, \ell)$ to $H$.  
Then add vertices $s$ and $z$, and add the following \emph{undeletable} edges:
\begin{align*}
&(s, v_1, 1), \quad (v_1, z, \ell + 1); \quad
(s, v_2, 2), \quad (v_2, z, \ell + 2);  
%& 
\quad \ldots \\
&(s, v_{k-1}, k - 1), \quad (v_{k-1}, z, \ell + k - 1); 
%&
\quad
(v_k, z, \ell).
\end{align*}
In other words we add $(s, v_i, i)$ and $(v_i, z, \ell + i)$ for $i \in [k-1]$, and add $(v_k, z, \ell)$.  Note that $s$ is a neighbor of every $v_i$ except $v_k$.  
%The astute reader may 
Also observe that the edge $(v_{k-1}, z, \ell + k - 1)$ is useless, but we include it to preserve the pattern.

\begin{restatable}{theorem}
{thmcuthard}
%\begin{theorem}
\label{thm:cuthard}
    The \TempCut{$\ell$} problem is APX-hard, even for $\ell = 2$.
%\end{theorem}
\end{restatable}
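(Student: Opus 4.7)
The plan is to prove the theorem by showing that the construction above yields an L-reduction from \textsc{Multiway Cut} to \TempCut{$\ell$}, which immediately transfers APX-hardness. First, I would make the notion of \emph{undeletable} concrete by replacing each undeletable temporal edge $(u,v,t)$ with $N = |E(G)| + 1$ internally-disjoint length-$2$ paths from $u$ to $v$, each routed through a fresh private intermediate vertex and each using time $t$ on both of its edges. Any temporal cut of size at most $|E(G)|$ then cannot sever the effective $u$--$v$ connection at time $t$, so in particular no optimum solution ever touches an undeletable edge.

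Next, I would carefully characterise every $(s,z,\ell)$-temporal path in $H$. Such a path must begin with an undeletable edge $(s,v_i,i)$ for some $i \in [k-1]$ and end with some $(v_j,z,\cdot)$ undeletable edge; in between it may only traverse deletable $G$-edges, all of which occur at time $\ell$. A direct travel-time computation shows that a path ending with $(v_k,z,\ell)$ has travel time $\ell - i + 1 \leq \ell$, whereas one ending with $(v_j,z,\ell+j)$ for $j \in [k-1]$ has travel time $\ell + j - i + 1$, which is at most $\ell$ precisely when $j < i$. Since a temporal path uses each vertex at most once, the middle portion is a simple path from $v_i$ to $v_j$ in $G$. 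Collecting the covered pairs, one recovers every unordered pair of distinct terminals: a pair $\{v_i,v_k\}$ is witnessed by a $v_i$--$v_k$ walk, and a pair $\{v_a,v_b\}$ with $a < b \leq k-1$ is witnessed by a temporal path entering via $v_b$ and leaving via $v_a$.

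From this characterisation, a size-preserving bijection follows: a set $F'$ of deletable edges of $H$ is an $(s,z,\ell)$-temporal cut if and only if the corresponding edge set $F \subseteq E(G)$ is a multiway cut of $G$ separating $v_1,\ldots,v_k$. Combined with the blow-up argument, this gives $\mathrm{OPT}(H) = \mathrm{OPT}(G)$, and the trivial maps forward and backward constitute an L-reduction with parameters $\alpha = \beta = 1$. Since \textsc{Multiway Cut} is APX-hard already for $k = 3$~\cite{dahlhaus1992complexity}, so is \TempCut{$\ell$} for $\ell = k - 1 = 2$, proving the theorem.

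The main obstacle I foresee is the careful exclusion of exotic $(s,z,\ell)$-temporal paths—for instance, paths that would exploit an unexpected interaction between the $G$-edges at time $\ell$ and the sentinel edges adjacent to $s$ or $z$, or paths that hop between several terminals on either side. These are ruled out by the structural observation that $s$ (resp.\ $z$) is incident only to the undeletable $(s,v_i,i)$ (resp.\ $(v_j,z,\cdot)$) edges, which forces every $s$--$z$ temporal path to use exactly one of each; this reduces the analysis to the two-case travel-time computation above and makes the bookkeeping manageable.
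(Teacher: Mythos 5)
Your proposal is correct and takes essentially the same approach as the paper: the same reduction from \textsc{Multiway Cut}, the same case split on the closing sentinel edge $(v_k,z,\ell)$ versus $(v_j,z,\ell+j)$ with $j<i$, and the same conclusion that the map $F \mapsto F' = \{(u,v,\ell): uv \in F\}$ is cost-preserving in both directions. The only substantive difference is that you make the ``undeletable'' gadget fully explicit (replacing each such edge by $|E(G)|+1$ internally-disjoint length-$2$ paths at the same time, so optimal cuts never touch it), whereas the paper leaves this as a one-line remark; your argument otherwise matches the paper's, down to the travel-time bookkeeping.
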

\noindent 
\emph{Proof sketch.}  Let $F \subseteq E(G)$ be a multiway cut of $G$, so that all $v_i - v_j$ paths are removed in $G - F$.  In $H$, remove the set $F' = \{(u, v, \ell) : uv \in F\}$.  Note that $s$ needs to use two distinct $v_i$ and $v_j$ terminals to reach $z$, because it first needs to use some $(s, v_i, i)$ temporal edges, but cannot use the temporal edge $(v_i, z, i + \ell)$.  But $F'$ cuts any path between two terminals, so $F'$ is a temporal cut of $H$.  

Conversely, any temporal cut $F' \subseteq E(H)$ contains only undeletable edges which are copied from $G$.  Consider $F = \{uv \in E(G) : (u, v, \ell) \in F'\}$.  A case analysis shows that if $G - F$ has a path between $v_i$ and $v_j$, then $H - F'$ has an $\ell$-temporal path from $s$ to $z$, going through $v_i$ then $v_j$ or vice-versa using that path.  Thus $F$ must be a multiway cut for $G$.
\qed

Note for $\ell = 1$, \TempCut{$\ell$} is in P: the subgraphs that occur at different times can be treated independently, and thus $\ell = 1$ can be solved by computing a minimum edge cut for each possible time.

%\ml{
Also note that unlike most of our previous results, there is no obvious extension of the above reduction to the the strict variant, where one must delete a minimum number of edges to remove all strict temporal paths of travel time at most $\ell$.  We leave the complexity of latter as an open problem.
%}

\subsection{A $2 \log_2(2\ell)$-Approximation for \TempCut{$\ell$}}

%We consider now \TempCut{$\ell$},
We present a $2 \log_2(2\ell)$-approximation algorithm for \TempCut{$\ell$}.
We start by recalling a generalization of Menger's Theorem on temporal graphs~\cite{DBLP:journals/networks/Berman96}.
\begin{theorem}
\label{teo:MengersTemporal}
The maximum number of edge-disjoint $s,z$-temporal paths in a temporal graph $G$
equals the size of a minimum temporal $s,z$-cut of $G$.
% minimum number of temporal edges that must be deleted in order to disconnect 
% %all the $s,z$-temporal paths from 
% $s$ from $z$ in $G$.    
\end{theorem}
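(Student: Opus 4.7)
\medskip

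\noindent\textbf{Proof plan.}
The plan is to derive the statement from classical Menger's theorem (equivalently, the max-flow--min-cut theorem for integer capacities) applied to a static directed auxiliary graph that encodes the temporal structure of $G$. The easy direction, namely that the maximum number of edge-disjoint $s,z$-temporal paths is at most the size of a minimum temporal $s,z$-cut, follows by a counting argument: any temporal cut must remove at least one temporal edge of each such path, and distinct edge-disjoint paths contribute pairwise disjoint removals, so any cut has size at least the number of paths.

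For the nontrivial direction I would construct a \emph{time-expanded} digraph $H$ as follows. For every $v \in V$ and every $t \in \{0,1,\dots,\tau\}$ introduce a node $(v,t)$, linked to $(v,t+1)$ by an infinite-capacity \emph{waiting} arc. For every temporal edge $e = (u,v,t)$, introduce two auxiliary nodes $a_e, b_e$, infinite-capacity arcs $(u,t-1) \to a_e$ and $(v,t-1) \to a_e$, a single unit-capacity \emph{bottleneck} arc $a_e \to b_e$, and infinite-capacity arcs $b_e \to (u,t)$ and $b_e \to (v,t)$. Designate $(s,0)$ as source and add a super-sink $z^\star$ receiving infinite-capacity arcs from every $(z,t)$. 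By construction, an $s,z$-temporal walk in $G$ lifts to an $(s,0)$-$z^\star$ walk in $H$ that uses exactly one bottleneck arc per traversed temporal edge, and conversely any such walk in $H$ projects to a temporal walk in $G$; edge-disjointness in $G$ corresponds to arc-disjointness on the bottleneck arcs of $H$. Moreover, because only bottleneck arcs have finite capacity, a minimum cut in $H$ can be taken to consist solely of bottleneck arcs, and these correspond bijectively to temporal edges of a temporal $s,z$-cut of the same size. Applying the integral max-flow--min-cut theorem to $H$ then yields the desired equality.

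The main obstacle I anticipate is reconciling integral flow paths in $H$ with the paper's definition of a temporal path, which requires the vertices $u_i, v_i$ to be pairwise distinct. After decomposing an integral max flow into arc-disjoint $(s,0)$-$z^\star$ walks, some walk may revisit the same vertex $v$ at two distinct times $t_1 < t_2$, which does not directly yield a simple temporal path in $G$. The standard remedy is to shortcut: excising the subwalk between $(v,t_1)$ and $(v,t_2)$ preserves the endpoints and temporal ordering (since the continuation after $(v,t_2)$ uses timestamps $\geq t_2 > t_1$) while strictly reducing the number of bottleneck arcs used, and in particular does not introduce any previously unused edge. Iterating this shortcut across all walks produces arc-disjoint simple walks in $H$ whose projections are genuine edge-disjoint temporal paths in $G$, matching the flow value and completing the argument.
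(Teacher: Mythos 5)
The paper does not actually prove this theorem; it is cited as a known result from Berman~\cite{DBLP:journals/networks/Berman96}. So there is no in-paper proof to compare against, and I evaluate your argument on its own. The high-level strategy --- reduce to classical max-flow--min-cut via a time-expanded digraph with a unit-capacity bottleneck arc per temporal edge, then convert arc-disjoint flow paths to edge-disjoint simple temporal paths by shortcutting repeated vertices --- is the standard route and would work with the right construction. Your easy direction (cut size bounds the number of edge-disjoint paths) is fine.

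Your concrete gadget, however, encodes \emph{strict} rather than non-strict temporal paths, and this is a genuine gap. For $e=(u,v,t)$ you feed from layer $t-1$ into layer $t$ via $(u,t-1)\to a_e \to b_e \to (v,t)$. So after a flow unit traverses one time-$t$ edge it sits in layer $t$ and can only enter another gadget whose edge has timestamp strictly greater than $t$, forcing $t_i < t_{i+1}$. But the paper's definition of a temporal path only requires $t_i\le t_{i+1}$, and the theorem (and the paper's notion of temporal cut, used e.g.\ in Lemma~\ref{lem:EdgeShare2}) is stated for these non-strict paths. Take $V=\{s,a,z\}$ with temporal edges $(s,a,1)$ and $(a,z,1)$: there is a temporal path $[(s,a,1),(a,z,1)]$ and the minimum temporal $s,z$-cut has size $1$, yet in your $H$ there is no $(s,0)$--$z^\star$ path at all, so the max flow is $0$ and the bottleneck arcs of a minimum cut of $H$ (the empty set) do not form a temporal cut. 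Thus the claimed correspondence between cuts of $H$ and temporal cuts of $G$, and between flow paths and temporal paths, both fail. The fix is small: keep the gadget within a single layer, i.e., $(u,t)\to a_e$, $(v,t)\to a_e$, the unit-capacity $a_e\to b_e$, and $b_e\to(u,t)$, $b_e\to(v,t)$, while the waiting arcs $(v,t)\to(v,t+1)$ alone advance time. Then a flow unit may chain several same-timestamp edges, degenerate in-and-out uses of a gadget at the same endpoint can be cancelled against the parallel waiting arc without changing the flow value, and your shortcutting argument carries over.
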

Also given a temporal graph $G$, it is possible
to compute in polynomial time a minimum temporal $s,z$-cut of $G$~\cite{DBLP:journals/networks/Berman96}
and we denote such algorithm by $\mathtt{TempCut}(G)$.
% Given a timestamp $t$, we denote by $I(t)$ the 
% time interval of length $\ell$ centered in $t$,
% that is $[t-\ell/2, t+ \ell/2]$ for $\ell$ even
% and $[t-\lceil \ell/2 \rceil, t + \lceil \ell/2 \rfloor]$ for $\ell$ odd
We present now a result that will be useful to prove the approximation factor.

\begin{restatable}{lemma}{lemEdgeShare}
%\begin{lemma}
\label{lem:EdgeShare}
Given $t \in [2,\tau -1 ]$, consider temporal graph $G([t,t+\ell-1])$
and let $E'_t$ be an $s, z$-cut of $G([t,t+\ell-1])$. Consider
a temporal path $p$ of $G([t+i, t+\ell+i-1]) - E'_t$, for some $i > 0$,
and a temporal path $p'$ of $G([t-j, t'-j+\ell-1]) - E'_t$, for some $j > 0$.
Then $p$ and $p'$ are temporal edge disjoint.
%\end{lemma}
\end{restatable}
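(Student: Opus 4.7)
The plan is to argue by contradiction: assume that $p$ and $p'$ share a temporal edge $e = (u, v, t^*)$, and then splice portions of $p$ and $p'$ together to produce a temporal $s,z$-walk in $G([t, t+\ell-1])$ that avoids $E'_t$, contradicting the hypothesis that $E'_t$ is an $(s,z)$-cut of $G([t, t+\ell-1])$.

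First I would pin down the timestamp of the shared edge. Since $e$ belongs to $p$, which lies in $G([t+i, t+\ell+i-1])$, we have $t^* \geq t+i \geq t+1$; since $e$ also belongs to $p'$, which lies in $G([t-j, t-j+\ell-1])$, we have $t^* \leq t-j+\ell-1 \leq t+\ell-2$. Hence $t^* \in [t+1, t+\ell-2]$, so $e$ is an edge of $G([t, t+\ell-1])$ and, being used by $p$ and $p'$, is not in $E'_t$.

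Next I would perform the splicing. Write $p$ as a prefix from $s$ to $u$, then the edge $e$, then a suffix from $v$ to $z$; write $p'$ analogously. Let $W$ be the concatenation of the prefix of $p$ from $s$ to $u$, the edge $e$, and the suffix of $p'$ from $v$ to $z$ (if $u = s$ or $v = z$, the corresponding prefix or suffix is empty and the construction still makes sense). The timestamps along the prefix of $p$ lie in $[t+i, t^*]$, and those along the suffix of $p'$ lie in $[t^*, t-j+\ell-1]$, so the full timestamp sequence of $W$ is non-decreasing and contained in $[t+i, t-j+\ell-1] \subseteq [t+1, t+\ell-2] \subseteq [t, t+\ell-1]$. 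Every edge of $W$ comes from $p$, from $p'$, or is $e$ itself, none of which are in $E'_t$; thus $W$ is a temporal $s,z$-walk in $G([t, t+\ell-1]) - E'_t$. Applying the standard shortcutting argument (whenever a vertex appears more than once on $W$, excise the edges between two of its occurrences) only deletes edges and preserves monotonicity of the timestamps, so after finitely many shortcuts we obtain a genuine temporal $s,z$-path in $G([t, t+\ell-1]) - E'_t$, contradicting the fact that $E'_t$ is an $(s,z)$-cut of $G([t, t+\ell-1])$.

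The main delicate point is ensuring that every timestamp of $W$ genuinely lies inside $[t, t+\ell-1]$. This is precisely where the hypotheses $i > 0$ and $j > 0$ are used: without them the shifted intervals would not be strict subsets of a length-$\ell$ window around $[t, t+\ell-1]$, and $W$ could inherit timestamps outside the cut's interval, making the contradiction fail. Everything else (the time-monotonicity of the concatenation and the walk-to-path reduction) is standard, so no additional machinery is required.
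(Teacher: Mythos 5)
Your argument is correct and matches the paper's proof essentially step for step: both assume a shared temporal edge $e$, splice the prefix of $p$ up to $e$ with the suffix of $p'$ from $e$ to obtain a temporal $s,z$-walk avoiding $E'_t$ with all timestamps inside $[t,t+\ell-1]$, shortcut the walk to a path, and derive a contradiction with $E'_t$ being an $(s,z)$-cut of $G([t,t+\ell-1])$. The only cosmetic difference is that you explicitly bound the shared timestamp $t^*$ in $[t+1,t+\ell-2]$, whereas the paper derives the same containment via bounds on the timestamps of the spliced prefix and suffix.
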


We describe now an algorithm that
has approximation factor 
$\log_2 \tau$, then 
we show how to use it to achieve
approximation factor $2 \log_2(2\ell)$.
We assume for simplicity that $\ell$ is even.
\begin{algorithm}
\caption{$\log_2 \tau$-approximation algorithm for \TempCut{$\ell$}}
\label{alg:ApproxEdge}
\SetKwFunction{TempEdgeCut}{$\mathtt{TempEdgeCut}(G,r,l)$}
%\KwData{$G$, $\ell$}
%\KwResult{An edge separator $E'$ of $G$}
$G' \gets G$, \quad $E'\gets \emptyset$, 
\quad $l \gets 1$, $r \gets \tau$\;
%Function \TempEdgeCut\\
Function $\mathtt{\ell-TempEdgeCut}(G,r,l)$\\
\If{ $r-l \geq \ell-1$}
{
    $t \gets \lfloor \frac{r+l}{2} \rfloor$\;
    $E' \gets E' \cup \mathtt{TempCut}(G([t-\frac{\ell}{2}, t + \frac{\ell}{2}-1) ])$\;
    $\mathtt{\ell-TempEdgeCut}(G,l,t+\frac{\ell}{2}-2)$\;
    $\mathtt{\ell-TempEdgeCut}(G,t- \frac{\ell}{2}+1,r)$\;
}
\end{algorithm}
The algorithm is recursive:
given an interval $[l,r]$,
at each step if the difference $r-l$
is at least $\ell-1$, thus $[l,r]$
contains at least $\ell$ timestamps,
it defines a timestamp $t$ that 
partitions in two (almost) equal parts the intervals of length $\ell$ contained in $[l,r]$. 
Then it computes a minimum temporal cut of 
$G[t-\ell/2, t+\ell/2-1]$ 
and recursively
computes a solution on the first part of intervals
and a solution on the second part of intervals (independently).
Since the number of intervals of length $\ell$ is bounded by $\tau - \ell < \tau$, and each recursive call partitions in two equal parts this set of intervals, it follows that after $\log_2 \tau$ 
levels of recursion, the size of an interval is at most
$\ell$, thus there can be at most $\log_2 \tau$ levels of recursion.

Denote by $h$ the number of recursion levels 
and by $w_i$ the number of timestamps defined (as $t$ in the pseudocode) at level $i$, $i \in [h]$.
Denote by $t_{i,j}$ the $j$-th timestamp chosen by the algorithm at the $i$-th level of the recursion, $i \in [h]$ and $j \in [w_i]$.
% $t_{1,1}, t_{2,1}, t_{2,2}, t_{3,1}, \dots, $, the timestamps chosen by a recursive call of 
% Algorithm~\ref{alg:ApproxEdge} (denoted by $t$ in the algorithm pseudo-code),
% where .
Now, consider an approximated solution $E'$ and optimal solution $Opt$ of \TempCut{$\ell$}. 
We partition $E'$ 
% of temporal edges of
% the approximation solution returned by Algorithm~\ref{alg:ApproxEdge}, and of
and $Opt$ as follows.
Given a timestamp $t_{i,j}$, $i \in [h]$ and $j \in [w_i]$,
we denote by $E'(t_{i,j})$ 
the set of temporal edges added to $E'$
by the approximation algorithm when it defines
$t=t_{i,j}$. 
By construction and by Lemma \ref{lem:EdgeShare},
the sets $E'(t_{i,j})$,
$i \in [h]$, $j \in [w_i]$, define a partition of $E'$.
Similarly, we define $Opt(t_{i,j})$,
$i \in [h]$, $j \in [w_i]$, as
the set of temporal edges $e \in Opt$ such that
(1) there is an $\ell$-temporal path in $G([t_{i,j} - \ell/2, t_{i,j}+ \ell/2-1])$ from $s$ to $z$ that contains $e$
and (2) $e$ does not belong to any $Opt(t_{b,q})$, with 
$b < i$ for some $q \in [w_b]$. 
%\ml{[what's $q$?  Is it related to $i$ and $j$? or should it be $h < i$?]}.
%\rd{It was $h < i$, you're right}
Note that by Lemma~\ref{lem:EdgeShare} this is indeed
a partition, and in particular for each $i\in [h]$ it holds that
$Opt(t_{i,j}) \cap Opt(t_{i,q}) = \emptyset$
for each $j,q \in [w_i]$ and $j \neq q$.
%We also define by $Opt_0 = \emptyset$ in order to simplify the proofs.

Since each temporal edge $e \in Opt$ belongs to exactly one set $Opt(t_{i,j})$ and  since $Opt_{i,j} \cap Opt_{i,q} = \emptyset$, for each $j, q \in [w_i]$,
we have that
%\[
$Opt = \bigcup_{i \in [h],j \in [w_i]} Opt(t_{i,j})$
and $|Opt| = \sum_{i \in [h], j \in [w_i]} Opt(t_{i,j})$.
%\]
% and, since that $Opt_{i,j} \cap Opt_{i,q} = \emptyset$, for each $j, q \in [w_i]$,
% it follows that 
% \[
% |Opt| = \sum_{i \in [h], j \in [w_i]} Opt(t_{i,j}).
% \]
Now, we prove the following result.

\begin{restatable}{lemma}{lemEdgeShareTwo}
%\begin{lemma}
\label{lem:EdgeShare2}
Consider $E'(t_{i,j})$, $i \in [h]$, $j \in [w_i]$,  
the set of temporal edges cut by Algorithm~\ref{alg:ApproxEdge} when it defines $t = t_{i,j}$.
For each level $i \in [h]$ of the recursion, 
it holds that
\[
\sum_{j \in [w_i]} |E'(t_{i,j})| \leq 
\sum_{j \in [w_i]} |Opt(t_{i,j})| + \sum_{b \in [h]:b <i, q \in [w_b]} |Opt(t_{b,q})|.
\]
%\end{lemma}
\end{restatable}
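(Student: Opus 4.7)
The plan is to control $|E'(t_{i,j})|$ via Menger's theorem and then charge each edge of $Opt$ used to cut paths in the subgraph $G_{i,j} := G([t_{i,j} - \ell/2, t_{i,j} + \ell/2 - 1])$ either to $Opt(t_{i,j})$ itself or to an earlier level of the partition of $Opt$.

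First I would apply Theorem~\ref{teo:MengersTemporal} to $G_{i,j}$, giving
\[
|E'(t_{i,j})| \;=\; \mu_{i,j},
\]
where $\mu_{i,j}$ is the maximum number of temporal edge-disjoint $(s,z)$-temporal paths in $G_{i,j}$. Fixing a witnessing family $\mathcal{P}_{i,j}$ of $\mu_{i,j}$ such paths, I would observe that every $P \in \mathcal{P}_{i,j}$ lives inside $G_{i,j}$, which spans only $\ell$ timestamps, so $P$ is an $\ell$-temporal path of $G$ and therefore must be cut by $Opt$. Edge-disjointness of $\mathcal{P}_{i,j}$ then yields $\mu_{i,j}$ distinct edges $e_P \in Opt \cap P$, each of which lies on an $\ell$-temporal path of $G_{i,j}$ by construction.

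Next, using the partition $Opt = \bigcup_{b,q} Opt(t_{b,q})$ and the disjointness $Opt(t_{i,j}) \cap Opt(t_{i,q}) = \emptyset$ for $q \neq j$ (both consequences of Lemma~\ref{lem:EdgeShare}), each such $e_P$ belongs to $Opt(t_{i,j})$ or to some $Opt(t_{b,q})$ with $b<i$. Writing $Y_{i,j}$ for the latter set of charged edges, this gives
\[
|E'(t_{i,j})| \;\leq\; |Opt(t_{i,j})| + |Y_{i,j}|,
\]
and summing over $j \in [w_i]$ handles the first term directly. For the second, I would show
\[
\sum_{j \in [w_i]} |Y_{i,j}| \;\leq\; \sum_{b<i,\, q \in [w_b]} |Opt(t_{b,q})|
\]
by a double-counting argument: a fixed $e \in \bigcup_{b<i,q} Opt(t_{b,q})$ can be chosen as some $e_P$ for at most one value of $j$, since otherwise $e$ would lie on $\ell$-temporal paths in two distinct level-$i$ subgraphs $G_{i,j}$ and $G_{i,j'}$, contradicting the edge-disjointness of level-$i$ paths obtained by iterating Lemma~\ref{lem:EdgeShare} along the recursion tree.

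The step I expect to be the main obstacle is precisely this last iteration, since Lemma~\ref{lem:EdgeShare} is stated only for a single middle cut. I would formalize it by induction on $i$: the two level-$i$ subgraphs descending from a common level-$(i{-}1)$ parent sit strictly on opposite sides of that parent's middle, so a direct application of Lemma~\ref{lem:EdgeShare} guarantees their paths are edge-disjoint, while subgraphs descending from different parents are already separated by a higher-level cut and handled by the inductive hypothesis. Once this cross-subgraph edge-disjointness is in place, the two estimates combine into the claimed inequality.
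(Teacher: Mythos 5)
Your overall strategy matches the paper's: apply Menger's theorem to get a family of edge-disjoint $(s,z)$-paths inside each level-$i$ window, charge each path to an $Opt$ edge that cuts it, and then use disjointness of the path families across $j$ to avoid double-charging the earlier-level $Opt$ sets. However, there is a concrete gap in how you instantiate Menger, and it is precisely at the step you flagged as the ``main obstacle.''

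You apply Menger to the \emph{raw} subgraph $G_{i,j} = G([t_{i,j}-\ell/2,\,t_{i,j}+\ell/2-1])$ and take a witness family $\mathcal{P}_{i,j}$ there. But Lemma~\ref{lem:EdgeShare} only guarantees edge-disjointness between a path of $G([t+a,\dots]) - E'_t$ and a path of $G([t-b,\dots]) - E'_t$ \emph{after removing the middle cut $E'_t$}. Two sibling windows $G_{i,j}$ and $G_{i,j'}$ at level $i$ overlap in a subinterval of length up to roughly $\ell/2$ around their parent's center; a raw path in $G_{i,j}$ may freely traverse edges of the parent's cut $E'(t_{i-1,\cdot})$, and so may a raw path in $G_{i,j'}$. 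Hence, without removing those earlier cuts, the families $\mathcal{P}_{i,j}$ and $\mathcal{P}_{i,j'}$ are \emph{not} guaranteed to be edge-disjoint, and a single edge $e$ assigned to some $Opt(t_{b,q})$ with $b<i$ could be selected as $e_P$ for two different values of $j$. That breaks the inequality $\sum_{j} |Y_{i,j}| \le \sum_{b<i,q} |Opt(t_{b,q})|$, which is exactly the summand you need to control. Your proposed fix --- an induction on $i$ invoking Lemma~\ref{lem:EdgeShare} ``directly'' on the siblings --- still rests on the same disjointness claim for raw paths, so the induction does not repair the gap.

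The repair is to follow what the paper's proof actually does: apply Menger to the \emph{reduced} graph $G_{i,j} - \bigcup_{p<i,\,q\in[w_p]} E'(t_{p,q})$. Then the witness paths by construction avoid all earlier-level algorithmic cuts, Lemma~\ref{lem:EdgeShare} becomes applicable along the recursion tree, and the families $\mathcal{P}_{i,j}$ for $j\in[w_i]$ really are pairwise temporal-edge-disjoint. Consequently each earlier-level $Opt$ edge can cut a path from at most one $j$, and your double-counting argument closes. (This also aligns $|E'(t_{i,j})|$ with the size of a min cut of the reduced graph, which is what the algorithm effectively charges at level $i$.) With that one change, the rest of your outline --- charging each Menger path to $Opt(t_{i,j})$ or to an earlier-level set via the definition of the $Opt$ partition, then summing over $j$ --- coincides with the paper's argument.
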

Based on the previous lemma, since the number of recursion levels
is bounded by $\log_2 \tau$, we can prove  the following.

\begin{restatable}{theorem}{logTapprox}  
\label{logTapprox}
%\begin{theorem}
Algorithm \ref{alg:ApproxEdge}
returns an $(s,z,\ell)$-temporal cut of size at most 
$\log_2 \tau \cdot  Opt$.
%\end{theorem}
\end{restatable}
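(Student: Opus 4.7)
The plan is to prove Theorem~\ref{logTapprox} in two parts: first, that $E'$ returned by Algorithm~\ref{alg:ApproxEdge} is a valid $(s,z,\ell)$-temporal cut, and second, that $|E'| \leq \log_2 \tau \cdot |Opt|$.

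For correctness, I would proceed by induction on the size $r - l$ of the interval processed by a recursive call. The inductive statement is that upon return, every $\ell$-temporal path from $s$ to $z$ whose timestamps all lie in $[l, r]$ contains a temporal edge of $E'$. Consider such a path $P$ with timestamps spanning $[a, b]$, where $b - a + 1 \leq \ell$, and let $t = \lfloor (l + r)/2 \rfloor$. Exactly one of three cases applies: (i) $b \leq t + \ell/2 - 2$, so $[a, b] \subseteq [l, t + \ell/2 - 2]$ and $P$ is cut by the first recursive call by induction; (ii) $a \geq t - \ell/2 + 1$, so $[a, b] \subseteq [t - \ell/2 + 1, r]$ and $P$ is cut by the second recursive call; (iii) $a \leq t - \ell/2$ and $b \geq t + \ell/2 - 1$, which combined with $b - a + 1 \leq \ell$ forces $a = t - \ell/2$ and $b = t + \ell/2 - 1$, so $P$ lies entirely in $G([t - \ell/2, t + \ell/2 - 1])$ and is cut by $\mathtt{TempCut}$ at this level. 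For the base case $r - l < \ell - 1$, one verifies that $[l, r]$ is a subset of the cut interval of the parent call (which follows by inspecting the sub-interval definitions $[l', t' + \ell/2 - 2]$ and $[t' - \ell/2 + 1, r']$ against the parent's cut window), so every $\ell$-temporal path confined to $[l, r]$ has already been destroyed.

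For the size bound, I would apply Lemma~\ref{lem:EdgeShare2} level by level and sum. Recall that $h$ denotes the number of recursion levels; by construction each call essentially halves the number of $\ell$-windows of $[l, r]$, yielding $h \leq \log_2 \tau$. Summing the bound of Lemma~\ref{lem:EdgeShare2} over $i \in [h]$ and swapping the order of summation in the second term gives
\[
|E'| \;=\; \sum_{i \in [h]} \sum_{j \in [w_i]} |E'(t_{i,j})| \;\leq\; |Opt| \;+\; \sum_{b \in [h]} (h - b) \sum_{q \in [w_b]} |Opt(t_{b,q})| \;\leq\; h \cdot |Opt|,
\]
where the first term equals $|Opt|$ because the sets $\{Opt(t_{i,j})\}$ partition $Opt$, and the inequality $(h - b) \leq h - 1$ together with $\sum_{b,q} |Opt(t_{b,q})| = |Opt|$ bounds the second term. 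Combining with $h \leq \log_2 \tau$ yields the desired factor.

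I expect the main obstacle to be the base-case argument for correctness: one must verify carefully that whenever the recursion halts on an interval of length less than $\ell$, every $\ell$-temporal path confined to that interval has already been cut by an ancestor call. This requires exploiting the overlap of the two child sub-intervals with the parent's cut window, and in particular the identity that forces $r' - l' \in \{\ell - 1, \ell\}$ when a base-case leaf is generated. The size-bound argument, by contrast, is essentially bookkeeping once the partition $\{Opt(t_{i,j})\}$ provided by Lemma~\ref{lem:EdgeShare2} (whose disjointness is guaranteed by Lemma~\ref{lem:EdgeShare}) is in hand.
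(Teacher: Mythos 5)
Your size-bound argument is essentially the same as the paper's: sum the inequality of Lemma~\ref{lem:EdgeShare2} over all recursion levels, observe that each $Opt(t_{b,q})$ is counted once in the first summand and at most $h-1$ times in the second, and conclude $|E'| \leq h\,|Opt| \leq \log_2\tau \cdot |Opt|$. Your explicit count $(h-b)$ is a cleaner version of the paper's looser statement that each $Opt(t_{i,j})$ contributes $O(\log_2\tau)$ times, but the underlying accounting is identical. The genuine addition in your proposal is the inductive feasibility argument: the paper's proof of the theorem establishes only the size bound and does not explicitly verify that the returned set is actually an $(s,z,\ell)$-temporal cut (feasibility is only briefly touched on later, in the proof of Corollary~\ref{corEdgeApprox}, and even there somewhat circularly). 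Your three-case split on the timestamp span $[a,b]$ of a candidate path relative to $t = \lfloor(l+r)/2\rfloor$, together with the check that a base-case leaf interval $[l',r']$ (with $r'-l' < \ell-1$) is always contained in the parent's cut window $[t-\ell/2, t+\ell/2-1]$, is correct and does close this gap. In short: same proof route as the paper for the approximation ratio, plus a welcome supplementary argument for correctness.
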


%\subsubsection*{Approximation Factor $4 \log_2 \ell$}
Next we prove that the previous approximation  algorithm can be used to obtain an approximation algorithm of factor $2 \log_2(2\ell)$ for \TempCut{$\ell$}.
%\ml{
We assume that $\tau$ is a multiple of $2\ell$ (if not, increase $\tau$ to the next multiple of $2\ell$, with no temporal edges existing in the extra times appended).
%}
Consider the time domain $[1,\tau]$ and computes
the following partitions of intervals of $[1,\tau]$:
\begin{itemize}
\item $P_1 = [1, 2\ell], [2\ell + 1, 4\ell], ..., [\tau - 2\ell+1, \tau]$; \quad
%\item 
$P_2 = [\ell, 3\ell], [3\ell + 1, 5\ell], ...,  %\ml{
[\tau - 3 \ell, \tau - \ell]$
%\item $P_2 = [\ell/2, 5\ell/2] [5\ell/2 + 1, 9\ell/2] ...$
\end{itemize}
Each interval of length $\ell$
is contained in one of the interval of the two partitions and moreover two graphs defined on distinct intervals of a set $P_i$ are temporal edge disjoint.
Recall that $G(I)$ is the temporal graph defined on interval $I$.

\begin{restatable}{lemma}{lemapprox} 
\label{lem-approx1}
\label{lem-approx-disjoint}
Consider the sets $P_1$ and $P_2$, then
(1) for each $I=[t,t+\ell-1]$  of length $\ell$, 
there is an interval of $P_1$ or $P_2$
that contains it;
%\end{restatable}
% %\end{lemma}
% Now, consider the input temporal graph $G(I)$
% defined on the intervals $I$ of $P_i$, $i \in \{1,2\}$.
% First, we prove the following easy property.
% \begin{restatable}{lemma}{lemapproxdisjoint}
% %\end{restatable}
% %\begin{lemma}
%\label{lem-approx-disjoint}
(2) Given $I$ and $I'$ be two distinct interval of $P_i$,
then a temporal path of $G(I)$ and
a temporal path of $G(I')$ are temporal edge disjoint.
%\end{lemma}
\end{restatable}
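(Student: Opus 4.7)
The plan is to treat the two parts separately by direct case analysis on the integer arithmetic of how the intervals in $P_1$ and $P_2$ are laid out. Both parts should be essentially elementary once the intervals are written out explicitly: $P_1$ partitions $[1,\tau]$ into blocks of length $2\ell$ with boundaries $2k\ell / (2k\ell + 1)$ for $k \geq 1$, while $P_2$ is shifted by $\ell$ so that each of its intervals straddles exactly one boundary between consecutive intervals of $P_1$. The assumption that $\tau$ is a multiple of $2\ell$ guarantees that the intervals of $P_1$ tile $[1,\tau]$ exactly, so there is no edge case at the right boundary.

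For part (1), I would fix an arbitrary interval $I = [t, t+\ell-1]$ of length $\ell$ and split into two cases. If there exists $k \geq 0$ such that $[t, t+\ell-1] \subseteq [2k\ell+1, (2k+2)\ell]$, then $I$ sits inside an interval of $P_1$ and we are done. Otherwise, $I$ crosses one of the boundaries of $P_1$, meaning there exists $k \geq 1$ with $t \leq 2k\ell$ and $t+\ell-1 \geq 2k\ell+1$. From the second inequality I get $t \geq (2k-1)\ell + 2$, and from the first I get $t+\ell-1 \leq (2k+1)\ell - 1$. Hence $I \subseteq [(2k-1)\ell+1, (2k+1)\ell]$ for $k \geq 2$, or $I \subseteq [\ell, 3\ell]$ for $k=1$, both of which are intervals of $P_2$. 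This exhausts all cases and finishes part (1).

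For part (2), the key observation is that the intervals making up $P_i$ are pairwise disjoint as sets of timestamps. Indeed, for $P_1$ the $k$-th interval is $[2(k-1)\ell+1, 2k\ell]$, and consecutive intervals are glued end-to-end without sharing any integer; the same holds for $P_2$ since $[\ell,3\ell]$ is followed by $[3\ell+1, 5\ell]$, etc. Consequently, for distinct $I, I' \in P_i$, the temporal subgraphs $G(I)$ and $G(I')$ only contain temporal edges whose timestamps lie in disjoint sets, so they have no temporal edges in common, and therefore any temporal path in $G(I)$ and any temporal path in $G(I')$ are temporal-edge disjoint.

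I do not foresee a genuine obstacle here; the argument is essentially bookkeeping about where the boundaries of $P_1$ fall relative to those of $P_2$. The only mild care needed is to handle the first interval of $P_2$, which starts at $\ell$ rather than $\ell+1$, and to note that the divisibility assumption on $\tau$ eliminates any awkwardness at the right end of the time domain.
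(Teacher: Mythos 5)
Your proof is correct and takes essentially the same approach as the paper: an elementary case split on the integer arithmetic of the interval boundaries for part (1), and the observation that distinct intervals of $P_i$ have disjoint timestamp sets for part (2). If anything, your treatment of part (1) is slightly more careful than the paper's, since you derive the tighter bound $t \geq (2k-1)\ell+2$ from the boundary-crossing condition and explicitly flag the small irregularity of the first $P_2$ interval starting at $\ell$ rather than $\ell+1$.
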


Now, we run Algorithm \ref{alg:ApproxEdge} on each $G(I)$, with $I \in P_i$ and we compute a feasible solution since by Lemma \ref{lem-approx1} each interval $I$ of length $\ell$ is contained in an interval of $P_1$ or $P_2$.
The $ 2 \log_2(2\ell)$-approximation factor
is due to the fact that each interval $I$ has length $2 \ell$ 
%\ml{
and we approximate each with a factor of $\log_2(2\ell)$.
Then by Lemma \ref{lem-approx-disjoint}
two intervals $I$ and $I'$ of $P_i$, are temporal edge disjoint,
thus each temporal edge removed by an optimal solution belongs
to at most one interval of $P_1$ and at most one interval of $P_2$.  %\ml{
Because we remove edges from both the intervals in $P_1$ and $P_2$, our approximation factor is multiplied by two.
%}
% the temporal edges in
% sets $OPT(G(I))$ and $OPT(G(I'))$ are disjoint, thus
% Consider an optimal solution $OPT$
% of \TempCut{$\ell$}
% and let $OPT(G(I))$ be the set of temporal edges
% in $OPT$ defined in some timestamp of $I$.
% The approximation algorithm with input $G(I)$ returns an approximated solution $E(I)$
% $
% |E(I)| \leq 2 \log_2(\ell) |OPT(G(I))|,
% $
% since each interval $I$ contains $2 \ell$ timestamps.
% by Lemma \ref{lem-approx-disjoint},
% %it follows that, for any two intervals 
% $I$ and $I'$ of $P_i$, the temporal edges in
% sets $OPT(G(I))$ and $OPT(G(I'))$ are disjoint, thus
% \begin{equation}  
% \label{eq:approx-l}
% \sum_{I \in P_i} |E(I)| \leq 2 \log_2(\ell) \sum_{I \in P_i} |OPT(G(I))| \leq 2 \log_2(\ell) |OPT|
% \end{equation}

% Now, since Equation \ref{eq:approx-l} holds
% for both $P_1$ and $P_2$, it follows that
% % \begin{equation}  
% % \label{eq:approx-l-2}
% \[
% \sum_{i=1}^2\sum_{I \in P_i} |E(I)| \leq 2 \log_2(\ell) 
% \sum_{I \in P_1}
% |OPT(G(I))| + 2 \log_2(\ell) \sum_{I \in P_2}|OPT(G(I))| \leq 4 \log_2(\ell) |OPT(I)|.
% \]
% %\end{equation}
% Now, the set $E'$ returned by the approximation algorithm is a feasible solution of the problem
% (that is $E'$ cuts every temporal path of time travel at most $\ell$). Indeed by Lemma \ref{lem-approx1} each interval $I$ of length $\ell$ is contained in an interval of $P1$ or $P_2$,
% and Algorithm \ref{alg:ApproxEdge} computes a 
% cut in every $G(I)$.

\begin{restatable}{corollary}{corEdgeApprox}
\label{corEdgeApprox}
\TempCut{$\ell$} admits a $2 \log_2 (2 \ell)$-approximation algorithm.    
\end{restatable}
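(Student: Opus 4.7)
The plan is to verify that the two-level scheme described immediately before the corollary is correct and achieves the stated ratio, relying only on Theorem~\ref{logTapprox} (applied locally inside each length-$2\ell$ window) and Lemma~\ref{lem-approx1} (which gives coverage and edge disjointness for $P_1 \cup P_2$). Let $E'_I$ denote the cut produced by Algorithm~\ref{alg:ApproxEdge} when run on $G(I)$, let $\widehat{E} = \bigcup_{I \in P_1 \cup P_2} E'_I$ be the final output, and let $Opt$ denote a minimum $(s,z,\ell)$-temporal cut of $G$.

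First I would check feasibility. Any $(s,z)$-temporal path of travelling time at most $\ell$ has all of its timestamps contained in some length-$\ell$ window, which by Lemma~\ref{lem-approx1}(1) is contained in an interval $I \in P_1 \cup P_2$. The path therefore lies entirely inside $G(I)$, and since $E'_I$ is an $(s,z,\ell)$-temporal cut of $G(I)$, one of its edges is removed; hence $\widehat{E}$ is an $(s,z,\ell)$-temporal cut of $G$.

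Next I would bound $|\widehat{E}|$ against $|Opt|$. For each $I \in P_1 \cup P_2$, let $Opt_I$ be a minimum $(s,z,\ell)$-temporal cut of $G(I)$. Any $\ell$-temporal path in $G(I)$ is also an $\ell$-temporal path in $G$, so $Opt \cap E(G(I))$ is a feasible cut for $G(I)$, giving $|Opt_I| \le |Opt \cap E(G(I))|$. Since $G(I)$ spans exactly $2\ell$ timestamps, Theorem~\ref{logTapprox} specialized to this subgraph yields $|E'_I| \le \log_2(2\ell) \cdot |Opt_I|$. By Lemma~\ref{lem-approx1}(2), within a single partition $P_i$ the edge sets $E(G(I))$ are pairwise disjoint, so $\sum_{I \in P_i} |Opt \cap E(G(I))| \le |Opt|$. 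Combining these inequalities,
\[
|\widehat{E}| \;\le\; \sum_{I \in P_1 \cup P_2} |E'_I| \;\le\; \log_2(2\ell) \sum_{i=1}^{2} \sum_{I \in P_i} |Opt \cap E(G(I))| \;\le\; 2\log_2(2\ell) \cdot |Opt|.
\]

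The only delicate step is confirming that Theorem~\ref{logTapprox} specializes cleanly with $\tau$ replaced by $2\ell$ inside each window: Algorithm~\ref{alg:ApproxEdge} operates by recursively halving the active time domain, so when invoked on $G(I)$ with $|I|=2\ell$ its recursion depth is at most $\log_2(2\ell)$, which is the only place the original $\log_2 \tau$ factor enters. The rest of the argument is the double-counting bookkeeping that produces the factor of~$2$ from combining $P_1$ and $P_2$, which I have carried out above.
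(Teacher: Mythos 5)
Your proof is correct and follows essentially the same route as the paper: run Algorithm~\ref{alg:ApproxEdge} on each length-$2\ell$ window of $P_1$ and $P_2$, invoke Theorem~\ref{logTapprox} with $\tau$ replaced by $2\ell$, and exploit the disjointness and coverage from Lemma~\ref{lem-approx1} to charge each optimal edge at most twice. You are in fact slightly more careful than the paper in making explicit that $Opt \cap E(G(I))$ is itself a feasible cut for $G(I)$ and hence dominates the local optimum, a step the paper leaves implicit.
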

% \begin{corollary}
% %    \ml{
%     The \TempCut{$\ell$} admits admit a $2 \log_2 (2 \ell)$-approximation algorithm.
% %    }  
%     %\ml{[OR IS IT $2 \log_2 (2 \ell)$?]}
% \end{corollary}

\medskip

\noindent 
\textbf{Conclusion.}  We briefly conclude with a few open problems: (1) is the \TempSep{$\ell$} problem NP-hard on graphs of treewidth at most three?  
(2) is there a $O(\ell)$-approximation for \TempSep{$\ell$} (non strict variant)?
(3) Is there a $O(1)$-approximation for the \TempCut{$\ell$} problem?  (4) Is the \emph{strict} variant of the \TempCut{$\ell$} NP-hard?

\bibliography{lipics-v2021-sample-article.bib}

\newpage

\section*{Appendix}

\subsection*{Proof of Lemma \ref{lemboundedpathwidth}}

\lemboundedpathwidth*

\begin{proof}
    % \ml{[if we need space, we can send this to appendix, and just say that $pw = 4$ can be achieved by putting $s, z, r$ in every bag and handling the other vertices appropriately.]}
    For the pathwidth, we describe an explicit sequence of bags that form the path decomposition.  
    First, start with a bag containing $\{s, z, r\}$.  These three vertices will be in every subsequent bag.  

    Then, for $i = 1, 2, \ldots, k$ in order, handle the vertices of the $V_i$ gadget as follows.  We assume inductively that before handling the $i$-th gadget, the last bag in the sequence is $\{s, z, r\}$ (which will also hold after handling the gadget).
    Let $V_i = \{u_1, \ldots, u_n\}$, ordered as in the construction.
    Add to the sequence the bags
    \begin{align*}
    & \{s, z, r, \uout{u}_1\} & \mbox{(introduce $\uout{u}_1$)} \\
    & \{s, z, r, \uout{u}_1, p_i\} & \mbox{(introduce $p_i$)} \\
    & \{s, z, r, \uout{u}_1\} & \mbox{(forget $p_i$)} \\
    & \{s, z, r, \uout{u}_1, \uin{u}_1\} & \mbox{(introduce $\uin{u}_1$)} \\
    & \{s, z, r, \uout{u}_1\} & \mbox{(forget $\uin{u}_1$)} \\
    & \{s, z, r, \uout{u}_1, \uout{u}_{2}\} & \mbox{(introduce $\uout{u}_{2}$)} \\
    & \{s, z, r, \uout{u}_{2}\} & \mbox{(forget $\uout{u}_{1}$)} \\
    & \{s, z, r, \uout{u}_{2}, \uin{u}_{2}\} & \mbox{(introduce $\uin{u}_{2}$)} \\
    & \ldots \\
    & \{s, z, r, \uout{u}_n, \uin{u}_n\} & \mbox{(introduce $\uin{u}_n$)} \\
    & \{s, z, r, \uout{u}_n\} & \mbox{(forget $\uin{u}_n$)} \\
    & \{s, z, r\} & \mbox{(forget $\uout{u}_n$)} \\
    \end{align*}
In other words for $h \in [n - 1]$, the pattern is to introduce $\uout{u}_h$, then introduce $\uin{u}_h$, forget it, introduce $\uout{u}_{h+1}$, forget $\uout{u}_h$, and repeat the process (with a special case for $p_i$ initially).
It is clear that the bags containing a $\uout{u}_h$ or $\uin{u}_h$ vertex form a connected subgraph, as they are introduced and forgotten once.  To see that the edges of $H$ incident to a gadget vertex all occur in some bag, note that the vertices adjacent to a $\uin{u}_h$ vertex are $\uout{u}_h, s, z$, and $r$.  These are all part of a common bag with $\uin{u}_h$ at some point, so these adjacencies are covered.
The vertices adjacent to a $\uout{u}_h$ vertex are among $\uin{u}_h, s, z$, $\uout{u}_{h-1}$ if $h > 1$, and $\uout{u}_{h+1}$ if $h < n$, which can all be seen to be covered by the sequence (also note that $p_i$ is correctly handled as well).

Then, again assuming we have reached a $\{s, r, z\}$ bag, we handle each edge $e = \{u, v\}$ of $G$ in an arbitrary order.  This can simply be done by adding to the sequence of bags 
\begin{align*}
    & \{s, z, r, w_u^e, w_v^e\} & \mbox{(introduce $w_u^e, w_v^e$)} \\
    & \{s, z, r, w_v^e\} & \mbox{(forget $w_u^e$)} \\
    & \{s, z, r, w_u^e, q_e\} & \mbox{(introduce $q_e$)} \\
    & \{s, z, r\} & \mbox{(forget  $w_v^e, q_e$)}
    \end{align*}
    One can easily check that the adjacencies of $w_u^e$, $w_v^e$, and $q_e$ are covered by this sequence of bags.

    The largest bag has $5$ elements, and so the pathwidth of $H$ is at most $4$.
\end{proof}

%%%%%%%%%%%%%%%%%%%%%%%%%%%%%%%%%%%%%%%%%%%%%%%%%%%%

\subsection{Proof of Lemma~\ref{lem:pwhard}}

\lempwhard*

\begin{proof}
Suppose that $G$ admits a multicolored independent set $I = \{u_1, \ldots, u_k\}$ where, for each $i \in [k]$, $u_i$ is in $V_i$.  

We first observe that in $H$, there are $t_{green}$ temporal paths that traverse $s$, $u^-$, $ u^+$ and $z$ and $s$, $w_u^e$, $w_v^e$ and $z$.  All these temporal paths are vertex-disjoint and it suffices it remove one vertex from each to destroy all $t_{green}$ paths.  The same reasoning applies to $t_{red}$ temporal paths: it suffices to remove one of $p_i, u_1^-, \ldots, u_n^-$ to destroy all such paths.

In $H$, for every $i \in [k]$, delete $\uout{u}_i$
(recall $u_i \in I$).  Then for every $u \in V_i \setminus \{u_i\}$, delete $\uin{u}$ instead.  Therefore, for each $i \in [k]$, $\uin{u}_i$ is the only ``plus'' vertex kept.
At this point, for $u \in V(G)$, any $t_{green}$ temporal path from $s$ to $z$ that contains a $\uin{u}$ or $\uout{u}$ vertex is destroyed, since we deleted one of the two.
Moreover, because we deleted  $\uout{u}_i$ for each $i \in [k]$, any temporal $t_{red}$ path is destroyed.
As for the $t_u$ temporal paths, we note for later use that 
there is a temporal path from $s$ to $r$ with time $t_u$ that goes through at least one vertex $u$ of a $V_i$ gadget  if and only if $u = u_i \in I$. This is because any such path needs to go through a ``plus'' vertex, and those are only kept for $\uin{u}_i$ vertices with $i \in [k]$. 

Next, consider an edge $e = uv$ of $G$, where $u \in V_i$ and $v \in V_j$ with $i < j$.  
Suppose first that $v \notin I$.  Then we delete $w_u^e$.
This destroys the $t_{green}$ temporal path that traverses
$s$, $w_u^e$, $ w_v^e$ and $z$.
Note that there is the temporal subpath that traverses 
$r$, $w_v^e$, $q_e$, $z$ at time $t_v$.
To see that no $s - z$ temporal path at time $t_v$ involving $w_v^e$ is possible, observe that such a temporal path would need to take the temporal edge $(s, v^+, t_v)$, which is the only edge at time $t_v$ incident to $s$.  However we have removed $v^+$ since $v \notin I$, and therefore there is no such temporal path.
Next, if $v \in I$, then $v = u_i$ for some $i \in [k]$.  Because $I$ is an independent set, we have $u \notin I$.  In this case, delete $w_v^e$.  This also destroys the $t_{green}$ temporal path going through $w_u^e$ and $w_v^e$.  By the same reasoning as in the previous case, $u \notin I$ implies that we deleted $u^+$ and that no temporal path with time $t_u$ goes through $w_u^e$.

It follows that the set of deleted vertices remove all temporal paths of travel time $\ell = 1$.
For each $u \in V(G)$, we have deleted exactly one of $\uout{u}$ or $\uin{u}$.  Moreover, for each $e \in E(G)$, we have deleted exactly one of $w_u^e$ or $w_v^e$.  Thus the number of deletions is $|V(G)| + |E(G)|$, as desired.

Conversely, suppose that all temporal paths of travel time $1$ can be destroyed with at most $|V(G)| + |V(E)|$ vertex deletions.
Note that for each $u \in V(G)$, one of $\uout{u}$ or $\uin{u}$ must be deleted to destroy the $t_{green}$ temporal path. 
Likewise, for each edge $e = \{u, v\}$ of $G$, one of $w_u^e$ or $w_v^e$ must be deleted to destroy the $t_{green}$ temporal path.
These amount to at least $|V(G)| + |V(E)|$ vertex deletions, so it follows that we cannot delete both $\uout{u}$ and $\uin{u}$ in the first case, or both $w_u^e$ and $w_v^e$ in the second case, so there is exactly one deletion per pair.  Importantly, we also note that $r$ cannot be deleted and that no $p_i$ or $q_e$ vertex can be deleted.

Next, for $i \in [k]$, notice that because of the $t_{red}$ temporal path, there must be at least one $u \in V_i$ such that $\uout{u}$ is deleted, and therefore such that $\uin{u}$ is not deleted.  
For $i \in [k]$, let $u_i$ be the vertex of $V_i$ such that $\uin{u}_i$ is not deleted (if multiple choices arise, choose arbitrarily).

We claim that $I = \{u_1, \ldots, u_k\}$ is an independent set of $G$.
Assume instead that $e = u_i u_j$ is an edge of $G$, for some $i < j$, so that $u_i^+$ and $u_j^+$ are in the remaining temporal graph.
Consider the $w_{u_i}^e$ and $w_{u_j}^e$ vertices, one of which is deleted and the other is not.
If $w_{u_i}^e$ is not deleted, then there is a $t_{u_i}$ temporal path traversing $s$, $ \uin{u}_i$, $r$ $w_{u_i}^e$ and $z$.  
If $w_{u_j}^e$ is not deleted, then there is a $t_{u_j}$ temporal path that traverses $s$, $\uin{u}_j$, $r$, $w_{u_j}^e$, $q_e$ and $z$.  Both cases lead to a contradiction, so $e$ cannot exist, and therefore $I$ is a multicolored independent set.
\end{proof}

%%%%%%%%%%%%%%%%%%%%%%%%%%%%%%%%%%%%%%%%%%%%%%%%%%%%%%

\subsection*{Proof of Lemma \ref{lem:pwstructure}}

\pwstrcuture*

\begin{proof}
	Let $X_1, X_2, \ldots, X_p$ be the sequence of bags of any nice path decomposition of $G$, where each bags has at most $4$ vertices. 
	We claim that some bag contains both $s$ and $z$.  Since vertices are introduced one at a time, one of $s$ or $z$ is introduced before the other.  Assume that $s$ is introduced in some bag $X_p$, before $z$ is introduced, and let $X_q$ be the bag in which $s$ is forgotten, with $p < q$.  If no bag contains both $s$ and $z$, then $z$ is introduced in $X_{q+1}$ or later, meaning that $X_q$ is an $(s, z)$-separator of size $|X_q| \leq 3$, a contradiction.  So $s$ and $z$ are together at some point.

    Assume that the sequence $X_1, \ldots, X_p$ is chosen so that, among the possibilities, the first bag $X_c$ that contains both $s$ and $z$ has maximum size and, under this condition, the last bag $X_d$ that contains both $s$ and $z$ has maximum size.
    Notice that $[c, d] = \{c, c+1, \ldots, d\}$ is the interval of bags that contain both $s$ and $z$.  
    Also note that one of $s$ or $z$ is introduced in $X_c$ (and was thus absent from $X_{c-1}$) and one is forgotten in $X_{d+1}$.  In particular, $X_{c-1}$ and $X_{d+1}$ exist.  
 
    We show that$|X_c| = |X_d| = 4$.  
    Suppose that $|X_c| \leq 3$, and note that $X_{c-1}$ has one less vertex.  
    Let $X_e$ be the first introduce bag that occurs after $X_c$, i.e., $e > c$ and is the minimum value such that $X_e$ introduces a vertex, say $x$.  If no such $X_e$ exists, then the vertex $s$ or $z$ introduced in $X_c$ has only members of $X_c$ in its neighborhood, because no further vertex is introduced,  and so there would be an $(s, z)$-separator of size at most $|X_c \setminus \{s, z\}| \leq 1$.
    So $X_e$ exists, and in that case we can simply introduce $x$ earlier.  That is, modify the sequence of bags by adding, after $X_{c-1}$ and before $X_c$, a new bag $X_{c-1} \cup \{x\}$ that introduces $x$, and then adding $x$ to each bag $X_{c}, X_{c+1}, \ldots, X_{e-1}$.  Then remove $X_e$ since it has become equal to $X_{e-1}$.  One can easily see that this results in another nice path decomposition, and the fact that $|X_c| \leq 3$ and that no vertex is introduced until $X_e$ implies that the bag sizes remain bounded by $4$.  
    Moreover, this increases the size of the first bag that has both $s$ and $z$, contradicting our choice of bag sequence.  So we have $|X_c| = 4$. 
    
    Notice that in the above argument, modifying the sequence of bags does not decrease the size of the last bag that contains both $s$ and $z$.  We may thus assume that $|X_c| = 4$, reverse the sequence of bags, apply the argument to make the first bag with $s$ and $z$ in this reversed sequence also have size $4$ (without affecting the last such bag), resulting in $|X_c| = |X_d| = 4$.

    Next, we argue that $d > c$.
	Let us suppose that $c = d$.  If $X_c$ introduces $s$ and $X_{c+1}$ forgets $s$, then 
	$s$ has at most three neighbors and they are all in $X_c$, in which case an $(s, z)$-separator of size at most $|X_c \setminus \{s, z\}| = 2$ would exist.
	The same holds if $X_c$ introduces $z$ and $X_{c+1}$ forgets it.  
	So assume that $X_c$ introduces $z$ and $X_{c+1}$ forgets $s$.
	In that case, the properties of path decompositions imply that $X_c \setminus \{s, z\}$ must be an $(s, z)$-separator, a contradiction (this is because a path from $s$ to $z$ must start with a vertex in $X_1, \ldots, X_c$ and end with a vertex in $X_c, \ldots, X_p$, and must thus use a vertex of $X_c$).  We deduce that $d > c$.
    %we claim that $X_c \setminus \{s, z\}$ is an $(s, z)$-separator.  To see this, assume that some $s - z$ path $Q = s - v_1 - \ldots - v_q - z$ does not contain any element of $X_c \setminus \{s, z\}$. % 
%	Then $v_1 \notin X_c$, and $sv_1 \in E(G)$ implies that $v_1$ appears in a bag $X_b$ with $b < c$.  In a similar fashion, $v_q \notin X_c$ and appears in a bag $X_e$ with $e > c$.  Because the bag $X_c$ is sandwiched between $X_b, X_e$, the set $X_c$ is a $v_1 - v_q$ separator.  Moreover, the path $v_1 - \ldots - v_q$ does not contain $s$ or $z$, and so it must go through some element of $X_c \setminus \{s, z\}$, a contradiction.  We deduce that $X_c \setminus \{s, z\}$ is an $s - z$ separator of size at most $2$, a contradiction.  The case where $X_c$ introduces $s$ and $X_{c+1}$ forgets $z$ can be handled in the same manner.  Therefore $d > c$ must hold.

    We can finally define $A$ and $C$.  
    Since we know that $|X_c| = 4$, denote $X_c = \{s, z, x, y\}$ for some vertices $x, y$.  Because $d > c$, we have $s, z \in X_{c+1}$, and so the bag $X_{c+1}$ cannot introduce a new vertex. Thus it forgets $x$ or $y$, let us say $y$.
    In that case, define $A = (X_1 \cup \ldots \cup X_c) \setminus \{s, z, x\}$ and put $w_1 = x$.  Notice that $A \neq \emptyset$ and $y$ is the only vertex of $A$ that can be a neighbor of the vertex $s$ or $z$ that is introduced in $X_c$.  This justifies the first part of our statement regarding $A$.  Also note that no vertex introduced later than $X_c$ can have a neighbor in $A$, and so $x = w_1$ is the only vertex outside of $A \cup \{s, z\}$ that can have neighbors in $A$ (in particular, $y$ is not problematic since it is forgotten in $X_{c+1}$).  
 
    To define $C$, let $X_{d} = \{s, z, x', y'\}$. 
    Then $X_{d}$ cannot be a forget bag, and must thus introduce one of $x'$ or $y'$, say $y'$.  Define $C = (X_d \cup X_{d+1} \cup \ldots \cup X_p) \setminus \{s, z, x'\}$ and $w_k = x'$.  Notice that $y'$ is the only vertex of $C$ that can be a neighbor of the vertex forgotten in $X_{d+1}$, which is either $s$ or $z$.  
    This justifies the first part of the statement regarding $C$.  Also note that $x' = w_k$ is the only vertex outside of $C \cup \{s, z\}$ that can have neighbors in $C$.
	
    Finally, let $B = V(G) \setminus (A \cup C \cup \{s, z\})$.
    The previous arguments imply that our statement on the neighborhood of $w_1$ and $w_k$ holds, and it remains to argue that $B$ induces a caterpillar whose main path starts at $w_1$ and ends at $w_k$.
    To see this, we note that a path decomposition of $G[B]$ can be obtained by the sequence of bags $X_c \setminus \{s, z\}, \ldots, X_d \setminus \{s, z\}$.  Each bag has at most two elements, and thus $G[B]$ has pathwidth $1$.  It is well-known~\cite{proskurowski1999classes} that such graphs contain only connected components that are caterpillars.  We note that $G[B]$ must be connected, as otherwise $G - \{s, z\}$ cannot be connected, whereas we assume that $G$ is clean.  Thus $G[B]$ is a caterpillar, and it can be seen that the ends of the main path can be defined as $w_1$ and $w_k$ since the first and last bag of the sequence only contain $w_1$ and $w_k$, respectively (in the caterpillar, $w_1$ could be either the end of the main path, or a leaf adjacent to an end of the main path --- in the latter case we can simply extend the main path by adding $w_1$ to it; the same applies to $w_k$).  
    % \textbf{[From this point, not sure the proof is clear!]}
    % Note that $B$ only consists of vertices that appear in a bag $X_c, X_{c+1}, \ldots, X_{d}$ other than $s$ or $z$ (and $y, y'$).  
    % Because of this, no three vertices of $B$ can appear in the same bag .  It follows that the subgraph of $G$ induced by $B$, which uses the 
    % decomposition $X_c \cap B, X_{c+1} \cap B, \ldots, X_{d} \cap B$, has pathwidth at most $2$.  It is well-known that in such a graph, every connected component is a caterpillar.  
    % In fact, $G[B]$ must have has a single connected component: if there were two connected components, then $G[A \cup B \cup C]$ would have two connected components \textbf{because $A \cup C$ cannot help connect them}, which in turn would contradict that $G - \{s, z\}$ is connected, and in turn contradict taht $G$ is clean.
    % Hence, $B$ induces a single caterpillar.
    % [TODO: argue that $w_1, w_k$ are the ends of the main path]
\end{proof}

\subsection*{Proof of Lemma \ref{lem:pwdels}}

\pwdels*

\begin{proof}
    Let $D \subseteq V(G)$ be a minimum $(s, z, \ell)$-temporal separator.  
    Suppose $D$ contains at least two vertices $u, u' \in A$.  Let $v$ be a vertex of $A$ such that $N(s) \cap A \subseteq \{v\}$ or $N(z) \cap A \subseteq \{v\}$, which is guaranteed to exist by Lemma~\ref{lem:pwstructure}. 
    Aside from $s, z$, recall that $w_1$ is the only vertex outside of $A$ to have neighbors in $A$.  Therefore,  all $s-z$ paths of $G$ that contain a vertex of $A$ must go through $v$ or $w_1$, whether these paths are temporal or not.  In particular, any path that goes through $u$ or $u'$ also goes through $v$ or $w_1$.  Thus $(D \setminus \{u, u'\}) \cup \{v, w_1\}$ also cuts such a path.  We may thus assume that the two deleted vertices are $v$ and $w_1$ instead of $u$ and $u'$, and thus there is a single deletion in $A$.
    Under that assumption, notice that we may apply the same argument if two vertices of $C$ are deleted, and so we may assume a single deletion or less in each of $A$ and $C$.

    Next, suppose that some element $y$ of $B$ is in $D$, but $y$ is not on the main path of the caterpillar.  
    Then $N(y) \subseteq \{s, z, w_i\}$ for some $w_i$ of the main path. 
    Because $G$ is clean, $[(s, y,t), (y,z,t')$ is not an $(s, z, \ell)$-temporal path, and thus any $s - z$ path that goes through $y$ must also go through $w_i$.  If $w_i \in D$, then we can just remove $y$ from $D$.  If instead $w_i \notin D$, then $(D \setminus \{y\}) \cup \{w_i\}$ cuts all the $s - z$ paths that contain $y$.  Since this does not alter the intersection of $D$ with $A$, with $C$, or with other vertices of the main path, we can apply this reasoning to each $y$ of $D \cap B$ not on the main path, and satisfy at once all the requirements of the lemma on the intersection of $D$ with $A, B, $ and $C$.
\end{proof}

%%%%%%%%%%%%%%%%%%%%%%%%%%%%%%%%%%%%%%%%%%%%%%%%%%%%%%%%%%%%%%

\subsection*{Proof of Theorem~\ref{thm:pw3}}

\thmpwthree*

\begin{proof}
	Let us argue the correctness of the algorithm.  We assume a pre-processing step that ensures that we only run the algorithm on a clean graph $G$.  

    Let us first argue that when the routine $extendSeparator$ does not return ``impossible'', then the returned set $D$ is an $(s, z, \ell)$-temporal separator.
    For clarity, denote by $D_0$ the set $D$ initially received as input, then by $D_i$ the contents of the set $D$ after the $i$-th iteration of the main loop of the routine, for $i = 1, 2, \ldots, k-1$, and finally denote by $D_k$ the set returned by the algorithm.
    
    By the initial check made by $extendSeparator$, we know that $G - (D_0 \cup B)$ has no $(s, z, \ell)$-temporal path.  Thus any $(s, z, \ell)$-temporal path must use vertices of $D_0  \cup B$.
    One can then see inductively that there is no $(s, z, \ell)$-temporal path in $G[W_i] - D_i$ for each $i = 1, \ldots, k - 1$.  Indeed, for $i = 1$, if we don't add $w_1$ to $D_1$, this is because $G[W_1] - D_0$ already had no temporal path, and if we add $w_1$ to $D_1$, a remaining temporal path of $G[W_1] - D_1$ would only involve $A \setminus D_0$, which we excluded at the start of the routine (that temporal path cannot go through leaf neighbors of $w_1$ in the caterpillar, as we cleaned temporal paths of length $2$ going through them, and they are not connected to $A$).   
    For larger $i$, assuming that $G[W_{i-1}] - D_{i-1}$ has no temporal path, we get that any temporal path of $G[W_i] - D_{i-1}$ must include $w_i$; but if this happens we include $w_i$ in $D_i$.  Therefore, our assumption holds after every iteration of the main loop.
    After the main loop, if any $(s, z, \ell)$-temporal path remains in $G - D_{k-1}$, it must include $w_k$ (if the temporal path uses vertices of $C$, that temporal path must use $B$ and thus go through $w_k$ in that temporal path, and if the temporal path does not use $C$, we know that $G[W_{k-1}] - D_{k-1}$ has no temporal path, so it must use $w_k$).  If that happens, we put $w_k$ in $D_k$, cutting all possible temporal paths.  Thus when $D_k$ is returned it must be an $(s, z, \ell)$-temporal separator.  It follows that the set returned by $getTemporalSeparator$ is indeed an $(s, z, \ell)$-separator and its size is at most the size of a minimum  $(s, z, \ell)$-separator (unless every call to $extendSeparator$ returns ``impossible'', which we next exclude).
% \rd{I would deleted from and ``and''; $D^*$ is defined
% in the next paragraph} \ml{thx, removed reference to $D^*$}

    We next show that $getTemporalSeparator$ returns a solution of minimum size.
    Let $D^*$ be a minimum $(s, z, \ell)$-separator of $G$, chosen so that there is at most one deletion 
	in each of $A$ and $C$, and such that $D^* \cap B \subseteq \{w_1, \ldots, w_k\}$ (such a $D^*$ exists by Lemma~\ref{lem:pwdels}).
	Moreover, refine the choice of $D^*$ among the possibilities as follows: let $\{w_{a_1}, w_{a_2}, \ldots, w_{a_p}\} = D^* \cap B$, 
	where $a_1 < a_2 < \ldots < a_p$, and choose $D^*$ such that the vector $(a_1, a_2, \ldots, a_p)$ is lexicographically maximum (that is, the first vertex $w_{a_1}$ to delete is as late as possible, then among the choices, $w_{a_2}$ is as late as possible, and so on).

    Consider the iteration of $getTemporalSeparator$ that guesses the same deletions $D \in \mathcal{D}$ in $A$ and $C$ as $D^*$, so that  $D \cap A = D^* \cap A$ and $D \cap C = D^* \cap C$.  This will happen since we try every possibility.
    When $D$ is passed as input to $extendSeparator$, because $D^*$ contains only elements of $D$ and $B$, we have that $G - (D \cup B)$ has no $(s, z, \ell)$-temporal path, so when we call $extendSeparator$ with $D$ as input, it will not return ``impossible''.
    Again denote $D_0 = D$ as the input to $extendSeparator$, and $D_i$ as the $D$ set after the $i$-th iteration of the main loop, and $D_k$ the returned set.

   Note that $D_k$ is obtained by only adding the $w_i$ vertices to $D$, so we have $D_k \cap A = D^* \cap A$ and $D_k \cap C = D^* \cap C$.     
   We just need to argue that $D_k \cap B = D^* \cap B$.
	If $D^* \cap B = \emptyset$, then there is no $(s, z, \ell)$-temporal path in $G - D$, and thus no such path in any $G[W_i] - D_0$.  Hence our algorithm will never delete any element of $B$ and returns the correct deletion set.
	So assume that $D^* \cap B \neq \emptyset$. 
	Let $w_{b_1}, w_{b_2}, \ldots, w_{b_q}$ be the vertices in $D_k \cap B$, with $b_1 < b_2 < \ldots < b_q$.
	Note that $q \geq p$, by the optimality of $D^*$ and our previous argument that $D_k$ is an $(s, z, \ell)$-temporal separator.
	We claim that the $w_{a_i}$'s and the $w_{b_i}$'s form the same set of vertices.
	Assume otherwise, and look at the first difference, that is, assume that there is some $i \in [q]$ such that $i > p$, or such that $w_{b_i} \neq w_{a_i}$, and let $i$ be the smallest such index.
	
    Suppose first that $i > p$, so that $i = p+1$.  Then $w_{a_1} = w_{b_1}, \ldots, w_{a_p} = w_{b_p}$, meaning that our algorithm first deletes the same set of vertices as $D^*$, then adds an additional vertex $w_{b_p+1}$.  This can only happen if, after the first $p$ deletions, the algorithm still finds an $(s, z, \ell)$-temporal path in $G[W_{b_p+1}] - D_{b_p}$ (or $G - D_{b_p + 1}$ if this happens after the $for$ loop).  This would imply that $D^*$ does not destroy every $(s, z, \ell)$-temporal path, a contradiction.  
	
	So $i \leq p$ must occur and $a_i \neq b_i$.  Suppose that $a_i > b_i$.  This means that the algorithm and $D^*$ make the same first $i - 1$ deletions on the main path, and that there is an $(s, z, \ell)$-temporal path in $G[W_{b_i}] - D_{b_i - 1}$. 
    But $a_i > b_i$ implies that $D^*$ does not remove any vertex of $W_{b_i}$ other than $D_{b_i - 1}$, and so it contains that $(s, z, \ell)$-temporal path, a contradiction.
    
    Finally, suppose that $a_i < b_i$. In $D^*$, we can replace $w_{a_i}$ with $w_{a_i + 1}$ and get another solution: the algorithm did not add $w_{a_i}$, so there is no $(s, z, \ell)$-temporal path in $G[W_{a_i}] - D_{a_i - 1}$.  Thus,  all $(s, z, \ell)$-temporal paths of $G[W_{a_i}] - D_{a_i - 1}$ that use $w_{a_i}$ also use $w_{a_i + 1}$, so the latter can be used to cut all paths that $w_{a_i}$ cuts. 
    After doing this replacement, we obtain an alternate solution with higher lexicographical order, contradicting the choice of $D^*$.
    
    It follows that we may assume $D_k = D^*$.  Since the main algorithm $getTemporalSeparator$ returns that $D_k$ or better, combined with the argued fact that it always returns some $(s, z, \ell)$-temporal separator, we deduce that the algorithm is correct.

    \medskip 

    \noindent
    \textbf{Complexity.}  Recall that there is an implicit cleaning step, which requires checking for an $(s, z)$-separator of size $3$ or less.  For each of the $O(n^3)$ subsets of three vertices or less, we spend time $O(n + m)$ to check for connectedness, so $O(n^4 + n^3 m)$.  We can also cut all $(s, z, \ell)$-temporal paths of length $2$ without increasing this complexity, as well as computing the connected components of $G - \{s, z\}$.  We may assume that $G = \{s, z\}$ is connected, as the sum of  times for handling the components separately is not greater than if $G - \{s, z\}$ is connected.  Also note that a nice path decomposition of width $3$, if it exists, can be built in linear time~\cite{bodlaender1996efficient}.

    Then, $getTemporalSeparator$ makes $O(n^2)$ calls to $extendSeparator$.  The complexity of the latter is dominated by the time needed to check for the existence of an $(s, z, \ell)$-temporal path, up to $|B| \in O(n)$ times.  Each check can be done in time $O(nm)$, see~\cite[Lemma 2.1]{DBLP:conf/isaac/HarutyunyanKP23}, and so the routine takes time $O(n^2 m)$.  Multiplying by the number of calls, we get a total time of $(n^4m)$.  
\end{proof}

%%%%%%%%%%%%%%%%%%%%%%%%%%%%%%%%%%%%%%%%%%%%%%%%%%%%%%%%%%%%%%%%

\subsection*{Proof of Lemma \ref{lem:DMCSt}}

\lemDMCST*
\begin{proof}
(1)
Consider a solution $A'$ of \DMC{} on instance $(D,R)$ consisting of $k$ arcs.
Then we define in polynomial time a solution of 
\StTempSep{$\ell$} on the corresponding instance 
$(G,s,t, \ell)$ as follows:
%\[
$V' = \{ y_{i,j}: (v_i,v_j) \in A'\}$.
%\]
Clearly $|V'| = |A'| = k$.
Next, we show that $V'$ is a solution of \StTempSep{$\ell$}, that is there is no strict temporal path from $s$ to $z$ of travelling time at most $\ell$
in $G$.
Aiming at a contradiction, assume that in 
$G[V \setminus V']$ there exists a strict temporal path 
$P= [(s,w_1, t_1),  (w_1, w_2, t_2), \dots, (w_q, z,t_q)] $,
with $t_q - t_1 +1 \leq \ell$, from $s$ to $z$,
and note that $q \leq \ell$, since $P$ is strict. 

First, we prove some properties of $P$:
\begin{enumerate}
    \item By construction $w_1$ must be a vertex $x_{i,q}$, for some $i \in [|N|]$ and $q \in [|A|+1]$, with $v_i \in R_S$.

    \item The temporal edge
    from $s$ to $x_{i,q}$ is the following:
    $(s,x_{i,q},\ell \cdot (2i-2))$; hence
    $t_1 = \ell \cdot (2i-2)$.
    
    \item By construction $w_q$ must be a vertex $x_{j,r}$, for some $j \in [|N|]$ and 
    $r \in [|A|+1]$, with $v_j \in R_Z$
    and $v_j \in Reach(D,v_i)$.
%    \ml{$x_{j,q}??$ q is used later in the sentence}
%    \rd{Yes, it was $x$; I change it to $x_{j,r}$}

    % \item Since $t_1 = \ell \cdot (i-1)$, it follows that $t_q \leq i \cdot \ell -1$, thus 
    % by construction $(v_i, v_q) \in R$

    \item Each temporal edge of $P$, different from the first one and the last one, is of the form 
    $(x_{a,q}, y_{a,b}, t)$ or $(y_{a,b}, x_{b,r}, t)$,
    with $a,b \in [|N|]$, $q,r \in [|A|+1]$, 
    $t \in [\ell \cdot (2i-2),\ell \cdot (2i-1)-1]$.

\item By construction $y_{a,b}$ is adjacent
    only to the set of vertices 
    $\{x_{a,q}: q \in [|A|+1]\}$ and 
    $\{x_{b,r}: r \in [|A|+1]\}$;
    thus  
    if $P$ contains a temporal
    edge $(x_{a,q}, y_{a,b}, t)$, 
    then it must contain two consecutive 
    temporal edges $(x_{a,q}, y_{a,b}, t)$, 
    $(y_{a,b}, x_{j,r}, t')$, with
    $t < t'$ and 
    $t,t' \in [\ell \cdot (2i-2),\ell \cdot (2i-1)-1]$.
    Note that $a < b$ and that in particular
    $a \neq b$, since $t < t'$.
    In particular if $a = b$, then
    $t \in [\ell \cdot (2i-2),\ell \cdot (2i-1)-1]$
    and  $t' \in [\ell \cdot (2j-2),\ell \cdot (2j-1)-1]$, with $i < j$. But then  
    $t' \geq \ell \cdot (2(i+1)-2)$ and $t \leq \ell \cdot (2i-1)-1$, thus $t' -t  > \ell$.  
    % \ml{what is $i$ here?  and could $a=j$?  I guess but probably doesnt matter}
    % \rd{You're right, it can happen; maybe it's not a problem, but this is annoying, I want to avoid it so I changed the time intervals, it should be fine now}
    % (note that after a temporal
    % edge $x_{a,q} y_{a,b} t $ path $P$ cannot contain a temporal edge $y_{a,b} x_{a,p} t' $, since $t'= t$ and the strict time constraint would be violated).
    % it must contain $\{y_{i,j}, x_{j,r}, t' \}$,
    % with $t < t'$;
    % similarly if $P$ contains $\{y_{i,j}, x_{j,r}, t' \}$, then it must contain $\{x_{i,q}, y_{i,j}, t \}$, with
    % $t < t'$.
    
\item If $P$ contains  two consecutive 
    temporal edges $(x_{a,q},y_{a,b},t)$, 
    $(y_{a,b},x_{b,r}, t')$, 
    with $t < t'$ and   $t,t' \in [\ell \cdot (2i-2),\ell \cdot (2i-1)-1]$,     
    it follows that
    there exists an arc $(v_a,v_b) \in A$.
    
\end{enumerate}

Consider the vertices of $D$,
such that $x_{a,q}$ is in $P$.
We assume w.l.o.g. that these vertices are 
$v_i, v_{i,1}, \dots , v_j$.
By the above properties, in particular Points~4,~5,~6, it follows that the vertices $v_i, v_{i,1}, \dots , v_j$ induce  
a directed path in $D -  A'$.
Moreover, by Points~1,~2 it follows that 
$v_i \in R_S$ 
and by Point~3 $v_j \in R_Z$. Now,
$P$ is a path of travelling time at most $\ell$, since
$t_q - t_1 +1 \leq \ell$. 
By definition of temporal
edge $(x_{j,r}, z, t)$,
%$t_q$, 
it follows
that $t = \ell \cdot (2i-1)-1 $ and thus that $(v_i, v_j) \in R$.
We conclude that $D - A'$ contains a pair of
terminals that are not disconnected, contradicting the assumption that $A'$ is a solution of \DMC{}.
Hence after the removal of the set $V'$, $G$ does not
contain any strict temporal path of traveling time at most $\ell$ from $s$ to $t$,
thus it is a solution of \StTempSep{$\ell$}
on instance $(G,s,z, \ell)$.
%\end{proof}

% \DMCSTTwo*
% %\begin{lemma}
% \begin{proof}
(2) Consider a solution $V'$ of \StTempSep{$\ell$} 
on instance $(G,\ell)$, where $|V'|= k$.
First, we show some properties of $V'$.
Note that, for each $i \in [|N|]$, there exist
$|A|+1$ vertices $x_{i,q}$, $q \in [|A|+1]$.
Each two vertices $x_{i,q}$, $x_{i,r}$ are adjacent to the 
same set of vertices of $G$, with temporal 
edges having identical timestamps.
If some vertex $x_{i,q}$ is in $V \setminus V'$, it follows that by construction
there is no strict temporal path of arrival time
at most $\ell$ from $s$ to $z$ that traverses
$x_{i,q}$.
Then  if some vertex $x_{i,r} \in V'$, 
we can remove it from $V'$, 
and $G[(V \setminus V') \cup \{ x_{i,q}: q \in [|A|+1] \}]$ is a temporal graph that does not
contain strict $\ell$-temporal paths from $s$ to $z$,
thus $V' \setminus \{ x_{i,q}: q \in [|A|+1] \}$
is a solution of \StTempSep{$\ell$} 
on instance $(G,s,z, \ell)$,
and clearly $|V' \setminus \{ x_{i,q}: q \in [|A|+1] \}| \leq |V'|$.
Hence it holds that for each $i \in [|N|]$,
either all the vertices $x_{i,q}$, $q \in [|A|+1]$,
belong to $V'$ or no vertex $x_{i,q}$, $q \in [|A|+1]$,
belongs to $V'$.

Now, assume that $\{ x_{i,q}: q \in [|A|+1] \} \subseteq V'$.
It follows that $|V'| \geq |A| +1$. 
By construction the set of vertices 
$\{y_{i,j}: (v_i, v_j) \in A \}$
are $|A|$, since there exists a vertex
$y_{i,j}$ for each arc $(v_i,v_j) \in A$.
Thus we can compute an $(s,z,\ell)$-separator $V^*$ of $G$, such that $|V^*| < |V'|$, as follows:
\[
V^* = \{y_{i,j}: (v_i, v_j) \in A \}.
\]
Clearly $|V^*| < |V'|$, as it contains $|A|$ vertices,
while $V'$ contains at least $|A|+1$ vertices.
Furthermore, the only temporal paths from $s$ to $z$ in $G[V \setminus V^*]$ must have length two and 
pass trough a vertex $x_{i,q}$, for some
$i \in [|N|], q \in [|A|+1]$.
This temporal path consists of  
a temporal edge $(s,x_{i,q},\ell \cdot (2i-2))$
and a temporal edge $(x_{i,q},z,\ell \cdot (2i -1) -1)$,
which implies that $(v_i, v_i) \in R$ and 
this cannot be the case, since we assume 
that $v_i$ has been removed.

It follows that $V'$ is a solution
of \StTempSep{$\ell$} on instance 
$(G,s,z, \ell)$
that
contains only vertices $y_{i,j}$, for some $(v_i,v_j) \in A$.
We define a solution of \DMC{} on instance $(D,R)$ consisting of $k$ arcs as follows:
\[
A' = \{ (v_i,v_j): y_{i,j}  \in V'\}.
\]
Clearly $|V'| = |A'| = k$.

Aiming at a contradiction, assume that $D - A'$ contains a 
path $P$ from some $v_i$ to some $v_j$, with $(v_i, v_j) \in R$.
Define the following temporal path $P'$, corresponding to $P$, in $G[V \setminus V']$:

\begin{enumerate}

\item $P'$ starts with temporal
edge $(s,x_{i,q},\ell \cdot (2i-2))$,
for some $q \in [|A|+1]$; this temporal edge
exists since $v_i \in R_S$.

\item $P'$ ends with temporal
edge $(x_{j,r},z,\ell \cdot (2i-1) -1)$,
for some $r \in [|A|+1]$;
this temporal edge
exists since $v_j \in R_Z$, $v_j \in Reach(D,v_i)$,
indeed $P$ is path from $v_i$ to $v_j$ in $D$,
and $(v_i, v_j) \in R$.
Furthermore, note that
$\ell (2i-1) -1 - \ell(2i - 2) + 1 = \ell$,
hence $tt(P') = \ell$.

\item For each arc $(v_a, v_b) $ of $P$,
$P'$ contains temporal edge
$(x_{a,q},y_{a,b},\ell \cdot (2i-2) +2a-1)$
and 
$(y_{a,b},x_{b,r},\ell \cdot (2i-2) +2(b-1))$;
this temporal edge
exists since $(v_a, v_b) \in A$.
Recall that, since $D$ is a directed acyclic graph,
we assume that for 
each arc $(v_a, v_b) \in A$, we have $ a < b$.
This ensures that the two temporal edges
$(x_{a,q},y_{a,b},\ell \cdot (2i-2) +2a-1)$
and 
$(y_{a,b},x_{b,r},\ell \cdot (2i-2) +2(b-1))$
induce a strict temporal path
from $x_{a,q}$ to $x_{b,r}$
(hence they do not violate the strict time constraint), since $2(b-1) \geq 2(a+1-1)= 2a > 2a-1$.
Moreover, note that if $(v_b, v_c)$ is an arc of
$P$, then
the corresponding temporal edges of $P'$
have timestamps greater than 
$\ell \cdot (2i-2) +2(b-1)$ 
(the first of these temporal edges in $P'$
is 
$(x_{b,r}, y_{b,c},\ell \cdot (2i-2) +2b ))$.
This ensure that $P'$ is an $\ell$-temporal path in $G$.
%in $s$ then is connected to
%$x_{i,q}$ to $x_{j,r}$

\end{enumerate}

% where the path contains
% vertex $y_{a,b}$, for each arc $(v_a, v_b) $ of $P$.
% $P'$ and temporal edges
% $(s, x_{i,q}, (i-1) \cdot \ell)$
% and $(x_{j,q}, j, i \cdot \ell - 1)$.
Starting from $P$, we have defined an $\ell$-temporal path
in $G[V \setminus V']$, leading to a contradiction.

We conclude that $A'$ is a solution of \DMC{} on instance $(D, R)$ that contains at most $k$ arcs, thus concluding the proof.
\end{proof}

\subsection*{Proof of Theorem \ref{teo:hard1}}

\teohardOne*
\begin{proof}
The result follows from Lemma \ref{lem:DMCSt}, 
which implies we have designed an approximation
preserving reduction from \DMC{} to
\StTempSep{$\ell$}.
\DMC{}
is not approximable within factor
$2^{\Omega(\log^{1-\varepsilon}|N|)}$, for any constant $\varepsilon > 0$, unless $NP \subseteq ZPP$ \cite{DBLP:journals/jacm/ChuzhoyK09}.
Since $G$ contains $|V|$ vertices, with
$|V| = O(|N||A|)$ and $|A| = O(|N^2|)$,
it follows that $\log^{1-\varepsilon}|V| = \Omega(\log^{1-\varepsilon}|N|)$,
thus \StTempSep{$\ell$} is hard to approximate within factor $2^{\Omega(\log^{1-\varepsilon}|V|)}$, for any constant $\varepsilon> 0$, unless $NP \subseteq ZPP$.
\end{proof}

\subsection*{Proof of Corollary \ref{cor:hard1}}

\corhard*
%\label{}
%\end{corollary}
\begin{proof}
We can modify the construction of instance $(G,s,z, \ell)$ so that each temporal path from $s$ to $z$ in $G$ is 
strict. In this way, the result can be extended to
\TempSep{$\ell$}.

Consider a temporal edge 
$(x_{i,q}, y_{i,j},\ell \cdot (2h-2) + 2i-1)$
($(y_{i,j}, x_{j,r},\ell \cdot (2h-2) + 2(j-1))$, respectively),
% in $G$.
% The temporal edge $\{x_{i,q}, y_{i,j}, h \cdot (\ell-1) + 2i +1\}$
% ($\{y_{i,j}, x_{j,r}, h \cdot (\ell-1) + 2j\}$, respectively)
it is subdivided in two temporal edges 
$(x_{i,q}, w_{i,j,q},t_{q,1})$,
$(w_{i,j,q},y_{i,j},t_{q,2})$
($(y_{i,j}, w'_{i,j,r},t_{r,1})$,
$(w'_{i,j,r}, x_{j,r},t_{r,2})$, respectively)
such that $t_{q,1} < t_{q,2} < t_{r,1} < t_{r,2}$.
Timestamps $t_{q,1}$, $t_{q,2}$, $t_{r,1}$, $t_{r,2}$
are defined so that there is no temporal path
consisting of a temporal edge
(1) from $x_{i,q}$ to $y_{i,j}$ and 
(2) from $y_{i,j}$ to $x_{i,r}$, with $q,r \in [|A|+1]$
and $q \neq r$.
This ensures that any $\ell$-temporal path in $G$ traverses first vertices $x_{i,q}$
and then $x_{j,r}$ with $i <  j $.

As in the previous construction, 
given a solution $N'$ of \DMC{} on instance $(D,R)$,
we can compute in polynomial time 
an $(s,z,\ell)$-temporal separator of $G$
consisting
of $|N'|$ vertices $y_{i,j}$, $i,j \in |[|N|]$.
Similarly, we can prove that, given
an $(s,z,\ell)$-temporal separator of $G$,
we can assume that $V'$ contains only vertices
$y_{i,j}$, $i,j \in |[|N|]$. 
A solution of \DMC{} on instance $(D,R)$,
consists of the $|V'|$ arcs corresponding to
the vertices $y_{i,j} \in V'$.
This is motivated by the property that
there exists a path in $D$ connecting two
terminals if and only if 
there exists an $\ell$-temporal path in $G$. 

In this way we design an approximation preserving reduction from \DMC{}
to \StTempSep{$\ell$}, thus concluding the proof.
\end{proof}

\subsection*{Proof of Claim \ref{claim:easy}}
\claimeasy*
\begin{proof}
Consider the strict temporal path $P = [sv_1t_1, u_2v_2t_2, \dots 
u_h z t_k]$, with $tt(P) = t_k - t_1 +1$
and $t_k - t_1 + 1 \leq \ell$, hence  $t_k - t_1 < \ell$.
% Since $P$ is strict, $t_i < t_{i+1}$, for each $i \in [h-1]$, thus 
% $h \leq \ell$.
% The travelling time ,
Since $P$ is strict, it holds that $t_i < t_{i+1}$, with $i \in [h]$, and $P$ consists of  $k \leq \ell$ temporal edges 
and traverses $k-1 \leq \ell-1$ vertices different from $s$ and $z$.
\end{proof}

\subsection*{Proof of Theorem \ref{TeoStrictApprox}}

\TeoStrictApprox*
\begin{proof}
First note that the solution $V'$  returned by Algorithm \ref{alg:ApproxStrict} is an
$(s,z,\ell)$-strict temporal separator, since if there exists a strict $\ell$-temporal path $P$ between $s$ and $z$ in $G[V \setminus V']$, it must contain
a vertex not in $\{s,z\}$; in this case Algorithm~\ref{alg:ApproxStrict} finds such a path $P$ and removes all the vertices
in $V(P) \setminus \{ s, z\}$.

Now, consider the vertices of $V'$. By construction,
they belong to a set $\mathcal{P}' = \{ P_1, \dots , P_h\}$ of strict temporal paths between $s$ and $z$
that have travel time at most $\ell$, where $P_j$
have been considered in the $j$-th iteration
of Algorithm~\ref{alg:ApproxStrict}, $j \in [h]$.
Furthermore, any two $P_i, P_j \in \mathcal{P}$, with $1 \leq i < j \leq h$,
are vertex disjoint (except for $s$ and $z$),
since all the vertices in $V(P_i)$ had been removed
when $V_j$ is considered by Algorithm~\ref{alg:ApproxStrict}.
Any $(s,z,\ell)$-strict temporal separator of $G$
must then remove at least $h$ vertices, one
for each strict temporal path in $\mathcal{P}$.

By Claim \ref{claim:easy}
each of the strict temporal path in $\mathcal{P}$
contains at most $\ell-1$ vertices
different from $s$ and $z$, thus 
%\[
$|V'| \leq h (\ell-1).$
%\]
Since  the strict temporal paths in $\mathcal{P}'$ are vertex disjoint and an optimal solution
$V^*$ must contain at $h$ vertices,
it follows that 
\[
|V'| \leq (\ell-1) |V^*|,
\]
thus concluding the proof.
\end{proof}

\subsection*{Proof of Corollary \ref{cor:hardapprox}}
\hardapprox*
\begin{proof}
We give an approximation preserving reduction
from \UniformHypergraph{} to \StTempSep{$\ell$}
similar to the one given in~\cite{DBLP:conf/isaac/HarutyunyanKP23} from \textsc{Set Cover}.
We recall that \UniformHypergraph{}, given
a $k$ uniform hypergraph $H = (U,S)$, where $U$  is a set of vertices
and $S$ is a collection of hyperedges (subset of $k$ vertices),
the goal is to find a subset $U' \subseteq U$ of minimum cardinality that intersects every hyperedge in $S$.

Assume $S$ contains hyperedges $e_1, \dots, e_q$. 
%and $|S_i| \leq k$, $i \in [q]$.
Moreover, we assume that the vertices in $U$
are ordered (the specific order is not relevant).
% we construct an instance $(G, \ell)$ of \StTempSep{$\ell$} as follows. 
First, 
%given $S_1, \dots, S_q$, the sets in $\mathcal{S}$,
we define $\ell = k + 1$ and $\tau = k \cdot q$.
Now, we define $G= ( V, E, \tau)$.
$V$ contains $s$, $z$ and,
for each $u_i \in U$, $i \in |U|$, a corresponding vertex $v_i \in V$.
For each $e_i \in S$, $i \in [q]$, we define a strict $\ell$-temporal path between $s$ and $z$
that traverses the vertices corresponding to elements in $e_i$ according to their ordering (that is if 
$u_i < u_j$ then $v_i$ is traversed before $v_j$),
where the temporal edges of this
strict $\ell$-temporal path 
are defined in time interval
$I_i = [2 i \ell+1, (2i+1) \ell]$.
Note that, since each $|e_i| = k$, $i \in [q]$,
the corresponding strict $\ell$-temporal path 
in $G$ contains at most $k+1 = \ell$ temporal edges,
each assigned a distinct timestamp and that two intervals $I_i$,
$I_j$, with $1 \leq i < j \leq |U|$, are separated by 
at least $\ell$ timestamps.
%Then each of these paths has travelling time $\ell$.

Now, each strict $\ell$-temporal path defined in time interval $I_i$, $i \in [q]$, 
corresponds to a hypheredge $e_i$. Hence, a vertex $v_j$, $j \in [|U|]$, in  an $(s,z,\ell)$-strict temporal separator of $G$ corresponds to a vertex $u_i \in U$.
We have thus designed an approximation reduction from \UniformHypergraph{} to \StTempSep{$\ell$}
with $k = \ell-1$.
The result follows from the fact that
\UniformHypergraph{}
is not approximable within factor $k$ assuming the Unique Game Conjecture~\cite{khot2008vertex}, .
\end{proof}

\subsection*{Proof of Theorem~\ref{thm:cuthard}}

\thmcuthard*

\begin{proof}
Given an instance of \textsc{Multiway Cut} with graph $G$, terminals $v_1, \ldots, v_k$, we construct the temporal graph $H$ as described above.  
We claim that we can remove at most $w$ edges from $G$ to separate $v_1, \ldots, v_k$ if and only if $H$ admits an  $(s,z,\ell)$-temporal cut with at most $w$ temporal edges, where again $\ell = k - 1$.  
We comment on the case $\ell = 2$ at the end of the proof.

Suppose that $F \subseteq E(G)$ is a solution for $G$, so that $|F| \leq w$ and there is no path between any two distinct $v_i, v_j$, $i,j \in [k]$ and $i \neq j$, 
in $G - F$. 
Consider $F'$ consisting of the corresponding edges in $H$, i.e., $F' = \{(u,v,\ell) : uv \in F\}$.  We claim that $F'$ is an $(s,z,\ell)$-temporal cut.
Suppose for contradiction that $H - F'$ contains an $\ell$-temporal path $P'$ from $s$ to $z$.
By construction, the first temporal edge of $P'$ must be $(s, v_i, i)$ for some $i \in [k-1]$.  By the time constraints, the temporal edge $(v_i, z, \ell + i)$ cannot be used in $P'$, and the same holds for temporal edges $(v_{i'}, z, \ell + i')$ with $i'$ between $i$ and $k - 1$.  Therefore, the last temporal edge of $P'$ is either $(v_j, z, \ell + j)$ for some $j < i$, or it is $(v_k, z, \ell)$.  Either way, this means that $P'$ starts with $s$ and $v_i$, then uses only temporal edges copied from $G - F$ to reach $v_j$ or $v_k$, then goes to $z$.  This is a contradiction, since $F$ cuts all paths between $v_i$ and $v_j, v_k$ and the edges of $H$ corresponding to the temporal edges of $P'$ induce a path between two terminal in $G$.  
Thus $F'$ has at most $w$ edges and is a solution 
of \TempCut{$\ell$} on instance $H$.

Conversely, suppose that there is $F' \subseteq E(H)$ such that $|F'| \leq w$ and such that $H - F'$ contains no $\ell$-temporal path between $s$ and $z$. 
In $G$, remove the set of edges $F = \{uv : (u, v, \ell) \in F'\}$, which is well-defined since the only deletable edges of $H$ are those copied from $G$. 
Note that if there is a path between $v_i$ and $v_j$ in $G - F$, with distinct $i, j \in [k]$, then in $H - F'$ there is also a temporal path between $v_i$ and $v_j$ using only temporal edges copied from $G$.  
We argue that this is not possible.  Indeed:
\begin{itemize}
    \item 
    if there is a path between $v_i$ and $v_j$ in $H - F'$, where $i < j \leq k - 1$, then in $H - F'$ there is the temporal path that starts from $s$, then goes to $v_j$ using the edge $(s, v_j, j)$, then goes to $v_i$, then $z$ using $(v_i, z, \ell + i)$.  In this temporal path, the first
    temporal edge has time $j$, the temporal path from $v_j$ to $v_i$ has temporal edges at time $\ell > j$, and the last temporal edge has time $\ell + i > \ell$.  Hence this is a temporal path.  Since $j \geq i + 1$, the time spanned by the path is $i + \ell - j \leq i + \ell - (i + 1) = \ell - 1$.  We have thus formed an $\ell$-temporal path in $H - F'$, a contradiction.

    \item 
    if there is a temporal path between $v_i$ and $v_k$ in $H - F'$, where $i < k$, then there is the temporal path that starts in $s$, goes to $v_i$
    and then to $v_k$, then $z$
%    $\rightarrow v_i \rightsquigarrow v_k \rightarrow z$
    in $H - F'$.  This is a temporal path because it starts with an edge at time $i \leq \ell$ and all other edges have time $\ell$, and since $\ell - i \leq \ell - 1$, we have again formed an $\ell$-temporal path in $H - F'$.

\end{itemize}
Since no $\ell$-temporal path between $v_i$ and $v_j$
exists in $H - F'$ for any distinct $i, j \in [k]$, no  path between $v_i$ and $v_j$ exists in $G - F$ either, and thus $F$ is a solution for $G$ with $|F| \leq w$.  This completes the proof, since we have described
an approximation preserving reduction 
and \textsc{Multiway Cut} is 
APX-hard~\cite{dahlhaus1992complexity}.
Note that when $k = 3$ terminals are given the \textsc{Multiway Cut} problem is still APX-hard~\cite{dahlhaus1992complexity}. In this case the reduction puts $\ell = 2$, which means that \TempCut{$\ell$} is APX-hard even when $\ell = 2$.  
\end{proof}

\subsection*{Proof of Lemma \ref{lem:EdgeShare}}

\lemEdgeShare*
\begin{proof}
Aiming at a contradiction, assume that a temporal path $p$ of $G([t+i, t+i+\ell-1]) - E'_t$
and a temporal path $p'$ of $G([t-j, t-j+\ell-1]) - E'_t$ are not temporal edge disjoint.
Thus there exists a temporal edge $e=(u,v,t_e)$ that belongs to both $p$ and $p'$.
Let $p_1$ ($p'_1$, respectively) be the subpath of $p$ (of $p'$, respectively) between $s$ and $u$;
let $p_2$ ($p'_2$, respectively) be the subpath of $p$ (of $p'$, respectively) between $v$ and $z$.
Then define $p^* = p_1 + e + p'_2$ (the concatenation of $p_1$, $e$ and $p'_2$); $p^*$ is a temporal walk between $s$ and $z$, since the following properties hold:

\begin{itemize}
    \item All the temporal edges of $p_1$ must have associated timestamps
    not larger than $t_e$ and at least $t+i$.

    \item The temporal edges of $p'_2$ must have associated timestamps
    not smaller than $t_e$ and at most $t-j+\ell-1$.

\end{itemize}

It must hold that
$t-j+\ell-1 \geq t+i$, since $t-j+\ell-1 \geq t_e$
and $t+i \leq t_e$. Moreover, it holds that
\[
(t-j+\ell-1) - (t+i) \leq \ell - 1
\]
since $(t+i) \leq (t-j+\ell-1) \leq (t+i+\ell-1)  $ 
%\ml{[Is something missing here?  Above it says $t - j + \ell - 1 \geq t + i$, and here it says $t - j + \ell - 1 \leq t + i$.  I'm also not sure about the $<$ inequality.  I do think $p^*$ is indeed a temporal walk though.]}
%\rd{Yes, there was something wrong. I fixed it (also the inequality).}
and $(t+i+\ell-1)  - (t+i)   = \ell-1$.

Thus $p^* = p_1 + e + p'_2$ is a temporal walk between $s$ and $z$ in $G([t+i, t+\ell+i-1]) - E'_t$, $G([t-j, t'-j+\ell-1]) - E'_t$, thus also in 
$(G[t,t+\ell-1])$. 
Each vertex of $p^*$ has at most two occurrences in $p^*$ (at most one in $p'_1 + e$, at most one in $e + p_2$, since $p'_1 + e$ and $e + p_2$ are 
temporal paths).

Now, given the temporal walk $p^*$, we can compute a temporal path
between $s$ and $z$ in $G([t,t+\ell-1])$. Assume that $x_1, \dots x_n$
are the repeated vertices of $p^*$ and let $x_1$ be the first repeated vertex of $p^*$. 
Let $(x_1, y_1, t_1)$ and $(x_1, y_2, t_2)$ be the temporal edges outgoing from $x_1$ in $p^*$. Then we can compute a temporal walk in $G([t,t+\ell-1])$ by removing the subwalk from
$x_1$ to $x_1$ and replacing it with temporal edge $(x_1, y_2, t_2)$. Since $t_1 \leq t_2$, we have obtained
a temporal walk that contains a single occurrence of
$x_1$. By iterating this procedure, we obtain an 
$\ell$-temporal path between $s$ and
$z$ in $G([t,t+\ell-1])$ that is not cut by $E'_t$,
thus $E'_t$ is not a temporal cut of $G([t,t+\ell-1])$. 
This leads to a contradiction, thus proving the lemma.
\end{proof}

\subsection*{Proof of Lemma \ref{lem:EdgeShare2}}

\lemEdgeShareTwo*

\begin{proof}
% For $i = 1$, by construction $E'(t_{1,1})$ is a minimum temporal cut of $G([t_{1,1}-\ell/2,t_{1,1}+\ell/2-1])$.
% Since $Opt(t_{1,1})$ must be a cut of 
% $G[t_{1,1}-\ell/2,t_{1,1}+\ell/2-1]$
% it holds that
% \[
% |E'(t_{1,1}) | \leq |Opt(t_{1,1})|.
% \]
% Now, we consider the case $i \geq 2$ and we show that  
% \[
% \sum_{j \in [w_i]} |E'(t_{i,j})| \leq 
% \sum_{j \in [w_i]} |Opt(t_{i,j})| + \sum_{b \in [h]:b <i, q \in [w_b]} |Opt(t_{b,q})|.
% \]
Consider the graph $G([t_{i,j}+\ell/2, t_{i,j}+\ell/2-1]) - \bigcup_{p=1, q \in [w_p]}^{i-1} E'_{p,q}$ and 
let $E'(t_{i,j})$ be the minimum cut of 
$G([t_{i,j}+\ell/2, t_{i,j}+\ell/2-1]) - \bigcup_{p=1, q \in [w_p]}^{i-1} E'_{p,q}$
% $G([t_{i,j}-\ell/2, t_{i,j}+\ell/2-1]) - \bigcup_{q=1}^{i-1} E'_q$ 
computed by Algorithm~\ref{alg:ApproxEdge}.

For each $j \in w_i$, since $E'(t_{i,j})$
is a minimum cut of 
$G([t_{i,j}+\ell/2, t_{i,j}+\ell/2-1]) - \bigcup_{p=1, q \in [w_p]}^{i-1} E'_{p,q}$,
% $G([t_{i,j}-\ell/2, t_{i,j}+\ell/2-1]) - \bigcup_{q=1}^{i-1} E'_q$,
it follows by Theorem~\ref{teo:MengersTemporal}
that there exists a set $P(t_{i,j})$ 
of temporal edge disjoint temporal paths in 
\[
G([t_{i,j}+\ell/2, t_{i,j}+\ell/2-1]) - \bigcup_{p=1, q \in [w_p]}^{i-1} E'_{p,q}.\]
% $G[t_{i,j}-\ell/2, t_{i,j}+\ell/2-1] - \bigcup_{i=1}^{t} E'_i$.
Each temporal path in $P(t_{i,j})$ 
must be cut by $Opt$, either by some
temporal edge in $Opt(t_{p,q})$, with $p \in [i-1]$ or by $Opt(t_{i,j})$.
Thus
\[
|E'(t_{i,j})| \leq 
|Opt(t_{i,j})| + \sum_{p = i, q \in [w_i]}^{i-1} |Opt(t_{p,q})|.
\]
%\ml{[subscript $h < i$ in last expression?]}
%\rd{Yes, it's $h < i$}
By Lemma~\ref{lem:EdgeShare},
it holds that each temporal path $p \in P(t_{i,j})$ and each temporal path $p' \in P(t_{i,q})$, with $j,q \in [w_i]$ and $j \neq q$, are temporal edge disjoint. 
Recall that sets $E'(t_{i,j})$ and $E'(t_{i,q})$
are disjoint, and sets
$Opt(t_{i,j})$ and $Opt(t_{i,q})$ are disjoint.
Moreover, if a temporal edge of a set $Opt(t_{b,a})$,
with $b < i $ and $a \in [w_b]$, cuts a temporal path
in $P(t_{i,j})$, it does not cut a path $P(t_{i,q})$,
as these two sets of temporal paths are temporal edge disjoint.
It follows that
\[
\sum_{j \in [w_i]} |E'(t_{i,j})| \leq 
\sum_{j \in [w_i]} |Opt(t_{i,j})| + \sum_{b\in [h], b < i, q \in [w_b] } |Opt(t_{b,q})|
\]
thus concluding the proof.

\end{proof}

\subsection*{Proof of Theorem \ref{logTapprox}}

\logTapprox*

\begin{proof}
By Lemma \ref{lem:EdgeShare2} we have that, for each
$i \in [h]$, it holds that
\[
\sum_{j \in [w_i]} |E'(t_{i,j})| \leq 
\sum_{j \in [w_i]} |Opt(t_{i,j})| + \sum_{b \in [h]:b <i, q \in [w_b]} |Opt(t_{b,q})|.
\]
Thus, since $h = \log_2 \tau$, 
%for the approximated solution $E'$ 
it holds that
\[
|E'| = 
\sum_{i=1}^{\log_2 \tau} 
\sum_{j \in [w_i]} |E'(t_{i,j})| \leq 
\sum_{i=1}^{\log_2 \tau}
\left(
\sum_{j \in [w_i]} |Opt(t_{i,j})| + \sum_{b \in [h]:b <i, q \in [w_b]} |Opt(t_{b,q})|
\right).
\]
% \sum_{j} |E(t_{i,j})| \leq 
% \sum_{j} |Opt(t_{i,j})| + \sum_{h <i, q} |Opt(t_{h,q})|\right)
% \]
Since each $Opt(t_{i,j})$ contributes at most
$O(\log_2 \tau)$ times to the right term,
we have that
%\[
% |E'| = 
% \sum_{i=1}^{\log_2 \tau} \sum_{j} |E(t_{i,j})| \leq 
% \sum_{i=1}^{\log_2 \tau}
% \left(\sum_{j} |Opt(t_{i,j})| + \sum_{h <i, q} |Opt(t_{h,q})|\right)
\[
|E'| = 
\sum_{i=1}^{\log_2 \tau} 
\sum_{j \in [w_i]} |E'(t_{i,j})| \leq 
\sum_{i=1}^{\log_2 \tau}
\left(
\sum_{j \in [w_i]} |Opt(t_{i,j})| + \sum_{b \in [h]:b <i, q \in [w_b]} |Opt(t_{b,q})|
\right), 
\]
%\[
which is bounded by
$\log_2 \tau \sum_{i \in [h],j \in [w_i]} |Opt(t_{i,j})|
= \log_2 \tau |Opt|$,
%
%\]
% \sum_{i=1}^{T - \ell+1} |Opt_i| + |Opt_{i-1}|
% = \leq 2 \sum_{i=1}^{T - \ell+1} 2 |Opt_i| 
% \]
% Now, since $Opt = \sum_{i=1}^{T - \ell+1} |Opt_i|$, we have that
% $|E'| \leq 2 |Opt|$, 
thus concluding the proof.
\end{proof}

\subsection*{Proof of Lemma \ref{lem-approx1}}
\lemapprox*
\begin{proof}
(1)Consider interval $I=[t,t+\ell-1]$, by construction
there exists an interval $J=[a+1,a+2\ell]$ in $P_1$ that
contains $t$. Now, if $t \leq a + \ell$, then
$I$ is contained in $J$. If this is not the case,
$t \in [a+\ell+1, a +2\ell]$ and
$I$ is contained in $[a+\ell+1, a+ 3\ell]$,
that belongs to $P_2$.
% \end{proof}

% \subsection*{Proof of Lemma \ref{lem-approx-disjoint}}
% \lemapproxdisjoint°
% \begin{proof}
(2) The result follows from the fact that
$I \cap I'= \emptyset$, hence each temporal edge 
defined in $G[(I])$ is not defined in $G[(I'])$ (and vice versa).
\end{proof}

\subsection*{Proof of Corollary \ref{corEdgeApprox}}

\corEdgeApprox*

\begin{proof}
Consider an optimal solution $OPT$ of \TempCut{$\ell$}
and let $OPT(G(I)))$ be the set of temporal edges
in $OPT$ defined in some timestamp of $I$.
 The approximation algorithm with input $G(I)$ returns an approximated solution $E(I)$.
By Theorem \ref{logTapprox}, since each interval consists
of $2 \ell$ timestamps, it holds that
\[
|E(I)| \leq \log_2(2 \ell) |OPT(G(I))|.
\]

Consider two different intervals $I$, $I'$ of a set $P_i$, $i \in \{1,2\}$. Since $I$ and $I'$ are disjoint, 
the temporal edges in sets $OPT(G(I))$ and $OPT(G(I'))$ are disjoint, thus
\begin{equation}  
\label{eq:approx-l}
\sum_{I \in P_i} |E(I)| \leq \log_2(2 \ell) \sum_{I \in P_i} |OPT(G(I))| \leq \log_2( 2 \ell) |OPT|
\end{equation}

Now, since Equation \ref{eq:approx-l} holds
for both $P_1$ and $P_2$, it follows that
\begin{equation}  
\label{eq:approx-l-2}
\sum_{i=1}^2\sum_{I \in P_i} |E(I)| \leq  \log_2(2 \ell) 
\sum_{I \in P_1}
|OPT(G(I))| +  \log_2(2 \ell) \sum_{I \in P_2}|OPT(G(I))| \leq 2 \log_2(2 \ell) |OPT|.
\end{equation}
Now, the set $E'$ returned by the approximation algorithm is a feasible solution of the problem
(that is $E'$ cuts every temporal path of time travel at most $\ell$). Indeed by Lemma \ref{lem-approx1} each interval $I$ of length $\ell$ is contained in an interval of $P1$ or $P_2$,
and Algorithm \ref{alg:ApproxEdge} computes a 
cut in every $G(I)$.

\end{proof}

\end{document}